\definecolor{navyblue}{rgb}{0.0, 0.0, 0.5}
\theoremstyle{plain}
\newtheorem{thm}{Theorem}[section]
\newtheorem{lem}[thm]{Lemma}
\newtheorem{prop}[thm]{Proposition}
\newtheorem{cor}[thm]{Corollary}
\theoremstyle{definition}
\theoremstyle{remark}
\newcommand*{\bid}{\mathbf{1}}
\newcommand*{\cC}{\mathcal{C}}
\newcommand*{\cD}{\mathcal{D}}
\newcommand*{\cE}{\mathcal{E}}
\newcommand*{\cG}{\mathcal{G}}
\newcommand*{\cH}{\mathcal{H}}
\newcommand*{\cI}{\mathcal{I}}
\newcommand*{\dI}{\mathbb{I}}
\newcommand*{\cN}{\mathcal{N}}
\newcommand*{\cL}{\mathcal{L}}
\newcommand*{\cP}{\mathcal{P}}
\newcommand*{\cS}{\mathcal{S}}
\newcommand*{\fS}{\mathfrak{S}}
\newcommand*{\cT}{\mathcal{T}}
\newcommand*{\cU}{\mathcal{U}}
\newcommand*{\cW}{\mathcal{W}}
\newcommand*{\cX}{\mathcal{X}}
\DeclareMathOperator{\var}{Var}
\newcommand*{\eps}{\varepsilon}
\newcommand*\diff{\mathop{}\!\mathrm{d}}
\newcommand*{\id}{\mathrm{id}}
\newcommand*{\tr}[1]{\mathrm{Tr}\left[#1\right]}
\newcommand*{\spr}[2]{\langle #1 | #2 \rangle}
\newcommand*{\proj}[1]{|#1\rangle\!\langle #1|}
\newcommand*{\mge}{\succcurlyeq}
\newcommand*{\mle}{\preccurlyeq}
\newcommand*{\exs}[2]{\mathbb{E}_{#1}\left[#2 \right]}
\title{Error exponent of activated non-signaling assisted classical-quantum channel coding}
\author[1]{Aadil Oufkir}
\author[2]{Marco Tomamichel}
\author[1]{Mario Berta}
\affil[1]{\small{Institute for Quantum Information, RWTH Aachen University, Aachen, Germany.}}
\affil[2]{\small{Centre for Quantum Technologies \& Department of Electrical and Computer Engineering, National University of Singapore, Singapore.}}
\begin{document}

\maketitle

\begin{abstract}
We provide a tight asymptotic characterization of the error exponent for classical-quantum channel coding assisted by activated non-signaling correlations. Namely, we find that the optimal exponent\,---\,also called reliability function\,---\,is equal to the well-known sphere packing bound, which can be written as a single-letter formula optimized over Petz-R\'enyi divergences. Remarkably, there is no critical rate and as such our characterization remains tight for arbitrarily low rates below the capacity. On the achievability side, we further extend our results to fully quantum channels. Our proofs rely on semi-definite program duality and a dual representation of the Petz-R\'enyi divergences via Young inequalities.
\end{abstract}



\section{Introduction}

Shannon \cite{shannon1948mathematical} showed that the asymptotic transmission rate of classical information through a noisy channel can be characterized by a single number $C$, called the capacity. In fact, the capacity demarcates a transition point in the following sense: On the one hand, it is possible to send a number of messages $M$ with a rate $r=\frac{1}{n}\log M < C$ through $n$ copies of the noisy channel with an error probability decaying exponentially in $n$~\cite{shannon59,shannon67,fano61}. On the other hand, it is not possible to send information with a rate $r=\frac{1}{n}\log M > C$ above the capacity without incurring an error probability going to one exponentially in $n$~\cite{han89}. The notion of capacity was later generalized by Holevo \cite{Holevo1998Jan} to coding over channels with classical input and quantum output, so-called classical-quantum channels. Interestingly, the capacity of classical and classical-quantum channels remains the same when assistance via shared randomness, or quantum entanglement is available (see, e.g., \cite{Leung15}). These additional resources can be seen as particular cases of non-signaling assistance, which is assistance by devices shared between the sender and receiver that both parties can interact with, but that cannot by themselves be used for communication. In fact, the one-shot meta-converse of Polyanskiy, Poor, and Verdú \cite{polyanskiy2010channel} and Hayashi \cite{Hayashi09} exactly corresponds to the setting of non-signaling assisted strategies \cite{matthews2012linear}. In contrast, for classical-quantum channels, the meta-converse cannot be achieved with the help of plain non-signaling correlations, but rather requires the additional help of a single-bit of perfect communication~\cite{Wang2019Feb}.

Indeed, in the setting of classical-quantum channels, the meta-converse and the setting of non-signaling assistance provide two different semi-definite programming (SDP) relaxations for determining the optimal error probability for coding over classical-quantum channels. Understanding the relationship between these programs, and more generally, the impact of additional resources on higher-order refinements to Holevo's capacity theorem, can offer fundamental information-theoretic insights. First, converse expansions of the meta-converse immediately imply converses to plain, shared randomness, and entanglement-assisted codes. Second, following the rounding results of \cite{barman2017algorithmic,Fawzi19} for the success probability for coding, achievability expansions of the meta-converse can be translated into achievability bounds for plain codes (see also \cite{Vazquez-Vilar16}). This then gives deeper insights about the type of quantum R\'enyi divergences to find in formulas for higher order capacity refinements\,---\,which in turn allows to analyze the tightness of existing bounds on the error probability (including the necessity of critical rates and alike).

In the following, we focus on a particular higher-order refinement: the error exponent, also known as the reliability function. It governs the decay of the error probability of sending information with a fixed rate below the capacity over many identical copies of the channel. We study the classical-quantum error exponents for the case of non-signaling assisted codes as well as for the meta-converse. These two are exactly the same for classical channels and as such our work can be seen as a generalization of Polyanskiy's classical findings~\cite{Polyanskiy2012Dec}. 

We will next give an overview of our results. The precise definitions of the involved quantities are given in the subsequent section.


\paragraph{Overview of results:}

Given a classical-quantum channel $\cW=\{W_x\}_{x\in \cX}$ we are interested in the minimal average decoding error probability, $\eps(M, \cW)$, when sending one uniformly chosen message out of $M$ over the channel $W$.
Given a rate $r$ below the capacity $C(\cW)$, this optimal error probability decays like $ \eps^{}(e^{nr}, \cW^{\otimes n}) \approx  \exp(-n E(r))$, where 
\begin{align}
E(r):= \lim_{n\rightarrow \infty } -\frac{1}{n}\log  \eps^{}(e^{nr}, \cW^{\otimes n})
\end{align}
is called the error exponent or reliability function. As our main result, we determine the error exponent of the activated non-signaling error probability, $ \eps^{\rm{NS}}(e^{nr}, \cW^{\otimes n}\otimes \cI_2)$, for coding over the classical-quantum channel $\cW=\{W_x\}_{x\in \cX}$ with a rate $r\in(C_0(\cW),C(\cW))$ as
\begin{align}\label{EE-NS}
E^{\rm{ANS}}(r):= \lim_{n\rightarrow \infty } -\frac{1}{n}\log  \eps^{\rm{NS}}(e^{nr}, \cW^{\otimes n}\otimes \cI_2) = \sup_{\alpha \in(0,1]}  \frac{1-\alpha}{\alpha}\left( C_{\alpha}(\cW)-r\right),
\end{align}
where $\cI_2$ is a perfect one-bit channel permitting a so-called activation \cite{Wang2019Feb}, and the Petz R\'enyi capacities as
\begin{align}
C_{\alpha}(\cW):=\sup_{p\in \cP(\cX)} \inf_{\sigma\in \cS(\cH)} \exs{x\sim p}{D_{\alpha}(W_x \| \sigma )}.
\end{align}
with the Petz R\'enyi divergences~\cite{Petz86}, $D_\alpha(\rho\| \sigma):= \frac{1}{\alpha-1}\log \tr{\rho^{\alpha} \sigma^{1-\alpha}}$.

Our exact characterization \eqref{EE-NS} should be compared to the best known lower \cite{Dalai2013Sep} and upper bounds \cite{Renes2024Jul,Li2024Jul} on the plain error exponent
\begin{align}\label{EE-plain}
\sup_{\alpha \in [\tfrac{1}{2}, 1]} \frac{1-\alpha }{\alpha}\left( C_{\alpha}(\cW)-r\right)\le E(r) \le \sup_{\alpha \in(0,1]} \frac{1-\alpha }{\alpha}\left( C_{\alpha}(\cW)-r\right). 
\end{align}  
that only coincide when $r$ is above a critical rate\footnote{Note the change of parameter $\alpha = \frac{1}{1+s}$ so that $s=1 \Leftrightarrow \alpha =\frac{1}{2}$.}
\begin{align}\label{eq:critical-old}
r_c := \frac{\diff{\ }}{\diff{s}}\, s \, C_{\frac{1}{1+s}}(\cW)\, \Big|_{s=1}.
\end{align}
When $r<r_c$, the optimal plain error exponent is provably different from \eqref{EE-NS} and in general not even known for classical channels \cite{Csiszar2014Jul}. In contrast, when allowing non-signaling correlations with activation, we characterize the error exponent exactly for all rates and this bound corresponds to the upper bound on the plain error exponent in \eqref{EE-plain}. This then immediately gives an operational interpretation of the Petz R\'enyi divergence for $\alpha\in(0,1]$. For classical channels the same result was previously discovered by Polyanskiy for the meta-converse \cite[Theorem 24]{Polyanskiy2012Dec}, which exactly corresponds to non-signaling codes \cite{matthews2012linear}. Consequently, we then ask what happens for classical-quantum channels when using non-signalling assisted codes without activation. We show that whenever we have non-zero zero-error non-signalling assisted capacity or when the channel $\cW=\{W_x\}_{x\in \cX}$ exclusively features pure states $W_x$, then we still have the unconstrained characterization
\begin{align}
E^{\rm{NS}}(r)=E^{\rm{ANS}}(r).
\end{align}
Though for general classical-quantum channels, we can only show this for rates $r > \min\{r_c, r'_c\}$ with the (different) critical rate
\begin{align}
r'_c:=\sup_{\alpha\in (0,1)}(1-\alpha)^2 C_\alpha(\cW).
\end{align}
Lastly, we extend the achievability part of our main result \eqref{EE-NS} to fully quantum channels $\cW_{A\rightarrow B}$ and find
\begin{align}
E^{\rm{ANS}}(r)\ge\sup_{\alpha \in(0,1]}\sup_{\rho_R}\inf_{\sigma_B}\frac{1-\alpha}{\alpha} \Big(D_{\alpha}\big(\sqrt{\rho_R} (J_{\cW})_{RB} \sqrt{\rho_R}\big\| \rho_R\otimes  \sigma_B\big)-r\Big),
\end{align}
where $(J_{\cW})_{RB}=\sum_{i,j=1}^{|A|}\ket{i}\bra{j}_R\otimes \cW_{A\rightarrow B}\big(\ket{i}\bra{j}_A\big)$ denotes the Choi matrix of $\cW_{A\rightarrow B}$, $R\equiv A$, and the optimizations are over quantum states on systems $R$ and $B$, respectively. Whereas to the best of our knowledge no quantum sphere packing type lower bound on the error exponent are known, the best strongest entanglement-assisted upper bound takes the form \cite{Beigi2023Oct}
\begin{align}
E(r)\ge\sup_{\alpha \in [\tfrac{1}{2}, 1]} \sup_{\rho_R}\inf_{\sigma_B}\frac{1-\alpha}{\alpha} \Big(\widetilde{D}_{\alpha}\big(\sqrt{\rho_R} (J_{\cW})_{RB} \sqrt{\rho_R}\big\| \rho_R\otimes  \sigma_B\big)-r\Big).
\end{align}
It would be interesting to improve this to Petz R\'enyi divergences by extending the recent classical-quantum results from \cite{Renes2024Jul,Li2024Jul} to the fully quantum setting.

To conclude, and going back to the one-shot setting, the works \cite{barman2017algorithmic,Fawzi19} proposed a way to construct plain strategies from non-signaling ones via rounding strategies rooted in approximation algorithms. These rounding results\,---\,when applied for many identical copies of the channel\,---\,can be shown to be optimal for the first order capacity \cite{barman2017algorithmic,Fawzi19}, as well as the strong converse error exponent \cite{Oufkir24}. Motivated by this line of work (see also the related works \cite{Fawzi24,Ferme24} on network coding and \cite{Berta24,Cao24,Yao24} on channel simulation), establishing the exact error exponent for non-signaling strategies can be seen as a first step for understanding and potentially proving tight plain, shared randomness, or shared entanglement error exponents based on techniques from approximation algorithms. For example, does the entanglement-assisted error exponent for classical or classical-quantum channels feature a critical rate or not?\\

The rest of this manuscript is structured as follows. Before describing formally the non-signaling error probability in Section \ref{sec:NS coding}, we introduce some notation (Section \ref{sec:not-prem}) and recall the usual plain coding over classical-quantum channels (Section \ref{sec:plain}). In Section \ref{sec:NS-coding-CQ}, we then give our main achievability and converse results on the quantification of error exponents, which is followed by Section \ref{sec:improved rounding} on the discussion of activation and the need for critical rate(s).  
Finally, we present the extension to fully quantum channels in Section \ref{sec:NS-coding-QQ}, and conclude with some further open problems (Section \ref{sec:conclusion}).


\section{Preliminaries}

\subsection{Notation}\label{sec:not-prem}

The set of integers between one and an $n$ is denoted $[n]$. We consider an underlying Hilbert space $\cH$ of dimension $d$ and its tensor products $\cH^{\otimes n}$. The set of density operators on $\cH$ is denoted as $\cS(\cH)$. For two Hermitian operators $A$ and $B$, the notation $A\ll B$ means that the support of $A$ is included in the support of $B$. $A\mge B$ stands for $A-B$ is positive semi-definite.  
For a finite alphabet $\cX$, the set of probability distributions on $\cX$ is denoted as $\cP(\cX)$. 

The Petz R\'enyi divergences are defined for $\rho\ll \sigma$ and $\alpha\in [0,1) $ as \cite{Petz86}
\begin{align}
D_\alpha(\rho\| \sigma):= \frac{1}{\alpha-1}\log \tr{\rho^{\alpha} \sigma^{1-\alpha}}
\end{align}
and extended by continuity to the Umegaki relative entropy $ D(\rho\| \sigma) = \tr{ \rho (\log\rho -\log\sigma)}$ for $\alpha=1$. The sandwiched R\'enyi divergences between $\rho$ and $\sigma$ of order $\alpha$ are defined as \cite{Lennert13,Wilde14}
\begin{align}
\widetilde{D}_\alpha\left(\rho \middle\| \sigma \right) := \frac{1}{\alpha-1}\log \tr{\left(\sigma^{\frac{1-\alpha}{2\alpha}}\rho \sigma^{\frac{1-\alpha}{2\alpha}}\right)^{\alpha}}
\end{align}
if $\rho\sigma \neq 0$ and $\alpha \in (0,1)$ or $\rho \ll \sigma$ and $\alpha>1$. Otherwise, we set $\widetilde{D}_\alpha\left(\rho \middle\| \sigma \right) = +\infty$. The quantum hypothesis testing relative entropy is defined as \cite{Wang12}
\begin{align}
D_H^{\eps}(\rho\| \sigma) := -\log \min \Big\{\tr{\sigma O} \Big| \,0\mle O\mle \dI, \tr{\rho O}\ge 1-\eps \Big\}. 
\end{align}

For a Hermitian matrix $X$ with spectral decomposition $X = \sum_i \lambda_i \Pi_i $, we denote by $X_+ = \sum_{i} (\lambda_i)_+ \Pi_i $ and  $X_- = \sum_{i} (\lambda_i)_- \Pi_i $ its positive and negative parts, respectively. Let $|X| = X_+ +X_-$. The non-commutative minimum of two Hermitian matrices $A$ and $B$
is denoted $A\wedge B := \frac{1}{2}(A+B-|A-B|)$ and satisfies
\begin{align}
A\wedge B = A-(A-B)_+ = B-(B-A)_+. 
\end{align}
For $A,B \mge 0$, the non-commutative minimum satisfies \cite{helstrom1967detection,holevo1972analogue} 
\begin{align}\label{tr min A B min O}
\tr{A\wedge B} = \inf_{0\mle O\mle \dI} \tr{AO} +\tr{B(\dI-O)}. 
\end{align} 
as well as \cite{Audenaert2008Apr}
\begin{align}\label{Chernoff}
\tr{   A\wedge B } \le \inf_{0\le \alpha \le 1}\tr{A^\alpha B^{1-\alpha}}. 
\end{align}

A classical-quantum channel is a map from a finite alphabet $\cX$ to the set of density operators $\cS(\cH)$,  $\cW : x \mapsto W_x$ or equivalently a set of quantum state $\cW = \{W_x\}_{x\in \cX}\in \cS(\cH)^{\cX} $. For a probability distribution $p$ on $\cX$ we define the classical-quantum state
\begin{equation}
p\circ \cW := \sum_{x\in \cX} p(x) \proj{x} \otimes W_x,
\end{equation}
where $\{\ket{x}\}_{x\in \cX}$ is an orthonormal basis. We also abuse the notation and consider $p = \sum_{x\in \cX} p(x) \proj{x}$. The capacity of the classical-quantum channel $\cW$ is defined as
\begin{equation}
C(\cW) :=\sup_{p\in \cP(\cX)} D(p\circ \cW \| p\otimes p\cW ),
\end{equation}
where $p\cW = \sum_{x\in \cX} p(x) W_x$. More generally, the R\'enyi capacity of order $\alpha$ is defined as 
\begin{equation}
C_{\alpha}(\cW):=\sup_{p\in \cP(\cX)} \inf_{\sigma\in \cS(\cH)} D_{\alpha}(p\circ \cW \| p\otimes \sigma ) = \sup_{p\in \cP(\cX)} \inf_{\sigma\in \cS(\cH)} \exs{x\sim p}{D_{\alpha}(W_x \|  \sigma ) },
\end{equation}
where the last equality is proved in \cite{Mosonyi2017Oct}. Finally, the function  $\alpha \mapsto C_{\alpha}(\cW)$ is non-decreasing \cite[Proposition $2$ $(h)$]{Cheng2019Jan}. We refer to Appendix \ref{sec:Lemmas} for further technical lemmas from the literature needed for our proofs.


\subsection{Plain coding error probability}\label{sec:plain}

Given a classical-quantum channel $\cW$ that maps an input $x$ from a finite alphabet $\cX$ to a quantum state $W_x$, the coding problem is to transmit classical messages through an encoding, decoding scheme. More precisely, let $M$ be the number of the messages to be transmitted. An encoder is a mapping from the set of integers between $1$ and $M$ to elements in the alphabet $\cX$ or equivalently  a subset  $\cE=\{x(1), \dots, x(M)\}\in \cX^M$ constituted of elements from the alphabet $\cX$. A decoder is given by a POVM $\cD=\{\Lambda_{x(1)}, \dots, \Lambda_{x(M)}\}$ (a set of positive semi-definite matrices that sum to identity). A code is given by the encoding codebook and the decoding measurement $\cC= (\cE, \cD)$.  The error probability of transmitting the  $m^{\text{th}}$ message $x(m)$ is 
\begin{align}
\eps(x(m), \cW ) := 1-\tr{W_{x(m)}\Lambda_{x(m)}}
\end{align}
and the plain average error probability is then given by the expression 
\begin{align}
\eps(\cC, M, \cW) := \frac{1}{M}\sum_{m=1}^M\eps(x(m))= 1-\frac{1}{M}\sum_{m=1}^M\tr{W_{x(m)}\Lambda_{x(m)}}. 
\end{align}
The optimal plain error probability is given by optimizing over all possible codes $\cC = (\cE, \cD)$
\begin{align}\label{plain-error-one-shot}
\eps( M, \cW) :=\inf_{\cC} \eps(\cC, M, \cW).
\end{align}
In Section \ref{sec:NS coding}, we describe a possible relaxation of the optimization \eqref{plain-error-one-shot} obtained by allowing the sender and receiver to share non-signaling correlations.

With $n$ uses of the (iid) channel $\cW^{\otimes n}$, transmitting a number of messages with a rate $r$ corresponds to $M=e^{nr}$. A message $m\in [M]$ is encoded to a codeword $x_1^n(m)=  x_1(m) \cdots x_n(m)$ and sent through the iid channel $\cW^{\otimes n}$. The decoder receives the state $W^{\otimes n}_{x_1^n(m)} =  W_{x_1(m)}\otimes \cdots \otimes  W_{x_n(m)}$ and performs the measurement $\{\Lambda_{x_1^n(m')}\}_{x_1^n(m')}$.  
The average error probability takes the form
\begin{align}
\eps(\cC, M, \cW^{\otimes n}) = 1-\frac{1}{M}\sum_{m=1}^M\tr{W^{\otimes n}_{x_1^n(m)}\Lambda_{x_1^n(m)}}.
\end{align}

It is known that to send messages with a rate strictly smaller than the capacity of a channel the error probability can be exponentially small. More precisely, for a given  classical-quantum channel $\cW$ with capacity $C(\cW)$, the error probability of sending $M=e^{nr}$ messages (where $r<C(\cW)$) with $n$ copies of the channel, i.e $\cW^{\otimes n}$, has the form 
\begin{align}
\eps(e^{nr}, \cW^{\otimes n})  =   e^{-n E^{}(r) + o(n)}
\end{align}
for some function $r \mapsto E^{}(r)$ called error exponent or reliability function.


\section{Meta-converse and non-signaling coding error probability}
\label{sec:NS coding}

The starting point for our derivations is the meta-converse for the one-shot coding error probability \cite{polyanskiy2010channel,Matthews2014Sep}. Recall that for a classical-quantum channel $\cW = \{W_x\}_{x\in \cX} $ and a probability distribution $p\in \cP(\cX)$,  we define $p\circ \cW = \sum_{x\in \cX} p(x) \proj{x} \otimes W_x$ and write $p= \sum_{x\in \cX} p(x) \proj{x}$. Then, the meta-converse error probability is
\begin{align}
\eps_{}^{\rm{MC}}(M,\cW) &:= \inf_{} \Big\{ \eps : \sup_{p\in \cP(\cX)} \inf_{\sigma\in \cS(\cH)} D_H^{\eps}(p\circ \cW\| p\otimes \sigma )\ge \log M \Big\}
\\&= \inf_{p\in \cP(\cX)}\sup_{\sigma\in \cS(\cH)}\inf_{0\mle O\mle \dI} \Big\{\tr{p\circ \cW(\dI-O)} \;\Big|\; \tr{p\otimes \sigma O}\le \tfrac{1}{M} \Big\}. \label{MC}
\end{align}
Classically, it is known \cite{matthews2012linear,Wang2019Feb} that the meta-converse corresponds to the setting where the sender and the receiver share non-signaling (NS)  correlations. A NS-assisted code corresponds to a super-channel which is non-signaling from the encoder to the decoder and vice-versa. More precisely, a non-signaling channel between Alice and Bob whose input systems are $A_i, B_i$ and output systems are $A_o, B_o$ has a  positive semi-definite Choi matrix $J_{A_iA_oB_iB_o}$ satisfying (e.g., \cite{Fang2019Sep})
\begin{align}
J_{A_iB_iB_o}  &= \mathds{1}_{A_i} \otimes J_{B_iB_o}, &&(A\nrightarrow B), 
\\J_{A_iA_oB_i}  &= J_{A_iA_o}\otimes \mathds{1}_{B_i},  &&(B\nrightarrow A). 
\end{align}
Let $\cN$ be the set of non-signaling super-channels. The optimal non-signaling success probability of a classical quantum-channel $\cW = \{W_x\}_{x\in \cX}$ for sending $M$ messages is 
\begin{align}\label{eq:eps-NS-def}
1-\eps_{}^{\rm{NS}}(M,\cW):=\sup_{\Pi \in \cN}  \frac{1}{M} \sum_{m=1}^M \bra{m}\Pi (\cW) (\proj{m}) \ket{m},
\end{align}
where $\Pi ( \cW)$ is the effective channel. It is known that this success probability is the solution of the following semi-definite program (SDP) \cite{Matthews2014Sep,Wang2019Feb,Fawzi19}
\begin{equation}\label{NS-program}
\begin{split}
1-\eps_{}^{\rm{NS}}(M,\cW) =\sup_{\Lambda, p} &\quad\frac{1}{M}\sum_{x\in \cX} \tr{\Lambda_x W_x}  
\\\text{s.t.}& \quad  \sum_{x\in \cX} \Lambda_x = \mathbb{I}
\\&\quad 0\mle  \Lambda_x \mle p(x) \mathbb{I}  \qquad \forall {x\in \cX}       
\\&\quad  \sum_{x\in \cX} p(x)= M.
\end{split}
\end{equation}
As remarked by \cite{Wang2019Feb}, this program differ from the meta-converse program 
\eqref{MC} in the constraint $\sum_{x\in \cX} \Lambda_x = \mathbb{I} $ that should be $\sum_{x\in \cX} \Lambda_x \mle  \mathbb{I}$ in the latter (see \eqref{NS-program-le} later on). Note that we always have the inequality $ \eps_{}^{\rm{MC}}(M,\cW) \le \eps_{}^{\rm{NS}}(M,\cW)$. Moreover, \cite{Wang2019Feb} relates these two apparently different programs using the notion of activated capacity. More precisely, they show in \cite[Lemma 3]{Wang2019Feb} that allowing a perfect communication of a single bit using the channel
\begin{align}
\cI_2(\cdot):= \sum_{i=1}^2 \bra{i}(\cdot)\ket{i} \proj{i}
\end{align}
is sufficient to approximately reduce the non-signaling error probability to the meta-converse one
\begin{align}
\eps_{}^{\rm{MC}}(M,\cW)=\eps_{}^{\rm{MC}}(2M,\cW\otimes \cI_2) \le \eps_{}^{\rm{NS}}(2M,\cW\otimes \cI_2) \le  \eps_{}^{\rm{MC}}(M,\cW),
\end{align}
where the first equality follows from \cite[Eq $(11)$]{Wang2019Feb}. Hence, we have that 
\begin{align}\label{NS-MC-inequ}
\eps_{}^{\rm{MC}}(M,\cW)= \eps_{}^{\rm{NS}}(2M,\cW\otimes \cI_2). 
\end{align}
Note that for classical channels we have $\eps_{}^{\rm{MC}}(M,\cW) = \eps_{}^{\rm{NS}}(M,\cW)$ \cite{matthews2012linear}. This means that classically no activation is needed and the meta-converse  corresponds exactly to non-signaling coding.

The relation between non-signaling and the meta-converse in \eqref{NS-MC-inequ} motivates the study of error exponents of the activated non-signaling error probabilities defined as
\begin{align}
E^{\rm{ANS}}(r):= \lim_{n\rightarrow \infty } - \frac{1}{n}\log \eps^{\rm{NS}}(e^{nr}, \cW^{\otimes n}\otimes \cI_2). 
\end{align}
Using \eqref{NS-MC-inequ}, we deduce that the activated non-signaling and the meta-converse have the same error exponents
\begin{align}
E^{\rm{ANS}}(r)=E^{\rm{MC}}(r) := \lim_{n\rightarrow \infty } - \frac{1}{n}\log \eps^{\rm{MC}}(e^{nr}, \cW^{\otimes n}). 
\end{align}
Hence, it suffices to focus on the meta-converse for our derivations of the activated non-signaling error exponents. The meta-converse program \eqref{MC} can be written as the SDP \cite{Wang2019Feb}
\begin{equation}\label{NS-program-le}
\begin{split}
1-\eps_{}^{\rm{MC}}(M,\cW) =\sup_{\Lambda, p} &\quad\frac{1}{M}\sum_{x\in \cX} \tr{\Lambda_x W_x}  
\\\text{s.t.}& \quad  \sum_{x\in \cX} \Lambda_x \mle  \mathbb{I}  
\\&\quad 0\mle  \Lambda_x \mle p(x) \mathbb{I}  \qquad \forall {x\in \cX}        
\\&\quad  \sum_{x\in \cX} p(x)= M.
\end{split}
\end{equation}
This further relaxation has an advantage that can be seen in the dual formulation of the program  \eqref{NS-program-le} stated in the following proposition.

\begin{prop}\label{prop-dual}
Let  $\cW = \{W_x\}_{x\in \cX}$ be a classical-quantum channel. The optimal one-shot meta-converse and (activated) non-signaling error probabilities satisfy
\begin{align}
\eps_{}^{\rm{NS}}(2M,\cW\otimes \cI_2) = \eps_{}^{\rm{MC}}(M,\cW) &= \inf_{p \in \cP(\cX)} \sup_{B\mge 0} \sum_{x\in \cX} p(x) \tr{W_x\wedge MB}-\tr{B},\label{ANS-dual}
\\\eps_{}^{\rm{NS}}(M,\cW) &= \inf_{p \in \cP(\cX)} \sup_{B=B^\dagger} \sum_{x\in \cX} p(x) \tr{W_x\wedge MB}-\tr{B}.\label{NS-dual}
\end{align}
\end{prop}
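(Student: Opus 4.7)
The plan is to derive \eqref{ANS-dual} and \eqref{NS-dual} via standard SDP Lagrangian duality applied to the primal programs \eqref{NS-program-le} and \eqref{NS-program}, respectively. The first equality in \eqref{ANS-dual} linking $\eps^{\rm{NS}}(2M,\cW\otimes\cI_2)$ to $\eps^{\rm{MC}}(M,\cW)$ is already recorded in \eqref{NS-MC-inequ}, so all the work is in dualizing the two SDPs. The main trick is to repackage the resulting ``positive part'' terms using the non-commutative minimum identity $A-(A-B)_+ = A\wedge B$.

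Concretely, I would first reparametrize the primal variable by setting $p(x)=Mq(x)$ with $q\in\cP(\cX)$, and then split $\sup_{\Lambda,q}=\sup_q\sup_\Lambda$. Fixing $q$, introduce a dual variable $B$ for the constraint $\sum_x\Lambda_x\mle \dI$ (respectively $\sum_x\Lambda_x=\dI$) and $Q_x\mge 0$ for each $\Lambda_x\mle Mq(x)\dI$. The Lagrangian regroups as
\begin{align*}
\cL(\Lambda;B,Q) \;=\; \sum_{x\in\cX}\tr{\Lambda_x\bigl(\tfrac{1}{M}W_x - B - Q_x\bigr)} + \tr{B} + M\sum_{x\in\cX}q(x)\tr{Q_x}.
\end{align*}
Maximizing over $\Lambda_x\mge 0$ forces $Q_x\mge \tfrac{1}{M}W_x - B$, and since the objective is linear increasing in $\tr{Q_x}$ the optimal choice is $Q_x=(\tfrac{1}{M}W_x - B)_+$.

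Next I would exploit positive homogeneity $(\tfrac{1}{M}W_x-B)_+=\tfrac{1}{M}(W_x-MB)_+$ together with $A-(A-B)_+=A\wedge B$ and $\tr{W_x}=1$ to obtain $\tr{(W_x-MB)_+}=1-\tr{W_x\wedge MB}$. Substituting back gives
\begin{align*}
1-\eps^{\rm{MC}}(M,\cW;q) \;=\; 1 + \inf_{B\mge 0}\Bigl(\tr{B}-\sum_{x\in\cX}q(x)\tr{W_x\wedge MB}\Bigr),
\end{align*}
so that $\eps^{\rm{MC}}(M,\cW;q)=\sup_{B\mge 0}\sum_x q(x)\tr{W_x\wedge MB}-\tr{B}$. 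Taking the outer infimum over $q$ and combining with \eqref{NS-MC-inequ} yields \eqref{ANS-dual}. The plain non-signaling identity \eqref{NS-dual} follows from the same computation, the only modification being that the equality constraint $\sum_x\Lambda_x=\dI$ in \eqref{NS-program} produces a Hermitian rather than PSD dual variable $B=B^\dagger$.

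The one step requiring a sentence of justification will be strong duality for the inner (in $\Lambda$) SDP. Slater's condition is verified by the strictly feasible point $\Lambda_x=\tfrac{1}{2}q(x)\dI$ (for the meta-converse) or $\Lambda_x=q(x)\dI$ (for non-signaling, where the equality $\sum_x\Lambda_x=\dI$ is automatic), valid whenever $q$ has full support and $M>1$; boundary values $q(x)=0$ can be handled by restricting to the supporting sub-alphabet. I do not expect this to be the main obstacle — the only real care is in bookkeeping signs and in recognizing the compact rewriting via $W_x\wedge MB$, which is what makes the dual look clean rather than a messy minimum over hypothesis-testing measurements.
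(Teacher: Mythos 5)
Your proposal is correct and follows essentially the same route as the paper: normalize the primal variable $p\mapsto Mq$, dualize the inner SDP over $\Lambda$ with a multiplier $B$ for $\sum_x\Lambda_x\mle\dI$ (respectively $\sum_x\Lambda_x=\dI$) and multipliers $Q_x\equiv C_x$ for $\Lambda_x\mle Mq(x)\dI$, then eliminate the inner variable via $\inf\{\tr{Q}:Q\mge 0,\,Q\mge\zeta\}=\tr{\zeta_+}$ and repackage $\tr{W_x}-\tr{(W_x-MB)_+}=\tr{W_x\wedge MB}$, with the sign of the constraint on $\sum_x\Lambda_x$ controlling whether $B$ is PSD or merely Hermitian. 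The only substantive difference is cosmetic: you spell out the Slater point and the homogeneity step $(\tfrac{1}{M}W_x-B)_+=\tfrac{1}{M}(W_x-MB)_+$ that the paper leaves implicit, which is a welcome bit of hygiene but not a different argument.
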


The classical version of this result appeared in \cite[Proposition 14]{matthews2012linear}. The advantage of the formulation \eqref{ANS-dual} is that it relates the error probability to a (symmetric) hypothesis problem with a correction term (see \cite{Hayashi2007Dec} for a similar remark). Furthermore, the optimization in \eqref{ANS-dual} is over positive semi-definite matrices $B\mge 0$ which is more convenient for proving the achievability result than an optimization over Hermitian matrices $B=B^\dagger$ one would have in the  dual program \eqref{NS-dual} of the (non-activated) non-signaling error probability \eqref{NS-program}.

\begin{proof}[Proof of Proposition \ref{prop-dual}]
The proof is similar to \cite{matthews2012linear} and normalizing $p$ in \eqref{NS-program-le}, we get
\begin{equation}
1-\eps_{}^{\rm{MC}}(M,\cW) =\sup_{p \in \cP(\cX)} \sup_{\Lambda} \Big\{\tfrac{1}{M}\textstyle\sum_{x\in \cX} \tr{\Lambda_x W_x} \Big| \textstyle\sum_{x\in \cX} \Lambda_x \mle  \mathbb{I}  , \; 0\mle  \Lambda_x \mle M p(x) \mathbb{I}  \; \forall {x\in \cX}  \Big\} .
\end{equation}
The Lagrangian of the second maximization program is
\begin{align}
\cL(\Lambda,  B, C) &=  \frac{1}{M}\sum_{x\in \cX} \tr{\Lambda_x W_x}  +  \mathrm{Tr}\Big[ B\Big(\dI- \sum_{x\in \cX} \Lambda_x\Big) \Big] +\sum_{x\in \cX} \tr{C_x (M p(x)\dI-\Lambda_x) }
\\&= \sum_{x\in \cX} \mathrm{Tr}\Big[\Lambda_x \Big(\frac{1}{M}W_x -B-C_x\Big)\Big]+\sum_{x\in \cX} Mp(x)\tr{C_x} +\tr{B},
\end{align}
where $\Lambda_x, B, C_x\mge 0$.
Hence for all $x\in \cX$, we should have $C_x\mge \frac{1}{M}W_x -B $  so by the strong duality (the primal is bounded and strictly feasible) 
\begin{align}
\eps_{}^{\rm{MC}}(M,\cW) 
&= 1-  \sup_{p\in \cP(\cX)}\inf_{B,C\mge 0}  \sum_{x\in \cX} Mp(x)\tr{C_x} +\tr{B}
\\ &=\inf_{p\in \cP(\cX)} 1-  \inf_{B\mge 0} \sum_{x\in \cX} p(x)\tr{(W_x-MB)_+} +\tr{B}
\\&=\inf_{p\in \cP(\cX)} \sup_{B\mge 0}\sum_{x\in \cX} p(x) \tr{W_x- \left(W_x-MB\right)_+}- \tr{B}
\\&=\inf_{p\in \cP(\cX)} \sup_{B\mge 0}\sum_{x\in \cX} p(x) \tr{W_x\wedge MB}- \tr{B}.
\end{align}
The proof of \eqref{NS-dual} is the same except that we have the constraint $B=B^\dagger $ instead of $B\mge 0$ because $\sum_{x\in \cX}\Lambda_x = \dI$ in the non-signaling program \eqref{NS-program}.  
\end{proof}


\section{Non-signaling assisted error exponents}
\label{sec:NS-coding-CQ}

\subsection{Main result}

In Section \ref{sec:plain}, we recalled the plain coding over classical quantum channels. In particular, we defined the plain error exponent as the decaying rate of the optimal error probability for  sending information with rate below the capacity. Since non-signaling assisted codes are at least as powerful as their plain counterpart, the error probability should also decay exponentially with the number of channel uses below capacity
\begin{align}
\eps^{\rm{NS}}(e^{nr}, \cW^{\otimes n}) =  \exp(-n E^{\rm{NS}}(r)+o(n)).
\end{align}
Motivated by the meta-converse, we consider the activated non-signaling error probability as defined in Section \ref{sec:NS coding}. The goal of this section is to asymptotically characterize the corresponding error exponent $E^{\rm{ANS}}$ for activated non-signaling strategies and our main result is the following.

\begin{thm}\label{prop-NS-exponent}
Let $\cW$ be a classical-quantum channel and $r\geq0$. Then, we have that
\begin{align}\label{NS-exponent}
E^{\rm{ANS}}(r) = \sup_{\alpha \in(0,1]}  \frac{1-\alpha}{\alpha}\left( C_{\alpha}(\cW)-r\right).
\end{align}
\end{thm}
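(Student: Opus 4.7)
My plan is to prove both directions by using the SDP dual characterization in Proposition~\ref{prop-dual}, together with the additivity of the Petz-R\'enyi capacity $C_\alpha(\cW^{\otimes n})=nC_\alpha(\cW)$ \cite{Mosonyi2017Oct}. The key analytic tools are Audenaert's inequality \eqref{Chernoff} and a scaled trace Young inequality $\tr{AB^{1-\alpha}}\le \alpha s\tr{A^{1/\alpha}}+(1-\alpha)s^{-(1-\alpha)/\alpha}\tr{B}$ for $s>0$ (the operator lift of $a^\alpha b^{1-\alpha}\le \alpha s a+(1-\alpha)s^{-(1-\alpha)/\alpha}b$), used in complementary ways for the two directions.

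For the achievability direction $E^{\rm{ANS}}(r)\ge \sup_\alpha \tfrac{1-\alpha}{\alpha}(C_\alpha(\cW)-r)$, fix $\alpha\in(0,1)$ and let $p_\alpha^\star\in\cP(\cX)$ achieve $C_\alpha(\cW)$. In the $n$-shot dual from Proposition~\ref{prop-dual}, pick the product codebook $p_n=(p_\alpha^\star)^{\otimes n}$. Audenaert's inequality gives $\tr{W_{x^n}\wedge MB}\le \tr{W_{x^n}^\alpha (MB)^{1-\alpha}}$, so the codebook-averaged quantity factorizes as $M^{1-\alpha}\tr{T_\alpha^{\otimes n}B^{1-\alpha}}$ with $T_\alpha:=\sum_x p_\alpha^\star(x)W_x^\alpha$. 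Applying the scaled trace Young inequality with $s$ chosen so the $\tr{B}$ coefficient equals $1$ (thus cancelling the explicit $-\tr{B}$ in the objective) yields
\begin{equation*}
\sup_{B\mge 0}\bigl[M^{1-\alpha}\tr{T_\alpha^{\otimes n}B^{1-\alpha}}-\tr{B}\bigr]\le \alpha(1-\alpha)^{(1-\alpha)/\alpha}M^{(1-\alpha)/\alpha}\bigl(\tr{T_\alpha^{1/\alpha}}\bigr)^n.
\end{equation*}
Since $\inf_\sigma D_\alpha(p_\alpha^\star\circ\cW\|p_\alpha^\star\otimes\sigma)=\tfrac{\alpha}{\alpha-1}\log\tr{T_\alpha^{1/\alpha}}=C_\alpha(\cW)$ by optimality of $p_\alpha^\star$, we have $\tr{T_\alpha^{1/\alpha}}=e^{-\frac{1-\alpha}{\alpha}C_\alpha(\cW)}$, and with $M=e^{nr}$ this gives $\eps^{\rm{MC}}\le \alpha(1-\alpha)^{(1-\alpha)/\alpha}e^{-n\frac{1-\alpha}{\alpha}(C_\alpha(\cW)-r)}$; taking $\sup_\alpha$ concludes.

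For the converse direction I apply Sion's minimax theorem (valid since the objective is linear in $p$ on the compact simplex and concave in $B\mge 0$) to rewrite the dual as $\eps^{\rm{MC}}=\sup_{B\mge 0}[\min_{x}\tr{W_x\wedge MB}-\tr{B}]$, and exhibit, for each $\alpha$, a dual witness $B_n=\gamma_n(\sigma_\alpha^\star)^{\otimes n}$ where $\sigma_\alpha^\star$ is the Petz-R\'enyi-optimal output state attaining $C_\alpha(\cW)$ and $\gamma_n\approx e^{-nE_\alpha(r)}$ with $E_\alpha(r):=\tfrac{1-\alpha}{\alpha}(C_\alpha(\cW)-r)$. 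The KKT conditions for the saddle point of $C_\alpha(\cW)=\sup_p\inf_\sigma D_\alpha(p\circ\cW\|p\otimes\sigma)$ yield the pointwise bound $D_\alpha(W_x\|\sigma_\alpha^\star)\le C_\alpha(\cW)$ for every $x\in\cX$ (not only in $\mathrm{supp}(p_\alpha^\star)$), and via the Young-dual representation of the Petz-R\'enyi trace $\tr{\rho^\alpha\sigma^{1-\alpha}}$---obtained by applying the pointwise Young identity $a^\alpha b^{1-\alpha}=\min_{s>0}[\alpha a s^{(1-\alpha)/\alpha}+(1-\alpha) b s^{-1}]$ at the operator level---this per-letter inequality lifts to the uniform operator lower bound $\min_{x^n}\tr{W_{x^n}\wedge MB_n}\ge 1-o(1)$ in the sphere-packing regime $r<C_\alpha(\cW)$. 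This yields $\eps^{\rm{MC}}\ge e^{-nE_\alpha(r)-o(n)}$, and taking $\sup_\alpha$ completes the proof.

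The hardest step is precisely this uniformity in the $\min_{x}$ of the converse witness, i.e.\ ensuring $\tr{W_{x^n}\wedge MB_n}$ is close to $1$ for every $x^n\in\cX^n$, including atypical ones outside $\mathrm{supp}(p_\alpha^\star)^{\otimes n}$. This uniformity is what allows the sphere-packing characterization to persist for all rates $r<C(\cW)$ with no critical-rate restriction, in contrast to plain coding. The Young-dual representation of the Petz-R\'enyi divergence is the essential tool: it transforms the per-letter KKT inequality $D_\alpha(W_x\|\sigma_\alpha^\star)\le C_\alpha(\cW)$ into the required operator inequality, and its availability is enabled by the SDP relaxation allowing $B\mge 0$ to range freely over all positive-semidefinite matrices rather than being constrained to POVM elements as in the plain setting.
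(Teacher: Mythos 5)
Your achievability direction is essentially the paper's argument in a lightly repackaged form: you pass to the dual from Proposition~\ref{prop-dual}, apply Audenaert's inequality \eqref{Chernoff}, and use Young's inequality to eliminate the supremum over $B$. Where the paper writes $B=s\sigma$ with $\sigma\in\cS(\cH)$ and applies scalar Young (Lemma~\ref{lem-young}), you keep $B$ intact and invoke a ``scaled trace Young inequality'' $\tr{A B^{1-\alpha}}\le \alpha s\tr{A^{1/\alpha}}+(1-\alpha)s^{?}\tr{B}$. As written this is slightly imprecise (the correct scaling exponent coming from AM--GM is $s^{-\alpha/(1-\alpha)}$, not $s^{-(1-\alpha)/\alpha}$, and for non-commuting $A,B$ the inequality is not an ``operator lift''\,---\,you must first pass through H\"older's inequality for Schatten norms to get $\tr{T_\alpha^{\otimes n}B^{1-\alpha}}\le(\tr{T_\alpha^{n/\alpha}})^\alpha(\tr{B})^{1-\alpha}$ and only then apply the scalar weighted AM--GM). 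With that fix, combined with the Sibson identity $\tr{T_\alpha^{1/\alpha}}=e^{-\frac{1-\alpha}{\alpha}C_\alpha(\cW)}$, your achievability bound is correct and quantitatively identical to Proposition~\ref{prop-exp-ach}. The factorization $\sum_{x^n}(p_\alpha^\star)^{\otimes n}(x^n)W_{x^n}^{\alpha}=T_\alpha^{\otimes n}$ is also correct and neatly bypasses the additivity lemma.

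The converse direction, however, contains a genuine gap that is numerically fatal. Set $B_n=\gamma_n(\sigma_\alpha^\star)^{\otimes n}$ with $\gamma_n\approx e^{-nE_\alpha(r)}$, so $\tr{B_n}\approx e^{-nE_\alpha(r)}$. You claim $\min_{x^n}\tr{W_{x^n}\wedge MB_n}\ge 1-o(1)$. But the dual objective is $\min_{x^n}\tr{W_{x^n}\wedge MB_n}-\tr{B_n}$, and with your two claimed estimates this is $\ge 1-o(1)-e^{-nE_\alpha(r)}\to 1$, which would imply $\eps^{\rm{MC}}\to 1$ for \emph{all} rates $r<C(\cW)$. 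This is the wrong order of magnitude: you wanted $\eps^{\rm{MC}}\ge e^{-nE_\alpha(r)-o(n)}$, not $\eps^{\rm{MC}}\ge 1-o(1)$. In fact the right dual witness has both $\min_{x^n}\tr{W_{x^n}\wedge MB_n}$ and $\tr{B_n}$ scaling like $e^{-nE_\alpha(r)+o(n)}$, so the difference is an \emph{exponentially small difference of two exponentially small, comparable quantities}, which is exactly why the sphere packing converse is delicate. Your proposal gives no mechanism to control this cancellation: the per-letter KKT inequality $D_\alpha(W_x\|\sigma_\alpha^\star)\le C_\alpha(\cW)$ is correct (and indeed useful\,---\,it is one reason the types reduction of Lemma~\ref{lem:constant-composition} can be avoided in a per-letter formulation), but the step ``Young-dual representation $\Rightarrow$ operator lower bound on $\tr{W_{x^n}\wedge MB_n}$'' is never given and cannot give the bound you assert. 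The paper's actual converse (Proposition~\ref{prop-exp-cvs}, proved in Appendix~\ref{app-spb}) goes through the Nussbaum--Szko\l a classicalization of $(W_x,\sigma)$, a Blahut-style tilted distribution $v^{\alpha}$, and a Chebyshev/CLT concentration bound to control $\tr{W_T^{\otimes n}\wedge Ms\sigma^{\otimes n}}$ down to $O(\sqrt{n})$ in the exponent, followed by a careful choice of $s$ and a small rate perturbation $\eta=-2A/\sqrt{n}$ so that the subtracted $\tr{B_n}=s$ does not destroy the bound. None of this machinery appears in your sketch, and there is no elementary Young-duality shortcut that replaces it.
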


For rates below the zero R\'enyi capacity $r<C_0(\cW)$, the right hand side of \eqref{NS-exponent} diverges \cite[Proposition 4]{Cheng2019Jan} and the error probability becomes $\eps^{\rm{NS}} = 0$ (the optimal $\alpha \rightarrow 0$).\footnote{The zero R\'enyi capacity $C_0(\cW)$ is equal to the zero-error non-signalling assisted capacity \cite{Duan2015Dec}, as well as the zero-error capacity with feedback \cite{Dalai2013Sep}, and consequently gives an upper bound on the plain zero-error capacity.} For rates above the capacity $r > C(\cW)$, the error probability goes to $1$ and the right hand side of \eqref{NS-exponent} vanishes (the optimal $\alpha = 1$). So, the only non-trivial interval for the rates is $r\in(C_0(\cW), C(\cW))$. In the remainder of this note, we will focus on rates within this interval.

The expression in \eqref{NS-exponent} is exactly equal to the sphere-packing bound (SPB) and we stress that the existing proofs \cite{Dalai2013Sep,Cheng2019Jan} consider plain coding strategies. However, they can be adapted to handle the meta-converse and activated non-signaling assisted codes as well. In the following, we provide a slightly different proof of the SPB using the dual perspective (cf.~\ref{ANS-dual}). The achievability result in Theorem \ref{prop-NS-exponent} is stronger than the random coding achievability \cite{Beigi2023Oct,Renes2024Jul,Li2024Jul}, which might be expected since non-signaling strategies are at least as powerful as the plain, shared randomness and entanglement-assisted ones. We start by proving the achievability bound in Section \ref{sec-NS-exp-ach}, then we move to prove the converse bound in Section \ref{sec-NS-exp-cvs}.


\subsection{Achievability}
\label{sec-NS-exp-ach}

In this section, we prove the achievability result of Theorem \ref{prop-NS-exponent}. For this, we show a non-asymptotic upper bound on the non-signalling error probability.

\begin{prop}\label{prop-exp-ach}
Let  $\cW$ be a classical-quantum channel, $n\in \mathbb{N}$, $r\ge 0$, and $\alpha\in(0,1]$. Then, we have that
\begin{align}
\frac{1}{n}\log\eps^{\rm{NS}}(e^{nr}, \cW^{\otimes n}\otimes \cI_2) &\le   - \sup_{p\in \cP(\cX)}\inf_{\sigma\in \cS(\cH)}    \frac{1-\alpha}{\alpha} \Big(D_{\alpha}(p \circ \cW\| p\otimes  \sigma)-r\Big).
\end{align}
\end{prop}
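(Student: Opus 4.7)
The plan is to combine the SDP dual representation of the meta-converse error probability (Proposition \ref{prop-dual}, equation \eqref{ANS-dual}) with the Chernoff-type trace inequality \eqref{Chernoff} and an explicit optimization via H\"older's inequality. By \eqref{NS-MC-inequ} we have $\eps^{\rm{NS}}(e^{nr}, \cW^{\otimes n}\otimes \cI_2) = \eps^{\rm{MC}}(e^{nr}/2, \cW^{\otimes n})$. For any $q\in \cP(\cX)$, I choose the product input $p = q^{\otimes n}$ in the outer infimum of \eqref{ANS-dual}, yielding $\eps^{\rm{NS}}(e^{nr}, \cW^{\otimes n}\otimes \cI_2) \le \sup_{B\mge 0}\big[\sum_x q^{\otimes n}(x)\tr{W^{(n)}_x \wedge (e^{nr}/2)\,B} - \tr{B}\big]$, where $W^{(n)}_x := W_{x_1}\otimes\cdots\otimes W_{x_n}$. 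Applying \eqref{Chernoff} termwise with parameter $\alpha\in(0,1]$ and collecting tensor factors shows that this supremum is bounded by $\sup_{B\mge 0}\big[(e^{nr}/2)^{1-\alpha}\tr{R^{\otimes n} B^{1-\alpha}} - \tr{B}\big]$ with $R := \sum_x q(x)\,W_x^\alpha$.

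The second step is to maximize this expression over $B \mge 0$. Writing $B = \gamma\sigma$ with $\gamma\ge 0$ and $\sigma\in \cS(\cH^{\otimes n})$, elementary calculus in $\gamma$ (for fixed $\sigma$) yields the optimum $\alpha(1-\alpha)^{(1-\alpha)/\alpha}(e^{nr}/2)^{(1-\alpha)/\alpha}\,\tr{R^{\otimes n}\sigma^{1-\alpha}}^{1/\alpha}$, attained at $\gamma^\star = \big((1-\alpha)(e^{nr}/2)^{1-\alpha}\tr{R^{\otimes n}\sigma^{1-\alpha}}\big)^{1/\alpha}$. The remaining supremum over $\sigma$ is a Schatten--H\"older problem: $\sup_\sigma \tr{R^{\otimes n}\sigma^{1-\alpha}} = \tr{(R^{\otimes n})^{1/\alpha}}^{\alpha} = \tr{R^{1/\alpha}}^{n\alpha}$, attained at $\sigma^\star \propto (R^{1/\alpha})^{\otimes n}$. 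The factorized form of $\sigma^\star$ is crucial: it lets the bound automatically reduce to single-letter quantities without invoking a separate additivity result for the Petz R\'enyi mutual information.

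Finally, the same H\"older step applied in the single-letter case gives $\inf_{\rho\in \cS(\cH)} D_\alpha(q\circ\cW\|q\otimes\rho) = -\frac{\alpha}{1-\alpha}\log\tr{R^{1/\alpha}}$, so after substituting and taking $\frac{1}{n}\log$ of the bound the $n$-independent prefactor $\alpha(1-\alpha)^{(1-\alpha)/\alpha}\cdot 2^{-(1-\alpha)/\alpha}$ is at most $1$ for every $\alpha\in(0,1]$ and hence contributes a non-positive log term that can be dropped. This gives $\frac{1}{n}\log \eps^{\rm{NS}}(e^{nr}, \cW^{\otimes n}\otimes \cI_2) \le -\frac{1-\alpha}{\alpha}\big(\inf_\rho D_\alpha(q\circ\cW\|q\otimes\rho) - r\big)$ for every $q \in \cP(\cX)$. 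Since the left-hand side is independent of $q$, taking the infimum of the right-hand side over $q$ and pulling the non-negative factor $\frac{1-\alpha}{\alpha}$ back inside the $\sup\inf$ gives precisely the claimed inequality. I expect the main obstacle to be carrying out the two-step optimization over $B$ cleanly; once the H\"older optimizer $\sigma^\star\propto (R^{1/\alpha})^{\otimes n}$ is identified the rest is arithmetic, and the tensor-product form of $\sigma^\star$ is what eliminates the need for any separate additivity argument.
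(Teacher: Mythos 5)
Your proof is correct, and it takes a genuinely different route from the paper's. Both arguments start from the SDP dual in Proposition~\ref{prop-dual} / equation \eqref{ANS-dual}, apply the Audenaert--Chernoff trace inequality \eqref{Chernoff}, and dispose of the scalar part of $B$ via the Young-type bound of Lemma~\ref{lem-young}. The divergence is in how the sum over $x$ and the minimization over $\sigma$ are handled. The paper first packages the sum into a single non-commutative minimum using the concavity of $(A,B)\mapsto\tr{A\wedge B}$ (Lemma~\ref{lem:concavity of min}), so that one Chernoff application produces $\tr{(p\circ\cW)^{\alpha}(p\otimes M\sigma)^{1-\alpha}}$; it then leaves $\inf_{\sigma^n}D_{\alpha}(p^{\otimes n}\circ\cW^{\otimes n}\|p^{\otimes n}\otimes\sigma^n)$ untouched and single-letterizes by invoking additivity of the R\'enyi mutual information (Lemma~\ref{app-lem-additivity}). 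You instead apply Chernoff termwise, collect tensor factors into $R=\sum_x q(x)W_x^{\alpha}$, and then solve the $\sigma$-optimization explicitly by Schatten--H\"older, $\sup_{\sigma\in\cS(\cH^{\otimes n})}\tr{R^{\otimes n}\sigma^{1-\alpha}}\le\tr{R^{1/\alpha}}^{n\alpha}$ with a product-form optimizer, which makes single-letterization automatic and bypasses both Lemma~\ref{lem:concavity of min} and Lemma~\ref{app-lem-additivity}. The price is that you need the quantum Sibson identity $\inf_{\rho}D_{\alpha}(q\circ\cW\|q\otimes\rho)=-\tfrac{\alpha}{1-\alpha}\log\tr{R^{1/\alpha}}$ (the same closed form the paper uses later in the proof of Corollary~\ref{cor:pure-ee}) to translate $\tr{R^{1/\alpha}}$ back into Petz--R\'enyi form. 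The two derivations coincide once one notices that for the block-diagonal states involved $\tr{(p\circ\cW)^{\alpha}(p\otimes\sigma)^{1-\alpha}}=\tr{R\sigma^{1-\alpha}}$; yours is more self-contained and makes the single-letter structure explicit, while the paper's is shorter once the additivity lemma is granted.
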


Since it is known that \cite{Mosonyi2017Oct} 
\begin{align}
\sup_{p\in \cP(\cX)}\inf_{\sigma\in \cS(\cH)}  D_{\alpha}(p\circ \cW\| p\otimes  \sigma) = \sup_{p\in \cP(\cX)} \inf_{\sigma\in \cS(\cH)}  \exs{x\sim p}{D_{\alpha}( W_x\| \sigma )}
\end{align}
we immediately conclude that
\begin{align}
E^{\rm{ANS}}(r)\ge  \sup_{\alpha \in(0,1]} \sup_{p\in \cP(\cX)} \inf_{ \sigma\in \cS(\cH)} \frac{1-\alpha}{\alpha}\Big( \exs{x\sim p}{D_{\alpha}( W_x\| \sigma )}-r\Big) ,
\end{align}
which corresponds to the achievability part of Theorem~\ref{prop-NS-exponent}.

We note that an important step in the proof of Proposition \ref{prop-exp-ach} in order to arrive at the Petz R\'enyi divergences\,---\,instead of the sandwiched R\'enyi divergences as in \cite{Beigi2023Oct}\,---\,is to employ a Young type inequality (Lemma \ref{lem-young}) that becomes available in the dual picture because of the activated non-signaling assistance.

\begin{proof}[Proof of Proposition \ref{prop-exp-ach}]
Using Proposition \ref{prop-dual} we have that
\begin{align}
\eps_{}^{\rm{NS}}(M, \cW\otimes \cI_2)  &\le \eps_{}^{\rm{NS}}(2M, \cW\otimes \cI_2)=  \inf_{p\in \cP(\cX)}\sup_{B\mge 0} \sum_{x\in \cX} p(x) \tr{ W_x \wedge  MB }-\tr{B}. \label{dual-swap}
\end{align}
Since we have $ \tr{  W_x \wedge  MB }=\tr{ (\proj{x}\otimes W_x) \wedge (\proj{x}\otimes MB) } $  for all $x\in \cX$ and $(A,B)\mapsto \tr{A\wedge B}$ is concave (Lemma \ref{lem:concavity of min}) we deduce that
\begin{align}
\eps^{\rm{NS}}(M,\cW\otimes \cI_2) &\le  \inf_{p\in \cP(\cX)}\sup_{B\mge 0}  \sum_{x \in \cX} p(x) \tr{ (\proj{x}\otimes W_x) \wedge (\proj{x}\otimes MB) }-\tr{B} 
\\& \le  \inf_{p\in \cP(\cX)} \sup_{B\mge 0} \tr{\left(\sum_{x\in \cX} p(x)\proj{x}\otimes W_x\right) \wedge \left( \sum_{x\in \cX} p(x) \proj{x}\otimes MB\right)}-\tr{B}. 
\end{align}
Recall the notation $p \circ \cW = \sum_{x \in \cX} p(x)\proj{x}\otimes W_x$ and $p = \sum_{x \in \cX} p(x)\proj{x}$. Writing $B\mge 0$ as  $B= s\sigma$, where $s=\tr{B}$ and $\sigma\in \cS(\cH)$
 and using the inequality \eqref{Chernoff} we obtain for any $\alpha \in (0,1]$ that
\begin{align}
\eps^{\rm{NS}}(M, \cW\otimes \cI_2) &\le    \inf_{p\in \cP(\cX)} \sup_{\sigma \in \cS(\cH),\; s\ge 0}  \tr{\left(p \circ \cW \right) \wedge \left( p \otimes(Ms\sigma) \right)}-s
\\&   \le \inf_{p\in \cP(\cX)} \sup_{\sigma \in \cS(\cH),\; s\ge 0}  s^{1-\alpha}\tr{\left(p \circ \cW\right)^{\alpha}\left( p\otimes(M\sigma) \right)^{1-\alpha}}-s
\\&=  \inf_{p\in \cP(\cX)} \sup_{\sigma \in \cS(\cH),\; s\ge 0}  s^{1-\alpha} M^{1-\alpha} \exp\left((\alpha-1)D_{\alpha}(p \circ \cW\| p\otimes \sigma  ) \right)-s.
\end{align}
Using Young's inequality (Lemma \ref{lem-young}) in the form $\sup_{s\ge 0} s^{1-\alpha}c-s= \kappa_{\alpha  }c^{\frac{1}{\alpha}}$, where $\kappa_{\alpha} = {\alpha} (1-\alpha)^{\frac{1-\alpha}{\alpha}}\le 1$ we get
\begin{align}
\eps^{\rm{NS}}(M, \cW\otimes \cI_2) &\le   \inf_{p\in \cP(\cX)} \sup_{\sigma \in \cS(\cH),\; s\ge 0}  s^{1-\alpha} M^{1-\alpha} \exp\left((\alpha-1)D_{\alpha}(p \circ \cW\| p\otimes \sigma  ) \right)-s
\\&\le \inf_{p\in \cP(\cX)} \sup_{\sigma \in \cS(\cH)}   M^{\tfrac{1-\alpha}{\alpha}} \exp\left( \frac{\alpha-1}{\alpha} D_{\alpha}(p \circ \cW\| p\otimes \sigma  ) \right).
\end{align}
Taking $M = e^{nr}$, $n$ iid copies of the channel we deduce that
\begin{align}
\eps^{\rm{NS}}(e^{nr}, \cW^{\otimes n} \otimes \cI_2)
&\le \inf_{p \in \cP(\cX^n)} \sup_{\sigma^n \in \cS(\cH^{\otimes n})}  e^{\frac{nr(1-\alpha)}{\alpha}}\exp\left( \frac{\alpha-1}{\alpha} D_{\alpha}(p\circ \cW^{\otimes n}\| p\otimes \sigma^n  ) \right)
\\&\le  \inf_{p \in \cP(\cX)}   e^{\frac{nr(1-\alpha)}{\alpha}}\exp\left( \frac{\alpha-1}{\alpha} \inf_{\sigma^n \in \cS(\cH^{\otimes n})}D_{\alpha}(p^{\times n}\circ \cW^{\otimes n}\|p^{\otimes n}\otimes \sigma^n  ) \right)
\\&= \inf_{p\in \cP(\cX)}   e^{\frac{nr(1-\alpha)}{\alpha}}\exp\left( n\frac{\alpha-1}{\alpha} \inf_{\sigma \in \cS(\cH^{})}D_{\alpha}(p\circ \cW\|p\otimes \sigma  ) \right),
\end{align}
where we use  the additivity of the R\'enyi mutual information \cite[Lemma 7]{Hayashi2016Oct} (Lemma \ref{app-lem-additivity}) in the last equality. Taking the logarithm on both sides we obtain the desired achievability bound
\begin{align}
\frac{1}{n}\log\eps^{\rm{NS}}(e^{nr}, \cW^{\otimes n}\otimes \cI_2) &\le   - \sup_{p\in \cP(\cX)}\inf_{\sigma\in \cS(\cH)}    \frac{1-\alpha}{\alpha} (D_{\alpha}(p \circ \cW\| p\otimes  \sigma)-r). 
\end{align}
\end{proof}


\subsection{Converse}
\label{sec-NS-exp-cvs}

In this section we prove the converse result of Theorem \ref{prop-NS-exponent} which is known in the literature under the name of sphere-packing bound (SPB) \cite{Dalai2013Sep,Cheng2019Jan}. To this end, we prove a lower bound on the activated non-signalling error probability.

\begin{prop}\label{prop-exp-cvs}
Let $\cW$ be a classical-quantum channel with capacity $C(\cW)$ and zero R\'enyi capacity $C_0(\cW)$, and $ r\in (C_0(\cW), C(\cW))$. Then, there exists a constant $A>0$ and an integer $N_0$ depending only in $r$ such that for all $n\ge N_0$,
\begin{align}
\eps^{\rm{NS}}(e^{nr}, \cW^{\otimes n}\otimes \cI_2 ) \ge     \inf_{\alpha \in(0,1]}  \inf_{p\in \cP(\cX)} \sup_{\sigma \in \cS(\cH)} \exp(-A\sqrt{n})\exp\Big( -  n\frac{1-\alpha}{\alpha} [\exs{x\sim p}{D_{\alpha}( W_x\| \sigma )}-r]\Big).
\end{align}
This further implies that
\begin{align}
E^{\rm{ANS}}(r)\le  \sup_{\alpha \in(0,1]} \sup_{p\in \cP(\cX)} \inf_{\sigma \in \cS(\cH)} \frac{1-\alpha }{\alpha}\Big( \exs{x\sim p}{D_{\alpha}( W_x\| \sigma )}-r\Big). 
\end{align}
\end{prop}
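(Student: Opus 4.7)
\emph{Reduction to the meta-converse.} The plan is to reduce to the meta-converse, apply a sharp hypothesis-testing converse, and single-letterise the resulting divergence. By \eqref{NS-MC-inequ}, $\eps^{\rm NS}(e^{nr},\cW^{\otimes n}\otimes \cI_2)=\eps^{\rm MC}(e^{nr}/2,\cW^{\otimes n})$, and the $\log 2/n$ rate shift is absorbed into the eventual $A\sqrt n$ term. Invoking the primal formulation \eqref{MC} and restricting the inner supremum to product auxiliaries $\sigma_n=\sigma^{\otimes n}$ (a valid relaxation since supremising over a smaller set decreases the value and leaves $\eps^{\rm MC}\geq \cdot$ intact) gives
\begin{equation*}
\eps^{\rm MC}(e^{nr},\cW^{\otimes n})\;\geq\;\inf_{p_n\in\cP(\cX^n)}\sup_{\sigma\in\cS(\cH)} \inf_{0\mle O\mle \dI}\Big\{\tr{(p_n\circ \cW^{\otimes n})(\dI-O)}\,:\,\tr{(p_n\otimes\sigma^{\otimes n})O}\leq e^{-nr}\Big\}.
\end{equation*}

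\emph{Converse Hoeffding bound.} The technical core is the following one-shot quantum converse to Hoeffding's theorem: for any states $\rho,\tau$ on $\cH^{\otimes n}$, any $M>0$, and any $\alpha\in(0,1)$,
\begin{equation*}
\inf_{0\mle O\mle \dI}\Big\{\tr{\rho(\dI-O)}\,:\,\tr{\tau O}\leq 1/M\Big\}\;\geq\;\exp\!\Big(\!-A'\sqrt n-\tfrac{1-\alpha}{\alpha}\big[D_\alpha(\rho\|\tau)-\log M\big]\Big),
\end{equation*}
with $A'$ uniform for $\alpha$ in any closed subinterval of $(0,1)$. This bound is precisely what the sphere-packing analyses of \cite{Dalai2013Sep,Cheng2019Jan} supply; it is proved by combining data processing for the measured R\'enyi divergence with an $O(\sqrt n)$ estimate of its deviation from the Petz R\'enyi divergence.

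\emph{Single-letterisation via Jensen.} Direct computation of the Petz R\'enyi divergence on the tensor structure gives
\begin{equation*}
D_\alpha(p_n\circ\cW^{\otimes n}\,\|\,p_n\otimes\sigma^{\otimes n}) = \tfrac{1}{\alpha-1}\log \mathbb{E}_{x^n\sim p_n}\!\Big[\exp\!\big((\alpha-1)\textstyle\sum_{i=1}^n D_\alpha(W_{x_i}\|\sigma)\big)\Big] \;\leq\; n\,\exs{x\sim\bar p_n}{D_\alpha(W_x\|\sigma)},
\end{equation*}
by Jensen's inequality applied to the convex exponential (note $1/(\alpha-1)<0$), where $\bar p_n\in\cP(\cX)$ is the averaged marginal of $p_n$.

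\emph{Combining and concluding.} Plugging the Jensen bound into the Hoeffding converse at $\log M=nr$ (the negative factor $-(1-\alpha)/\alpha$ preserves the inequality) yields
\begin{equation*}
\eps^{\rm MC}(e^{nr},\cW^{\otimes n})\;\geq\; \inf_{p_n}\sup_\sigma\exp\!\Big(\!-A\sqrt n-n\tfrac{1-\alpha}{\alpha}\big[\exs{x\sim\bar p_n}{D_\alpha(W_x\|\sigma)}-r\big]\Big).
\end{equation*}
Since every $p\in\cP(\cX)$ arises as $\bar p_n$ for the product choice $p_n=p^{\otimes n}$, the outer infimum reduces to $\inf_{p\in\cP(\cX)}$; taking the infimum over $\alpha\in(0,1]$ (with $\alpha=1$ a limit giving a trivial factor) yields the claimed non-asymptotic inequality. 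The asymptotic statement then follows by taking $-\tfrac1n\log$ and using $\sup_p\inf_\sigma \exs{x\sim p}{D_\alpha(W_x\|\sigma)}=C_\alpha(\cW)$. The key technical challenge is the quantum Hoeffding converse with the sharp $(1-\alpha)/\alpha$ prefactor uniform in $\alpha$ (restricted, via $r\in(C_0(\cW),C(\cW))$, to an interval bounded away from $0$ and $1$); this is precisely the point where the \emph{Petz} R\'enyi divergence enters, as opposed to the sandwiched one appearing in entanglement-assisted achievability~\cite{Beigi2023Oct}.
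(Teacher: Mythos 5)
Your overall plan---dual/meta-converse reduction, hypothesis-testing converse, then single-letterisation---is the right shape, and your Jensen inequality
\[
D_\alpha\bigl(p_n\circ\cW^{\otimes n}\,\big\|\,p_n\otimes\sigma^{\otimes n}\bigr)\;\le\;n\,\exs{x\sim\bar p_n}{D_\alpha(W_x\|\sigma)}
\]
is correct and pointed in the right direction. However, there is a genuine gap in the step you label the ``technical core'': the claimed one-shot Hoeffding converse
\[
\inf_{0\mle O\mle \dI}\Bigl\{\tr{\rho(\dI-O)}\;:\;\tr{\tau O}\le \tfrac1M\Bigr\}\;\ge\;\exp\!\Bigl(-A'\sqrt n-\tfrac{1-\alpha}{\alpha}\bigl[D_\alpha(\rho\|\tau)-\log M\bigr]\Bigr)
\]
is asserted for \emph{arbitrary} $\rho,\tau$ on $\cH^{\otimes n}$ with $A'$ depending only on a closed subinterval of $\alpha$. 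This is not what \cite{Dalai2013Sep,Cheng2019Jan} provide. Their $O(\sqrt n)$ correction comes from a Chebyshev/Berry--Esseen concentration of the (tilted) Nussbaum--Szko\l a log-likelihood ratio, and this concentration requires the variance of that ratio to be $O(n)$, which holds for product/iid pairs but can be $O(n^2)$ for mixtures. For $\rho=p_n\circ\cW^{\otimes n}$ with $p_n$ spread over types $T$ having heterogeneous $\exs{x\sim T}{D_\alpha(W_x\|\sigma)}$, the conditional mean of the log-likelihood ratio varies by $O(n)$ across blocks, so the $\sqrt n$ error term is not justified. Note also that for such mixtures $D_\alpha(\rho\|\tau)$ is governed by the \emph{smallest}-divergence type in the support of $p_n$ (because $\alpha-1<0$), which is strictly below the Jensen average $n\exs{x\sim\bar p_n}{D_\alpha}$; this gap is harmless for the direction of the inequality, but it highlights that the object you are feeding into the Hoeffding converse is not the one for which that converse has been proved.

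The paper resolves exactly this issue by reversing the order of operations: Lemma \ref{lem:constant-composition} first reduces, at a polynomial cost $\gamma_n = \tfrac12(n+1)^{-|\cX|}$ that is absorbed into the $\exp(-A\sqrt n)$ factor, the optimisation over $p_n$ to a single type $T$ (using permutation invariance and the pigeonhole on the $\le(n+1)^{|\cX|}$ types). After that reduction, $\rho=W_T^{\otimes n}$ and $\tau=\sigma^{\otimes n}$ are genuine product states, and the Nussbaum--Szko\l a/Chebyshev argument from \cite{Blahut1974Jul,Audenaert2008Apr,Cheng2019Jan} applies cleanly to give the $O(\sqrt n)$ correction (together with the saddle-point result \cite[Proposition 3]{Cheng2019Jan} and a careful argument that the optimal $\alpha$ is bounded away from $0$ because $r>C_0(\cW)$). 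If you want to keep your Jensen-based single-letterisation, you must still insert the type reduction \emph{before} invoking the hypothesis-testing converse, at which point the Jensen step becomes unnecessary (for a single type, $D_\alpha(W_T^{\otimes n}\|\sigma^{\otimes n})=n\exs{x\sim T}{D_\alpha(W_x\|\sigma)}$ exactly). So either you restore the constant-composition reduction, or you need to supply a proof of the Hoeffding converse valid for the non-product $\rho$ you are using---and I do not believe such a bound with $O(\sqrt n)$ error holds.
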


In order to prove this result we first show a reduction to probability distributions uniformly supported on one type. This lemma is similar to the reduction  to constant composition codes for plain strategies.\footnote{For plain coding, a similar reduction is done in the literature (see e.g., \cite{Csiszar2014Jul,Dalai2013Sep,Cheng2019Jan}), so that one can focus on proving the SPB on constant composition codes\,--\,that consist of codewords with the same type.}

\begin{lem}\label{lem:constant-composition}
Let $\cX$ be a finite alphabet and  $\cW = \{W_x\}_{x\in \cX}$ be a classical-quantum channel. Let $n, M$ be integers and $ r\in (C_0(\cW), C(\cW))$. Let $\cT_n(\cX)$ be the set of types of alphabet $\cX$ and length $n$ and denote $\gamma_n = \tfrac{1}{2}(n+1)^{-|\cX|} $. The activated non-signaling error probability for the channel $\cW$ is lower bounded as
\begin{align}
\eps^{\rm{NS}}(M, \cW^{\otimes n} \otimes \cI_2)  &\ge   \inf_{T\in \cT_n(\cX)} \sup_{\sigma \in \cS(\cH) ,\, s\ge 0} \gamma_n\tr{W_{T}^{\otimes n}\wedge Ms\sigma^{\otimes n}}-s,
\end{align}
where $W_{T}^{\otimes n} = \bigotimes_{x\in \cX}W_x^{\otimes nT_x}$.  
\end{lem}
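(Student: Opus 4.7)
The plan is to combine the dual SDP of Proposition \ref{prop-dual} with the permutation invariance of IID channels. Applying \eqref{NS-MC-inequ} to $\cW^{\otimes n}$ and then the dual formula \eqref{ANS-dual}, for $M$ even (the general case follows by monotonicity of $\eps^{\rm{NS}}$ in $M$) I would start from
\begin{align}
\eps^{\rm{NS}}(M, \cW^{\otimes n} \otimes \cI_2) = \eps^{\rm{MC}}(M/2, \cW^{\otimes n}) = \inf_{p \in \cP(\cX^n)} \sup_{B \mge 0} \Bigg[\sum_{x^n \in \cX^n} p(x^n) \tr{W_{x^n}^{\otimes n} \wedge \tfrac{M}{2} B} - \tr{B}\Bigg],
\end{align}
where $W_{x^n}^{\otimes n} := W_{x_1} \otimes \cdots \otimes W_{x_n}$. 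Restricting the inner supremum to $B = 2s\sigma^{\otimes n}$ with $\sigma \in \cS(\cH)$ and $s \ge 0$ (so that $\tr{B}=2s$ and $\tfrac{M}{2} B = M s \sigma^{\otimes n}$) yields the lower bound
\begin{align}
\eps^{\rm{NS}}(M, \cW^{\otimes n} \otimes \cI_2) \ge \inf_{p} \sup_{\sigma,\, s \ge 0} \Bigg[\sum_{x^n} p(x^n) \tr{W_{x^n}^{\otimes n} \wedge M s \sigma^{\otimes n}} - 2s\Bigg].
\end{align}

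The key observation is that $\sigma^{\otimes n}$ is invariant under tensor-factor permutations: for the permutation unitary $U_\pi$ we have $U_\pi \sigma^{\otimes n} U_\pi^{\dagger} = \sigma^{\otimes n}$ while $U_\pi W_{x^n}^{\otimes n} U_\pi^{\dagger} = W_{\pi \cdot x^n}^{\otimes n}$. Since both $\wedge$ and $\tr$ are invariant under simultaneous unitary conjugation, the quantity $\tr{W_{x^n}^{\otimes n} \wedge M s \sigma^{\otimes n}}$ depends on $x^n$ only through its type $T(x^n) \in \cT_n(\cX)$, and equals $\tr{W_T^{\otimes n} \wedge M s \sigma^{\otimes n}}$ with $W_T^{\otimes n} = \bigotimes_{x} W_x^{\otimes n T_x}$. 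Grouping codewords by type, the sum becomes $\sum_T q(T)\tr{W_T^{\otimes n} \wedge M s \sigma^{\otimes n}}$ where $q(T) := \sum_{x^n : T(x^n) = T} p(x^n)$ is the induced type distribution.

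Now I would exploit the polynomial bound $|\cT_n(\cX)| \le (n+1)^{|\cX|}$: by pigeonhole, every $p$ admits a dominant type $T^*(p)$ with $q(T^*(p)) \ge (n+1)^{-|\cX|} = 2\gamma_n$. Since $A \wedge B \mge 0$ whenever $A, B \mge 0$, every summand is non-negative, so discarding all but the dominant type yields
\begin{align}
\sum_T q(T)\tr{W_T^{\otimes n} \wedge M s \sigma^{\otimes n}} - 2s \ge 2\Big[\gamma_n \tr{W_{T^*(p)}^{\otimes n} \wedge M s \sigma^{\otimes n}} - s\Big].
\end{align}
Taking $\sup_{\sigma,s}$ followed by $\inf_p$, and noting that $T^*(p)$ ranges over all types as $p$ varies, gives
\begin{align}
\eps^{\rm{NS}}(M, \cW^{\otimes n} \otimes \cI_2) \ge 2 \inf_{T \in \cT_n(\cX)} \sup_{\sigma \in \cS(\cH),\, s \ge 0} \Big[\gamma_n \tr{W_T^{\otimes n} \wedge M s \sigma^{\otimes n}} - s\Big].
\end{align}

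Finally, since the inner expression is non-negative (choose $s = 0$), the prefactor $2$ may be dropped to match the claimed bound. The main technical step is the symmetry argument reducing arbitrary codewords to constant-composition ones; the factor-$\tfrac{1}{2}$ arising from the $\cI_2$ activation is absorbed cleanly by the rescaling $B = 2s\sigma^{\otimes n}$ in the first step, so no further bookkeeping is needed.
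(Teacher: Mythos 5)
Your proof is correct and follows essentially the same route as the paper's: dualize via Proposition~\ref{prop-dual}, restrict to a product reference $\sigma^{\otimes n}$, reduce to a single type, and apply the pigeonhole bound on $|\cT_n(\cX)|$. The minor stylistic difference is that you bypass Lemma~\ref{lem-perm-inv}: instead of first arguing by convexity that the optimal $p$ may be taken permutation invariant and then decomposing $p=\sum_T\alpha_T\cU_T$, you observe directly that $\tr{W_{x^n}^{\otimes n}\wedge Ms\sigma^{\otimes n}}$ is a function of the type of $x^n$ alone (permutation covariance of $\wedge$ together with invariance of $\sigma^{\otimes n}$), which lets you group the sum over codewords by type for an arbitrary $p$ and pigeonhole immediately; the paper uses the same symmetry implicitly when it collapses each $\cU_T$-average to $\tr{W_T^{\otimes n}\wedge Ms\sigma^{\otimes n}}$. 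Finally, the parity caveat for $M$ is not needed: the programs \eqref{NS-program} and \eqref{NS-program-le} and their duals are well-defined SDPs for any real $M\ge 1$, so Proposition~\ref{prop-dual} and \eqref{NS-MC-inequ} apply verbatim with $M/2$ in place of $M$ without appealing to monotonicity.
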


\begin{proof}
From Proposition \ref{prop-dual} we have that
\begin{align}
\eps^{\rm{NS}}(M, \cW^{\otimes n} \otimes \cI_2) &= \inf_{p\in \cP(\cX)} \sup_{B\mge 0} \sum_{x \in \cX} p(x) \tr{W_x\wedge \tfrac{M}{2}B}-\tr{B}
\\&=  \inf_{p\in \cP(\cX)} \sup_{\sigma \in \cS(\cH) , s\ge 0} \sum_{x \in \cX} p(x) \tr{W_x\wedge \tfrac{M}{2}s\sigma}-s,
\end{align}
where we write $B\mge 0$ as $B= s\sigma$ with $s=\tr B\ge 0$ and $\sigma \in \cS(\cH)$ in the last equality. Now, in the i.i.d.\ setting we consider $n$ uses of the channel $\cW$ and want to send $M=e^{nr}$ messages. Using the notation $x_1^n = x_1 x_2 \cdots x_n$ and $W_{x_1^n}^{\otimes n} =  W_{x_1} \otimes W_{x_2} \otimes  \cdots \otimes W_{x_n}$, we have that 
\begin{align}
\eps^{\rm{NS}}( M, \cW^{\otimes n}\otimes \cI_2)  &=\inf_{p\in \cP(\cX^n)} \sup_{\sigma^n \in \cS(\cH^{\otimes n}), \,  s\ge 0} \sum_{x_1^n\in \cX^n} p(x_1^n) \tr{W_{x_1^n}^{\otimes n}\wedge \tfrac{M}{2}s\sigma^n}-s
\\ &\ge \inf_{p\in \cP(\cX^n)} \sup_{\sigma \in \cS(\cH), \, s\ge 0} \sum_{x_1^n\in \cX^n} p(x_1^n) \frac{1}{2}\tr{W_{x_1^n}^{\otimes n}\wedge Ms\sigma^{\otimes n}}-s.
\end{align}
The objective function in $p$ of the last expression is invariant under the permutation of $X_1 X_2\cdots X_n$ and convex in $p$ (as a supremum of affine functions in $p$) so we may assume that the optimal  $p$ is permutation invariant (Lemma \ref{lem-perm-inv}).  Let $\cT_n(\cX)$ be the set of types of alphabet $\cX$ and length $n$ (we refer to Appendix \ref{sec-types} for a brief introduction to the method of types). A permutation invariant probability distribution $p$ can be written as $p=\sum_{T\in \cT_n(\cX)} \alpha_T \cU_T$, where $(\alpha_T)_{T\in \cT_n(\cX)}$ is a probability distribution on $\cT_n(\cX)$ and $\cU_T$ is the uniform probability distribution supported on $T$. Since $|\cT_n(\cX)|\le (n+1)^{|\cX|}$, there is a type $T(p)$ such that $\alpha_{T(p)} \ge (n+1)^{-|\cX|} $. Hence using the notation $W_{T}^{\otimes n}=W_{x_1^n}^{\otimes n} $ for an arbitrary $x_1^n$ of type $T$, we have that
\begin{align}
\eps^{\rm{NS}}( M, \cW^{\otimes n}\otimes \cI_2)   &\ge \inf_{p\in \cP(\cX^n)} \sup_{\sigma \in \cS(\cH),\, s\ge 0} \sum_{T\in \cT_n(\cX)} \alpha_T \frac{1}{2}\tr{W_{T}^{\otimes n}\wedge Ms\sigma^{\otimes n}}-s
\\&\ge   \inf_{p\in \cP(\cX^n)} \sup_{\sigma \in \cS(\cH), \, s\ge 0}  (n+1)^{-|\cX|}  \frac{1}{2}  \tr{W_{T(p)}^{\otimes n}\wedge Ms\sigma^{\otimes n}}-s
\\&\ge  \inf_{p\in \cP(\cX^n)}  \inf_{T\in \cT_n(\cX)} \sup_{\sigma \in \cS(\cH),\, s\ge 0} \gamma_n \tr{W_{T}^{\otimes n}\wedge Ms\sigma^{\otimes n}}-s
\\&=   \inf_{T\in \cT_n(\cX)} \sup_{\sigma \in \cS(\cH) ,\, s\ge 0} \gamma_n\tr{W_{T}^{\otimes n}\wedge Ms\sigma^{\otimes n}}-s \label{eq-s},
\end{align}
where $\gamma_n = \tfrac{1}{2} (n+1)^{-|\cX|}$. 
\end{proof}

Note that once we reduce the problem to minimizing over probability distributions uniformly supported on one type (Lemma \ref{lem:constant-composition}), the proofs of \cite{Dalai2013Sep,Cheng2019Jan} apply even though they were claimed for only plain strategies. This is because these proofs are based on bounding the worst case error probability $\eps_{\max}$ by a hypothesis testing divergence (see e.g., \cite[Proposition 10]{Cheng2019Jan}), which can be related to the meta-converse for any coding scheme $\cC=(\cE, \cD)$ as\footnote{Translating a converse bound for the worst case error probability to a converse bound on the average case error probability is standard in coding theory.}
\begin{align}
\eps_{\max}(\cC, M, \cW) =\sup_{x\in \cE} \eps(x)  \ge \sup_{\sigma\in \cS(\cH)}\inf_{x \in \cE} \inf_{0\mle O \mle \dI} \Big\{\tr{W_x(\dI-O)} \;\Big|\; \tr{\sigma O}\le \tfrac{1}{M}\Big\}.
\end{align}

In Appendix \ref{app-spb}, we nevertheless provide a self-contained proof of the SPB (Proposition \ref{prop-exp-cvs}), using the dual perspective \eqref{ANS-dual}, based on a Blahut method \cite{Blahut1974Jul} and results from \cite{Audenaert2008Apr,Cheng2019Jan}. This proof has the advantage of not relying on large deviation techniques and could be of independent interest.


\section{On activation and the need for a critical rate}
\label{sec:improved rounding}

\subsection{Approach}

In this section we prove an achievability bound on the non-signaling error probability without activation. In contrast to the setting where activation is allowed (Proposition \ref{prop-exp-ach}), we are only able to get the error exponent termed with the Petz R\'enyi divergence under some assumptions: $(a)$ coding over channels with positive non-signaling assisted zero-error capacity or $(b)$ coding over general channels above a certain critical rate. The first observation is that if we can use a sub-linear number of copies of the channel to send noiselessly one bit then activation is not needed for error exponents as it can be simulated without incurring a significant loss in the rate transmission. We discuss this observation in more details in Section \ref{sec:zero-error}. The second attempt to go beyond the activation assumption uses a rounding result on the non-activated non-signalling error probability, as well as an inequality between Petz and sandwiched R\'enyi divergences (Section \ref{sec:critical-rate}).  


\subsection{Coding over channels with positive zero-error capacity}\label{sec:zero-error}

The non-signaling capability $M^{\rm{NS}}(\cW)$ of a classical-quantum channel $\cW$ is the largest number of messages that can be transmitted through a channel with zero-error and when non-signaling correlations are available \cite{Duan2015Dec}. The non-signaling zero-error classical capacity is then defined as the optimal rate of the transmitted information over iid channels with non-signaling strategies achieving zero-error
\begin{equation}
C^{\rm{NS}}_0(\cW) = \lim_{n\rightarrow \infty} \frac{1}{n}\log M^{\rm{NS}}(\cW^{\otimes n}).
\end{equation}
The zero-error non-signalling assisted capacity is pleasantly quantified by $C^{\rm{NS}}_0(\cW)=C_0(\cW)$ \cite{Duan2015Dec}. In this section, we consider a channel $\cW$ with positive zero-error classical capacity and show that, in this case, activation is not needed when considering non-signaling error exponents.

\begin{prop}
Let $\cW$ be a classical-quantum channel with $C_0(\cW)>0$, and $r\ge 0$. Then, we have that
\begin{align}
E^{\rm{NS}}(r)= E^{\rm{ANS}}(r). 
\end{align}
\end{prop}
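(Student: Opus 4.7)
The plan is to prove the two directions separately. The easy inequality $E^{\rm{NS}}(r)\le E^{\rm{ANS}}(r)$ follows from the fact that $\cI_2$ is a free resource: any NS-assisted code on $\cW^{\otimes n}$ lifts trivially to one on $\cW^{\otimes n}\otimes \cI_2$ with identical rate and error by ignoring the perfect bit channel, so $\eps^{\rm{NS}}(e^{nr},\cW^{\otimes n})\ge \eps^{\rm{NS}}(e^{nr},\cW^{\otimes n}\otimes \cI_2)$; taking $-\tfrac{1}{n}\log$ and passing to the limit gives $\limsup_n -\tfrac{1}{n}\log\eps^{\rm{NS}}(e^{nr},\cW^{\otimes n})\le E^{\rm{ANS}}(r)$.

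For the reverse direction the idea is to spend a constant number $k$ of the $n$ available channel uses to NS-simulate $\cI_2$. By the Duan--Winter equality \cite{Duan2015Dec} we have $C_0(\cW)=C_0^{\rm{NS}}(\cW)>0$, so $\tfrac{1}{n}\log M^{\rm{NS}}(\cW^{\otimes n})$ tends to a positive limit and hence there exists a finite $k=k(\cW)$ with $M^{\rm{NS}}(\cW^{\otimes k})\ge 2$; unpacked at the level of the SDP \eqref{NS-program}, this amounts to the existence of a non-signaling super-channel $\Pi^{(k)}$ with $\Pi^{(k)}(\cW^{\otimes k})=\cI_2$. Composing $\Pi^{(k)}$ on the last $k$ tensor factors of $\cW^{\otimes n}$ with any NS-assisted code for $\cW^{\otimes(n-k)}\otimes \cI_2$, and using closure of NS strategies under composition, yields an NS-assisted code for $\cW^{\otimes n}$ with the same message count and the same error probability. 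This gives the one-shot simulation bound
\begin{equation*}
\eps^{\rm{NS}}(M,\cW^{\otimes n})\;\le\;\eps^{\rm{NS}}\bigl(M,\cW^{\otimes(n-k)}\otimes \cI_2\bigr)\qquad\forall\, n\ge k,\; M\ge 1.
\end{equation*}

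Setting $M=e^{nr}$, the right-hand side is an activated non-signaling error probability carrying $e^{nr}=e^{(n-k)r'(n)}$ messages through $n-k$ copies of $\cW$, with effective rate $r'(n):=\tfrac{nr}{n-k}\to r^+$. For any $\eta>0$ and $n$ large enough that $r'(n)\le r+\eta$, monotonicity of $\eps^{\rm{NS}}$ in the number of messages gives $\eps^{\rm{NS}}(e^{nr},\cW^{\otimes(n-k)}\otimes \cI_2)\le \eps^{\rm{NS}}(e^{(n-k)(r+\eta)},\cW^{\otimes(n-k)}\otimes \cI_2)$; combining with the one-shot bound, taking $-\tfrac{1}{n}\log$, and factoring out $\tfrac{n-k}{n}\to 1$, Theorem~\ref{prop-NS-exponent} applied to $\cW$ at rate $r+\eta$ yields
\begin{equation*}
\liminf_{n\to\infty}-\tfrac{1}{n}\log\eps^{\rm{NS}}(e^{nr},\cW^{\otimes n})\;\ge\;E^{\rm{ANS}}(r+\eta)\qquad\forall\,\eta>0.
\end{equation*}
Sending $\eta\to 0^+$ and using continuity of $r\mapsto E^{\rm{ANS}}(r)=\sup_{\alpha\in(0,1]}\tfrac{1-\alpha}{\alpha}(C_\alpha(\cW)-r)$ on $(C_0(\cW),C(\cW))$ (a pointwise supremum of affine functions in $r$, hence convex and continuous on the interior of its finiteness region) completes the reverse inequality, and combined with the easy direction we conclude $E^{\rm{NS}}(r)=E^{\rm{ANS}}(r)$.

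The main subtlety is that I cannot use any a priori continuity of $E^{\rm{NS}}$ itself, whose precise value is exactly what the proposition asserts; in particular a naive argument that bounds $E^{\rm{ANS}}(r)\le E^{\rm{NS}}(r-\delta)$ and lets $\delta\to 0^+$ would only control the left limit of $E^{\rm{NS}}$ at $r$. The route above sidesteps this by pushing the limit through the activated quantity, where Theorem~\ref{prop-NS-exponent} supplies a closed form and right-continuity of $E^{\rm{ANS}}$ at $r$ is enough to absorb the vanishing $O(1/n)$ rate overhead of the simulation. If $C_0(\cW)=0$ the constant $k$ is unavailable and this reduction fails, which is exactly why the positive-capacity hypothesis appears and why the complementary critical-rate approach of Section~\ref{sec:critical-rate} is needed in general.
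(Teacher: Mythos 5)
Your proof is correct and takes essentially the same route as the paper: both hinge on the observation that $C_0^{\rm{NS}}(\cW)=C_0(\cW)>0$ supplies a finite $k$ with $M^{\rm{NS}}(\cW^{\otimes k})\ge 2$, giving the composition inequality $\eps^{\rm{NS}}(M,\cW^{\otimes n})\le \eps^{\rm{NS}}(M,\cW^{\otimes(n-k)}\otimes \cI_2)$. The only difference is in the bookkeeping at the end: the paper plugs $n-k$ copies into the explicit non-asymptotic bounds of Propositions \ref{prop-exp-ach} and \ref{prop-exp-cvs} and lets the $O(k/n)$ correction vanish directly, whereas you invoke Theorem \ref{prop-NS-exponent} as a black box at rate $r+\eta$ and absorb the overhead through right-continuity of $r\mapsto E^{\rm{ANS}}(r)$; both are sound, and your careful note that one cannot appeal to continuity of $E^{\rm{NS}}$ itself is well taken.
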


\begin{proof}
If $\cW$ is a channel with positive zero-error capacity then there is $n_0\in \mathbb{N}$ such that $M^{\rm{NS}}(\cW^{\otimes n_0})\ge 2$ which means $n_0$ copies of the channel $\cW$, assisted by non-signaling correlations, can be used to transmit one bit of information with zero error. Now, for an integer $n\ge n_0$, given a non-signaling super channel $\Pi^n$ that achieves the activated non-signaling error probability $\eps^{\rm{NS}}(e^{nr}, \cW^{\otimes n-n_0}\otimes \cI_2)$,  (see \eqref{eq:eps-NS-def}), i.e.
\begin{align}
1-\eps^{\rm{NS}}(e^{nr}, \cW^{\otimes n-n_0}\otimes \cI_2) &=   \frac{1}{e^{nr}} \sum_{m=1}^{e^{nr}} \bra{m}\Pi^n (\cW^{\otimes n-n_0}\otimes \cI_2) (\proj{m}) \ket{m}.
\end{align}
We want to construct a super channel $\widetilde{\Pi}^n$ that takes into input only iid copies of $\cW$, i.e., without the perfect one bit channel $\cI_2$. To this end, we use $n_0$ copies of $\cW$ to simulate $\cI_2$, we call this non-signaling super channel $\Pi^{n_0}_{\rm{sim}} : \cW^{\otimes n_0}\mapsto \cI_2$. We then have 
\begin{align}
1-\eps^{\rm{NS}}(e^{nr}, \cW^{\otimes n}) &\ge    \frac{1}{e^{nr}} \sum_{m=1}^{e^{nr}} \bra{m}\Pi^n\circ  (\id^{n-n_0}\otimes \Pi^{n_0}_{\rm{sim}}) (\cW^{\otimes n}) (\proj{m}) \ket{m}
\\&=  \frac{1}{e^{nr}} \sum_{m=1}^{e^{nr}} \bra{m}\Pi^n (\cW^{\otimes n-n_0}\otimes \cI_2) (\proj{m}) \ket{m}
\\&= 1-\eps^{\rm{NS}}(e^{nr}, \cW^{\otimes n-n_0}\otimes \cI_2).
\end{align}
Hence, we get
\begin{align}
\eps^{\rm{NS}}(e^{nr}, \cW^{\otimes n}\otimes \cI_2)\le \eps^{\rm{NS}}(e^{nr}, \cW^{\otimes n})\le \eps^{\rm{NS}}(e^{nr}, \cW^{\otimes n-n_0}\otimes \cI_2).
\end{align}
From Proposition \ref{prop-exp-ach},  we have that 
\begin{align}
\eps^{\rm{NS}}(e^{nr}, \cW^{\otimes n}) &\le\eps^{\rm{NS}}(e^{nr}, \cW^{\otimes n-n_0}\otimes \cI_2)
\\&\le \inf_{\alpha \in(0,1]}  \inf_{p\in \cP(\cX)}\sup_{\sigma\in \cS(\cH)}   \exp\Big( nr\frac{1-\alpha}{\alpha}-(n-n_0)\frac{1-\alpha}{\alpha} \exs{x\sim p}{D_{\alpha}( W_x\| \sigma )}\Big). \label{eq:ach-zero-error-cap}
\end{align}
Similarly, from Proposition  \ref{prop-exp-cvs} we have that for a constant $A>0$ that
\begin{align}
\eps^{\rm{NS}}(e^{nr}, \cW^{\otimes n})&\ge  \eps^{\rm{NS}}(e^{nr}, \cW^{\otimes n}\otimes \cI_2 ) \\&\ge  \inf_{\alpha \in(0,1]}  \inf_{p\in \cP(\cX)} \sup_{\sigma \in \cS(\cH)} \exp(-
A\sqrt{n})\exp\Big( -  n\frac{1-\alpha}{\alpha} [\exs{x\sim p}{D_{\alpha}( W_x\| \sigma )}-r]\Big). \label{eq:cvs-zero-error-cap}
\end{align}
Finally, from \eqref{eq:ach-zero-error-cap} and \eqref{eq:cvs-zero-error-cap} we conclude that for a channel $\cW$ with positive zero-error capacity
\begin{align}
\lim_{n\rightarrow \infty } - \frac{1}{n}\log \eps^{\rm{NS}}(e^{nr}, \cW^{\otimes n})  = \sup_{\alpha \in(0,1]} \sup_{p\in \cP(\cX)} \inf_{ \sigma\in \cS(\cH)} \frac{1-\alpha}{\alpha}\left( \exs{x\sim p}{D_{\alpha}( W_x\| \sigma )}-r\right). 
\end{align}
\end{proof}


\subsection{Coding above a critical rate}
\label{sec:critical-rate}

In this section, we show that activation is not necessary for optimal coding with non-signaling correlations above a certain critical rate.

\begin{prop}\label{prop-exp-ach-w/-activ}
Let $\cW$ be a classical-quantum channel and $M\in\mathbb{N}$. Then, we have that
\begin{align}
\eps^{\rm{NS}}(M,\cW)&\leq\inf_{\alpha\in (0,1)} \inf_{p\in \cP(\cX)} \sup_{ \sigma\in \cS(\cH)}\Bigg\{(2M)^{\frac{1-\alpha}{\alpha}} \exp\left( -\frac{1-\alpha}{\alpha}D_{\alpha}(p\circ \cW\| p\otimes \sigma)\right)\nonumber
\\&\qquad\qquad\qquad\qquad\qquad\quad + \frac{2}{M}\exp\left( -(1-\alpha)D_{\alpha}(p\circ \cW\| p\otimes \sigma)\right)\Bigg\}.\label{eq:second-part}
\end{align}
Moreover, this implies for $r > r'_c:= \sup_{\alpha\in (0,1)}  (1-\alpha)^2 C_\alpha(\cW)$ that
\begin{align}\label{eq:NS=ANS-above-rate}
E^{\rm{NS}}(r)=E^{\rm{ANS}}(r).
\end{align}
\end{prop}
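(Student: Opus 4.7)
The plan is to start from the dual formulation \eqref{NS-dual} of the non-activated non-signaling error probability, viewing it as an enlargement of the meta-converse dual \eqref{ANS-dual}: the only structural difference is that the auxiliary matrix $B$ ranges over all Hermitian matrices rather than over positive semi-definite ones. This suggests decomposing $B = B_+ - B_-$ and handling the two pieces separately. The positive piece should reproduce the activated/meta-converse achievability bound and yield Term~1 of \eqref{eq:second-part}, while the negative piece absorbs the extra ``cost'' arising from the equality $\sum_x \Lambda_x = \dI$ in the non-activated primal and produces Term~2.

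For Step~1, the one-shot bound \eqref{eq:second-part}, I would fix $p\in \cP(\cX)$, parametrize $B_+ = s\sigma$ with $s = \tr{B_+}\ge 0$ and $\sigma \in \cS(\cH)$, and treat the $B_+$-contribution exactly as in the proof of Proposition~\ref{prop-exp-ach}: concavity of $\tr{\cdot \wedge \cdot}$ (Lemma \ref{lem:concavity of min}), Audenaert's bound \eqref{Chernoff}, and Young's inequality (Lemma \ref{lem-young}) jointly extract Term~1 $=(2M)^{\frac{1-\alpha}{\alpha}}\exp(-\frac{1-\alpha}{\alpha}D_\alpha(p\circ\cW\|p\otimes\sigma))$. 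For the $B_-$-contribution I would use the one-line operator inequality $(X+Y)_+ \mle X_+ + Y$ valid for any Hermitian $X$ and $Y \mge 0$ (which follows from $X_+ + Y - (X+Y)_+ = X_- + (X+Y)_- \mge 0$). Applied with $X = W_x - MB_+$ and $Y = MB_-$, it gives $\tr{(W_x - MB)_+} \le \tr{(W_x - MB_+)_+} + M\tr{B_-}$, and combined with $\tr{B} = \tr{B_+} - \tr{B_-}$ reduces the residual problem to a scalar optimization over $t = \tr{B_-} \ge 0$. This optimization is \emph{linear} in $t$ rather than Young-type, which is what produces the exponent $(1-\alpha)$ instead of $(1-\alpha)/\alpha$ in Term~2 $= \frac{2}{M}\exp(-(1-\alpha)D_\alpha(p\circ\cW\|p\otimes\sigma))$, together with the prefactor $\frac{2}{M}$. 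Taking $\inf_\alpha \inf_p \sup_\sigma$ yields \eqref{eq:second-part}.

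For Step~2, the asymptotic conclusion \eqref{eq:NS=ANS-above-rate}, I would apply \eqref{eq:second-part} to $\cW^{\otimes n}$ with $M = e^{nr}$ and restrict the inner optimization to iid $p^{\otimes n}$ and $\sigma^{\otimes n}$. By additivity of the R\'enyi mutual information (Lemma \ref{app-lem-additivity}), $D_\alpha(p^{\otimes n}\circ\cW^{\otimes n}\|p^{\otimes n}\otimes\sigma^{\otimes n}) = n D_\alpha(p\circ\cW\|p\otimes\sigma)$. Since both terms in the bound are decreasing in $D_\alpha$, so is their sum, and the minimax identity $\sup_p \inf_\sigma D_\alpha(p\circ\cW\|p\otimes\sigma) = C_\alpha(\cW)$ collapses $\inf_p \sup_\sigma$ to $C_\alpha(\cW)$ in the exponents. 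Taking $-\frac{1}{n}\log$ and using $\log(a+b) \le \log 2 + \max(\log a, \log b)$, for every $\alpha \in (0,1)$ I get
\begin{align*}
E^{\rm{NS}}(r) \ge \min\bigl(E_1(\alpha,r), E_2(\alpha,r)\bigr),\quad E_1 = \tfrac{1-\alpha}{\alpha}(C_\alpha(\cW) - r),\ E_2 = r + (1-\alpha) C_\alpha(\cW).
\end{align*}
A short algebraic manipulation shows $E_2 \ge E_1 \iff r \ge (1-\alpha)^2 C_\alpha(\cW)$, so whenever $r > r'_c = \sup_{\alpha\in(0,1)}(1-\alpha)^2 C_\alpha(\cW)$ this inequality holds for every $\alpha$, and taking $\sup_\alpha$ gives $E^{\rm{NS}}(r) \ge \sup_\alpha E_1(\alpha,r) = E^{\rm{ANS}}(r)$. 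The reverse inequality $E^{\rm{NS}}(r) \le E^{\rm{ANS}}(r)$ is immediate from $\eps^{\rm{NS}}(M, \cW^{\otimes n}) \ge \eps^{\rm{NS}}(M, \cW^{\otimes n}\otimes \cI_2)$, so \eqref{eq:NS=ANS-above-rate} follows.

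The delicate part is Step~1. Restricting to $B \mge 0$ already recovers Term~1 but leaves no room to quantify the additional Hermitian freedom that distinguishes the non-activated case. Because $X \mapsto X_+$ is \emph{not} operator monotone, one cannot compare $\tr{W_x \wedge MB}$ and $\tr{W_x \wedge MB_+}$ cleanly in operator order, and the naive rounding from an optimal meta-converse solution only produces a $\frac{1}{M}$ correction rather than the desired tighter $\frac{2}{M}\exp(-(1-\alpha)D_\alpha)$. Landing on the exact constants in Term~2 will rely on doing the $B_-$ optimization at the trace level via $(X+Y)_+ \mle X_+ + Y$ for $Y\mge 0$ and using a linear (rather than Young-type) supremum over the scalar $\tr{B_-}$, carefully balanced against the $M\tr{B_-}$ cost so that the factor of $M$ is not paid twice.
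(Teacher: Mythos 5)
Your Step~2 is sound and essentially reproduces the paper's argument, including the algebraic observation that $E_2(\alpha,r)\ge E_1(\alpha,r)\iff r\ge(1-\alpha)^2C_\alpha(\cW)$. Step~1, however, has a genuine gap.

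The operator inequality $(X+Y)_+\mle X_+ + Y$ you invoke for Hermitian $X$ and $Y\mge 0$ is \emph{false}, and the one-line identity you use to justify it contains a sign error: the correct rearrangement is
\begin{align}
X_+ + Y - (X+Y)_+ = X_- - (X+Y)_-,
\end{align}
not $X_- + (X+Y)_-$ (your version would force $X+Y=|X+Y|$). The quantity $X_- - (X+Y)_-$ need not be positive semi-definite; for instance with
\begin{align}
X=\begin{pmatrix}1&0\\0&-1\end{pmatrix},\qquad Y=\begin{pmatrix}1/2&1/2\\1/2&1/2\end{pmatrix},
\end{align}
one checks that $X_+ + Y - (X+Y)_+$ has a negative diagonal entry. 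The trace version $\tr{(X+Y)_+}\le\tr{X_+}+\tr{Y}$ \emph{is} true, but it only yields $\tr{W_x\wedge MB}\ge\tr{W_x\wedge MB^+}-M\tr{B^-}$, a \emph{lower} bound, which is the wrong direction when you are upper-bounding $\sup_{B=B^\dagger}\{\sum_x p(x)\tr{W_x\wedge MB}-\tr B\}$. In that supremum the dangerous term is the leftover $+\tr{B^-}$, and nothing in your decomposition controls it: the naive bound $\tr{W_x\wedge MB}\le\tr{W_x\wedge MB^+}$ plus $-\tr{B^+}+\tr{B^-}$ leaves $\tr{B^-}$ unbounded. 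Your "linear optimization over $t=\tr{B^-}$" is therefore not a well-posed scalar problem with a finite optimum, and cannot produce the $\exp(-(1-\alpha)D_\alpha)$ factor.

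What is actually needed is Lemma~\ref{lem:rounding-MC-one-shot}: one first multiplies the dual objective by $M$, splits $M\tr{W_x\wedge MB}=(M-1)\tr{W_x\wedge MB}+\tr{W_x\wedge MB}$, and then uses the joint concavity of $(A,B)\mapsto\tr{A\wedge B}$ (Lemma~\ref{lem:concavity of min}) to absorb the now $M$-scaled term $M\tr{B^-}=\tr{MB^-\wedge MB^-}$ into the second $\wedge$, obtaining $\tr{(W_x+MB^-)\wedge MB^+}$. Dividing back by $M$ produces the $\frac{1}{M}$ prefactor of Term~2. From there, the $(1-\alpha)$ exponent in Term~2 does not come from linearity in $\tr{B^-}$ at all: Term~2 is first bounded by Audenaert's inequality with a separate parameter $\beta$, the $\tau=B^-$ contribution is dropped via Araki--Lieb--Thirring combined with $\sigma\tau=\tau\sigma=0$, and the reverse Petz--sandwiched inequality $\widetilde D_\beta\ge D_{2-1/\beta}$ (Lemma~\ref{lem:sand-vs-petz}) together with the choice $\beta=\tfrac{1}{2-\alpha}$ converts the resulting sandwiched divergence into the Petz divergence $D_\alpha$ with exponent $\tfrac{1-\beta}{\beta}=1-\alpha$. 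These are the missing ingredients; without the concavity trick the $\tr{B^-}$ cost is never paid for, and without the Petz--sandwiched inequality the second term does not collapse to the same divergence as the first.
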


The recently found plain coding result \cite{Dalai2013Sep,Renes2024Jul,Li2024Jul} as stated in \eqref{EE-plain} provides the same expansion above a possibly different critical rate. In general, it is not clear whether the new critical rate improves on the old one \eqref{eq:critical-old} of \cite{Dalai2013Sep,Renes2024Jul,Li2024Jul}. In the particular case of classical-quantum channels outputting pure states we show in Corollary \ref{cor:pure-ee} that the new critical rate is actually below the capacity of order $0$ and thus there is no critical rate with non-signaling assistance (as opposed to the plain setting). Moreover, classically, there is no critical rate for coding with non-signaling correlations \cite{Polyanskiy2012Dec}.

For the proof of Proposition \ref{prop-exp-ach-w/-activ} we need a modified analysis compared to the case of activated non-signaling codes as treated in Proposition \ref{prop-exp-ach}.

\begin{lem}\label{lem:rounding-MC-one-shot}
Let $\cW$ be a classical-quantum channel and $M\in\mathbb{N}$. Then, we have that
\begin{align}
\eps^{\rm{NS}}(M,\cW)&\le  \inf_{p\in \cP(\cX)} \sup_{B=B^{\dagger}}\Bigg\{\left(\sum_{x\in \cX} p(x) \tr{W_x\wedge MB^+}-\frac{1}{2}\tr{B^+}\right)\nonumber\\
&\qquad\qquad\qquad\quad+\left(\frac{1}{M}\sum_{x\in \cX} p(x)  \tr{ (W_x+MB^-) \wedge MB^+} -\frac{1}{2}\tr{B^+}\right)\Bigg\}.\label{eq:extra-term-cq}
\end{align}
\end{lem}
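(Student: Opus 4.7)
The starting point is the dual formula from Proposition \ref{prop-dual},
$\eps^{\rm{NS}}(M,\cW)=\inf_{p}\sup_{B=B^\dagger}\sum_x p(x)\tr{W_x\wedge MB}-\tr{B}$,
together with the Jordan decomposition $B=B^+-B^-$ of any Hermitian $B$. I would establish a pointwise upper bound on the integrand $\sum_x p(x)\tr{W_x\wedge MB}-\tr{B}$ by the bracketed sum inside \eqref{eq:extra-term-cq}; applying $\sup_{B=B^\dagger}$ and then $\inf_p$ to both sides then yields the lemma.

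Two complementary one-step bounds are needed. The first is the exact identity
\begin{align}
\tr{W_x\wedge MB}+M\tr{B^-}=\tr{(W_x+MB^-)\wedge MB^+},
\end{align}
proved by applying $\tr{A\wedge C}=\tr{C}-\tr{(C-A)_+}$ to both sides, each of which reduces to $\tr{MB^+}-\tr{(MB^+-MB^--W_x)_+}$; no commutation between $B^+$ and $B^-$ is required. The second is the monotonicity bound $\tr{W_x\wedge MB}\le\tr{W_x\wedge MB^+}$, which follows from $MB\mle MB^+$ together with the variational characterisation \eqref{tr min A B min O}, and which implies $\sum_x p(x)\tr{W_x\wedge MB}-\tr{B}\le\sum_x p(x)\tr{W_x\wedge MB^+}-\tr{B^+}+\tr{B^-}$.

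Viewed as two upper bounds on the same quantity $\sum_x p(x)\tr{W_x\wedge MB}-\tr{B}$, the identity contributes a residue $-(M-1)\tr{B^-}$ while the monotonicity contributes $+\tr{B^-}$. The key observation is that the convex combination with weights $(1-\tfrac{1}{M},\tfrac{1}{M})$ forces these residues to cancel, since $(1-\tfrac{1}{M})\cdot 1-\tfrac{1}{M}\cdot(M-1)=0$, leaving
\begin{align}
\sum_x p(x)\tr{W_x\wedge MB}-\tr{B}&\le\Bigl(1-\tfrac{1}{M}\Bigr)\sum_x p(x)\tr{W_x\wedge MB^+}\nonumber\\
&\quad+\tfrac{1}{M}\sum_x p(x)\tr{(W_x+MB^-)\wedge MB^+}-\tr{B^+}.
\end{align}
Using the trivial relaxation $1-\tfrac{1}{M}\le 1$ against the nonnegative quantity $\sum_x p(x)\tr{W_x\wedge MB^+}\ge 0$ and splitting $\tr{B^+}=\tfrac{1}{2}\tr{B^+}+\tfrac{1}{2}\tr{B^+}$ symmetrically yields exactly \eqref{eq:extra-term-cq}.

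The main subtlety is finding this specific averaging: neither the identity nor the monotonicity alone eliminates the dependence on $\tr{B^-}$, and the precise weights $(1-\tfrac{1}{M},\tfrac{1}{M})$ are dictated by the requirement of cancellation. This particular splitting is also what makes the downstream Young-type argument in Proposition \ref{prop-exp-ach-w/-activ} feasible: each of the two wedge terms separately can then be bounded by a power of $\tr{(p\circ \cW)^\alpha(p\otimes\sigma)^{1-\alpha}}$, producing the two Petz R\'enyi contributions that give \eqref{eq:second-part}.
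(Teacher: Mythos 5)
Your proof is correct and arrives at the bound through essentially the same arithmetic as the paper: your convex combination with weights $(1-1/M,\,1/M)$ is exactly the paper's device of multiplying the dual expression for $\eps^{\rm{NS}}(M,\cW)$ by $M$, writing $M\tr{W_x\wedge MB}=(M-1)\tr{W_x\wedge MB}+\tr{W_x\wedge MB}$, treating the two pieces separately, and dividing back. The one genuine difference is a nice one: the paper's step $(a)$ bounds $M\tr{B^-}+\tr{W_x\wedge MB}\le\tr{(W_x+MB^-)\wedge(MB+MB^-)}$ by invoking the joint concavity of $(A,C)\mapsto\tr{A\wedge C}$ (Lemma \ref{lem:concavity of min}), whereas you observe that this is in fact an \emph{equality}, verified in two lines from $\tr{A\wedge C}=\tr{C}-\tr{(C-A)_+}$ together with $MB^+-MB^-=MB$. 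This replaces an appeal to a concavity lemma with an elementary identity and makes transparent that no slack is introduced at that step; all the loss in the lemma comes from the monotonicity $\tr{W_x\wedge MB}\le\tr{W_x\wedge MB^+}$ and from the relaxation $1-\tfrac{1}{M}\le 1$, exactly as in the paper's step $(b)$. One small caveat: the variational formula \eqref{tr min A B min O} you cite for the monotonicity step is stated in the preliminaries only for $A,B\mge 0$, but it does extend to Hermitian arguments (since $\tr{(C-A)_+}=\sup_{0\mle O\mle\dI}\tr{(C-A)O}$ for any Hermitian $C-A$), so your use of it with $MB$ merely Hermitian is legitimate; it would be worth a word to say so.
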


\begin{proof}
From Proposition \ref{prop-dual}, we have that 
\begin{align}
\eps_{}^{\rm{NS}}(M, \cW)  &=  \inf_{p\in \cP(\cX)}  \sup_{B=B^\dagger} \sum_{x\in \cX} p(x) \tr{ W_x \wedge  MB }-\tr{B}.
\end{align}
The difficulty is now that the supremum is over $B=B^\dagger$ and not $B\mge0$ as in the case of activated non-signaling codes (Proposition \ref{prop-dual}). We estimate
\begin{align}
&M\eps_{}^{\rm{NS}}(M,\cW) 
\\&=  \inf_{p\in \cP(\cX)} \sup_{B=B^\dagger} \sum_{x\in \cX} p(x) M\tr{W_x\wedge MB}-M\tr{B}
\\&= \inf_{p\in \cP(\cX)} \sup_{B=B^\dagger} \sum_{x\in \cX} p(x) (M-1)\tr{W_x\wedge MB}-M\tr{B^+}+M\tr{B^-}+ \tr{W_x\wedge MB}
\\&\overset{(a)}{\le} \inf_{p\in \cP(\cX)} \sup_{B=B^\dagger} \sum_{x\in \cX} p(x) (M-1)\tr{W_x\wedge MB}-M\tr{B^+}+\tr{(W_x+MB^-)\wedge (MB+MB^-)}
\\&\overset{(b)}{\le}  \inf_{p\in \cP(\cX)} \sup_{B=B^\dagger} \sum_{x\in \cX} p(x) M\tr{W_x\wedge MB^+}-M\tr{B^+}+\tr{(W_x+MB^-)\wedge MB^+},
\end{align}
where in $(a)$ we apply the joint concavity for the non-commutative minimum \cite{Cheng2023Nov} (Lemma \ref{lem:concavity of min}); in $(b)$ we used the fact that $B\mle B^+$ and $B+B^-=B^+$.
\end{proof}

Lemma \ref{lem:rounding-MC-one-shot} again immediately recovers the known result $\eps^{\rm{NS}}(M,\cW)= \eps^{\rm{MC}}(M,\cW)$ for classical channels $\cW$, since we have $\tr{(W_x+MB^-)\wedge MB^+}= \tr{W_x\wedge MB^+}$. However, for the classical-quantum case more work is needed to properly control the second term in \eqref{eq:extra-term-cq}\,---\,and this is ultimately where we again pick up the critical rate constraint.\\

The following Lemma \ref{lem:sand-vs-petz} gives a reversed Petz-sandwiched R\'enyi divergences inequality that will be useful to bound the second term in \eqref{eq:extra-term-cq} in terms of Petz R\'enyi divergences. It was first proven for $\alpha>1$ in \cite[Corollary 3.6]{Jencova2018Aug} and then extended to $\alpha\geq1/2$ in \cite[Proposition 11]{Wilde2018Aug}. For completeness, we give a self-contained proof based on H\"older's inequality in Appendix \ref{app:reverse-petz-sandwiched}, where we also make the connection to the pretty good bounds from \cite{Iten2016Dec}.

\begin{lem}\label{lem:sand-vs-petz}
Let $\rho \in \cS(\cH)$, $\sigma \mge 0$, and $\alpha\in [\frac{1}{2}, \infty)$. Then, we have that
\begin{align}\label{eq:petz-sand}
D_{2-1/\alpha}( \rho\| \sigma)\leq\widetilde{D}_{\alpha}( \rho\| \sigma).
\end{align}
\end{lem}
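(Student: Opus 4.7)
The plan is to unify both sides in terms of a single positive operator and reduce to a H\"older-type estimate. Set $\beta := 2 - 1/\alpha$ and $a := (1-\alpha)/(2\alpha)$, so that $1 - \beta = 2a$, and introduce $Y := \rho^{1/2}\sigma^{2a}\rho^{1/2} \mge 0$. Cyclicity of the trace gives $\tr{\rho^\beta\sigma^{1-\beta}} = \tr{\rho^{\beta-1}Y}$. Moreover, setting $A := \sigma^a\rho^{1/2}$ and $B := \rho^{1/2}\sigma^a$ yields $AB = \sigma^a\rho\sigma^a$ and $BA = Y$; since $AB$ and $BA$ share the same nonzero spectrum, $\tr{(\sigma^a\rho\sigma^a)^\alpha} = \tr{Y^\alpha}$. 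Unpacking the definitions of $D_{2-1/\alpha}$ and $\widetilde{D}_\alpha$ and accounting for the sign of $\alpha-1$, the target inequality becomes equivalent to $\tr{\rho^{\beta-1}Y} \le \tr{Y^\alpha}^{1/\alpha}$ for $\alpha > 1$, and to $\tr{\rho^{\beta-1}Y} \ge \tr{Y^\alpha}^{1/\alpha}$ for $\alpha \in [1/2, 1)$; the boundary case $\alpha = 1$ is the trivial equality $D = D$.

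The crucial feature is that $\beta - 1 = (\alpha-1)/\alpha$ has been chosen so that multiplying it by the H\"older-conjugate of $\alpha$, namely $p := \alpha/(\alpha-1)$, gives exactly $(\beta-1)p = 1$. This makes $\tr{\rho^{(\beta-1)p}}^{1/p} = \tr{\rho}^{1/p} = 1$, so the $\rho^{\beta-1}$ factor is absorbed free of cost. For $\alpha > 1$ we have $p \in (1,\infty)$ as a genuine Schatten-conjugate, and the standard non-commutative H\"older inequality immediately delivers $\tr{\rho^{\beta-1}Y} \le \|\rho^{\beta-1}\|_p\|Y\|_\alpha = \tr{Y^\alpha}^{1/\alpha}$, which settles this case.

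For $\alpha \in [1/2, 1)$ one needs the reverse-direction statement, and establishing it cleanly in the non-commutative setting is the main technical obstacle. A clean route is via the log-convexity of the scalar function
\begin{align}
f(\mu) := \log\tr{W^\mu \rho^{1-\mu}}, \qquad W := Y^\alpha, \quad \mu := 1/\alpha \in (1,2],
\end{align}
which is a standard consequence of the monotonicity of the Petz R\'enyi divergence in its order. Combined with the boundary values $f(0) = \log\tr{\rho} = 0$ and $f(1) = \log\tr{Y^\alpha}$, convexity gives $f(\mu) \ge \mu f(1) = \mu\log\tr{Y^\alpha}$ for all $\mu \ge 1$, which rearranges exactly to the desired $\tr{\rho^{\beta-1}Y}^\alpha \ge \tr{Y^\alpha}$. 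Specializing to $\alpha = 1/2$ (so $\beta = 0$) the reduction turns into $F(\rho,\sigma)^2 = \tr{Y^{1/2}}^2 \le \tr{Y\rho^{-1}} = \tr{\sigma P_\rho}$, recovering the Barnum--Knill-type estimate that underpins the pretty-good measurement bounds of \cite{Iten2016Dec}.
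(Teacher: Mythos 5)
Your proof is correct, and it takes a genuinely different route from the paper. Both arguments split at $\alpha=1$, but the paper proves the (equivalent) statement $\widetilde{D}_{1/(2-\alpha)}\geq D_\alpha$ on $\alpha\in[0,2]$ by two applications of the generalized H\"older inequality $\|AB\|_r\le\|A\|_p\|B\|_q$ (with $1/p+1/q=1/r$), choosing in each regime an explicit factorization (e.g.\ $A=\sigma^{(1-\alpha)/2}\rho^{\alpha/2}$, $B=\rho^{(1-\alpha)/2}$, $p=2$, $q=\tfrac{2}{1-\alpha}$, $r=\tfrac{2}{2-\alpha}$) whose Schatten norms reproduce the two trace functionals. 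You instead rewrite \emph{both} trace functionals in terms of a single operator $Y=\rho^{1/2}\sigma^{(1-\alpha)/\alpha}\rho^{1/2}$, reducing the claim to a comparison between $\tr{\rho^{\beta-1}Y}$ and $\tr{Y^\alpha}^{1/\alpha}$. For $\alpha>1$ this lets you apply the plain trace--H\"older inequality $\tr{XZ}\le\|X\|_p\|Z\|_q$, with $\beta-1$ tuned so that $\|\rho^{\beta-1}\|_p=1$, which is arguably cleaner than the paper's $\|\cdot\|_r$ bookkeeping. For $\alpha\in[1/2,1)$ you avoid any explicit factorization and instead invoke log-convexity of $\mu\mapsto\tr{W^\mu\rho^{1-\mu}}$ and extrapolate past $\mu=1$; this is an interpolation-type argument rather than a direct H\"older estimate, and it is the genuinely novel part relative to the paper's route. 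Two small nits that do not affect validity: $(i)$ log-convexity of $\mu\mapsto\log\tr{W^\mu\rho^{1-\mu}}$ is indeed standard (e.g.\ by the Nussbaum--Szko\l a representation and classical H\"older), but it is more accurate to say that monotonicity of the Petz order is a \emph{consequence} of this log-convexity rather than the other way around; $(ii)$ the boundary value $f(0)=0$ uses that $Y$ shares $\rho$'s support, which requires $\rho\ll\sigma$ — an assumption already implicit in the paper's definition of $D_\alpha$. Finally, note that the same log-convexity argument (interpolation rather than extrapolation, since $\mu=1/\alpha\in(0,1)$) would also cover $\alpha>1$, so your H\"older step is instructive but not strictly necessary.
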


Employing Lemma \ref{lem:rounding-MC-one-shot} together with Lemma \ref{lem:sand-vs-petz}, we are finally able to prove Proposition \ref{prop-exp-ach-w/-activ} on the achievability of the non-signaling assisted error exponent.

\begin{proof}[Proof of Proposition \ref{prop-exp-ach-w/-activ}]
Recall the notation $p \circ \cW = \sum_{x \in \cX} p(x)\proj{x}\otimes W_x$ and $p = \sum_{x \in \cX} p(x)\proj{x}$. Using Lemma~\ref{lem:rounding-MC-one-shot} and writing $B^+\mge 0$ as $B^+=s\sigma$, where $s=\tr{B^+}\ge 0$ and $\sigma\in \cS(\cH)$ and $\tau=B^-\mge0$ with $\tau\sigma=\sigma\tau=0$, we have for $\alpha, \beta  \in (0,1)$ that
\begin{align}
\eps^{\rm{NS}}(M,\cW)
&\le  
\inf_{p\in \cP(\cX)} \sup_{\substack{\sigma\in \cS(\cH), \, s\ge 0, \, \tau\mge 0\\\tau\sigma=\sigma\tau=0}  } \bigg\{\left(\textstyle \sum_{x\in \cX} p(x) \tr{W_x\wedge sM\sigma} - \frac{s}{2}\right)\nonumber\\
&\qquad\qquad\qquad\qquad\qquad\qquad\qquad+\left(\tfrac{1}{M}\textstyle \sum_{x\in \cX} p(x)  \tr{ (W_x +M\tau) \wedge sM \sigma } - \frac{s}{2}\right)\bigg\}
\\&\overset{(a)}{=}\inf_{p\in \cP(\cX)}\sup_{\substack{\sigma\in \cS(\cH), \, s\ge 0, \, \tau\mge 0\\\tau\sigma=\sigma\tau=0}}\bigg\{ \left(\tr{\textstyle \sum_{x\in \cX} p(x)\proj{x}\otimes W_x\wedge \textstyle \sum_{x\in \cX} p(x)\proj{x}\otimes 2sM\sigma} - s\right)\nonumber\\
&\qquad\qquad\qquad\qquad+\left(\tfrac{1}{M}\tr{ \textstyle \sum_{x\in \cX} p(x)\proj{x}\otimes (W_x+M\tau) \wedge \textstyle \sum_{x\in \cX} p(x)\proj{x}\otimes 2sM \sigma } - s\right)\bigg\}
\\&\overset{(b)}{\le} \inf_{p\in \cP(\cX)}\sup_{\substack{\sigma\in \cS(\cH),  \, \tau\mge 0\\\tau\sigma=\sigma\tau=0}}\bigg\{\sup_{s\ge 0}\Big\{\left(s^{1-\alpha}(2M)^{1-\alpha}\tr{(p\circ \cW)^{\alpha} (p\otimes \sigma)^{1-\alpha}} - s\right)\Big\}\nonumber\\
&\qquad \qquad\qquad\qquad  +\sup_{s\ge 0}\left\{\left(\tfrac{1}{M}s^{1-\beta}(2M)^{1-\beta}  \tr{ \left(p\circ \cW+Mp\otimes \tau\right)^{\beta}  (p\otimes \sigma)^{1-\beta} } - s\right)\right\}\bigg\}
\\&\overset{(c)}{\le} \inf_{p\in \cP(\cX)} \sup_{\substack{\sigma\in \cS(\cH),  \, \tau\mge 0\\\tau\sigma=\sigma\tau=0}}\bigg\{(2M)^{\frac{1-\alpha}{\alpha}}\tr{(p\circ \cW)^{\alpha} (p\otimes \sigma)^{1-\alpha}}^{\frac{1}{\alpha}}\nonumber\\
&\qquad\qquad\qquad\qquad\qquad\qquad\qquad+\frac{2^{\frac{1-\beta}{\beta}}}{M}\tr{ \left(p\circ \cW+Mp\otimes \tau\right)^{\beta}  (p\otimes \sigma)^{1-\beta} }^{\frac{1}{\beta}}\bigg\}, \label{eq:beta}
\end{align}
where in $(a)$ we use the block diagonal decomposition of the non-commutative minimum; in $(b)$ we use Audenaert's inequality as stated in \eqref{Chernoff}; and in $(c)$ we use Young's inequality in the form (Lemma \ref{lem-young})
\begin{align}
\text{$\sup_{s\ge 0} s^{1-\alpha}c-s= \kappa_{\alpha  }c^{\frac{1}{\alpha}}$ with $\kappa_{\alpha} = {\alpha} (1-\alpha)^{\frac{1-\alpha}{\alpha}}\le 1$.}
\end{align}
Moreover, we have $\tr{(p\circ \cW)^{\alpha} (p\otimes \sigma)^{1-\alpha}} = \exp\left(  -(1-\alpha){D}_{\alpha}(p\circ \cW\| p\otimes \sigma)\right)$. For $\beta \ge \frac{1}{2}$, the $\beta$-dependent term can be bounded  as follows
\begin{align}
&\tr{ \left(p\circ \cW+Mp\otimes \tau)\right)^{\beta}  (p\otimes \sigma)^{1-\beta} }^{\frac{1}{\beta}}
\\&\overset{(a)}{\le} \tr{  \left((p\otimes \sigma)^{(1-\beta)/(2\beta)}\left(p\circ \cW+Mp\otimes \tau\right)  (p\otimes \sigma)^{(1-\beta)/(2\beta)}\right)^{\beta} }^{\frac{1}{\beta}}
\\&\overset{(b)}{=} \tr{  \left((p\otimes \sigma)^{(1-\beta)/(2\beta)}(p\circ \cW)  (p\otimes \sigma)^{(1-\beta)/(2\beta)}\right)^{\beta} }^{\frac{1}{\beta}}
\\&= \exp\left( -\frac{1-\beta}{\beta}\widetilde{D}_{\beta}\left(p\circ \cW\middle\|p\otimes \sigma\right) \right) \label{eq:before-sand-Petz}
\\&\overset{(c)}{\le} \exp\left( -\frac{1-\beta}{\beta}D_{2-1/\beta}\left(p\circ \cW\middle\|p\otimes \sigma\right) \right),
\end{align}
where in $(a)$ we used the Araki-Lieb-Thirring inequality \cite{Araki1990Feb,Lieb}; in $(b)$ we used $\sigma\tau= \tau \sigma =0$; in $(c)$ we employ the reversed Petz-sandwiched R\'enyi inequality from Lemma \ref{lem:sand-vs-petz}.
 Choosing $\beta=\frac{1}{2-\alpha}\ge \frac{1}{2}$ then gives the claimed overall upper bound
\begin{align}
\eps^{\rm{NS}}(M,\cW)&\leq\inf_{\alpha\in (0,1)} \inf_{p\in \cP(\cX)} \sup_{ \sigma\in \cS(\cH)}\Bigg\{(2M)^{\frac{1-\alpha}{\alpha}} \exp\left( -\frac{1-\alpha}{\alpha}D_{\alpha}(p\circ \cW\| p\otimes \sigma)\right)\nonumber
\\&\qquad\qquad\qquad\qquad\qquad\quad + \frac{2}{M}\exp\left( -(1-\alpha)D_{\alpha}(p\circ \cW\| p\otimes \sigma)\right)\Bigg\}.\label{eq:one-shot-bound-critical-rate}
\end{align}
We note that we can use the reverse H\"older type  inequality $D_\alpha(\rho\| \sigma) \le \frac{1}{\alpha}\widetilde{D}_\alpha(\rho\| \sigma)$ of \cite{Iten2016Dec} to bound \eqref{eq:before-sand-Petz}. Define the critical rate
\begin{align}\label{eq:critical-rate}
r'_c:= \sup_{\alpha\in (0,1)}  (1-\alpha)^2 \sup_{p\in \cP(\cX)}\inf_{ \sigma\in \cS(\cH)} D_{\alpha}(p\circ \cW\| p\otimes \sigma),
\end{align}
and we then  have that for all $r > r'_c$ that
\begin{align}
\eps^{\rm{NS}}(e^{nr},\cW^{\otimes n})&\overset{(a)}{\leq} \inf_{\alpha\in (0,1)} \inf_{p\in \cP(\cX)}\Bigg\{\sup_{ \sigma^n\in \cS(\cH^{\otimes n})}2^{\frac{1-\alpha}{\alpha}} \exp\left(nr\frac{1-\alpha}{\alpha} -\frac{1-\alpha}{\alpha}D_{\alpha}(p^{\otimes n}\circ \cW^{\otimes n}\| p^{\otimes n}\otimes \sigma^n)\right) \nonumber
\\&\qquad\qquad\qquad\qquad +\sup_{ \sigma^n\in \cS(\cH^{\otimes n})} 2\exp\left(-nr -(1-\alpha)D_{\alpha}(p^{\otimes n}\circ \cW^{\otimes n}\| p^{\otimes n}\otimes \sigma^n)\right)\Bigg\}
\\&\overset{(b)}{\leq}\inf_{\alpha\in (0,1)} \inf_{p\in \cP(\cX)}\Bigg\{ \sup_{ \sigma\in \cS(\cH)}2^{\frac{1-\alpha}{\alpha}} \exp\left(nr\frac{1-\alpha}{\alpha} -n\frac{1-\alpha}{\alpha}D_{\alpha}(p\circ \cW\| p\otimes \sigma)\right) \nonumber
\\&\qquad\qquad\qquad\qquad +\sup_{ \sigma\in \cS(\cH)} 2\exp\left(-nr -n(1-\alpha)D_{\alpha}(p\circ \cW\| p\otimes \sigma)\right)\Bigg\}
\\&\overset{(c)}{\leq} \inf_{\alpha\in (0,1)} \inf_{p\in \cP(\cX)}\Bigg\{ \sup_{ \sigma\in \cS(\cH)}(2^{\frac{1-\alpha}{\alpha}}+2) \exp\left(nr\frac{1-\alpha}{\alpha} -n\frac{1-\alpha}{\alpha}D_{\alpha}(p\circ \cW\| p\otimes \sigma)\right)\Bigg\},\label{eq:alpha-opt}
\end{align}
where in $(a)$ we set $\cW\leftarrow \cW^{\otimes n}$, $M=e^{nr}$ and choose $p^n=p^{\otimes n}$ in \eqref{eq:one-shot-bound-critical-rate}; in $(b)$ we  the additivity of the R\'enyi mutual information \cite[Lemma 7]{Hayashi2016Oct} (Lemma \ref{app-lem-additivity}); in $(c)$ we use the definition of the critical rate in \eqref{eq:critical-rate}. Finally, as $r> C_0(\cW)$, optimal $\alpha$ in \eqref{eq:alpha-opt} is bounded below by a positive constant independent of $n$ thus the prefactor $(2^{\frac{1-\alpha}{\alpha}}+2)$ is finite and we obtain the asymptotic lower bound on the  error exponent:
\begin{align}
\lim_{n\rightarrow \infty } - \frac{1}{n}\log\eps^{\rm{NS}}(e^{nr},\cW^{\otimes n})&\ge  \sup_{\alpha\in (0,1)} \sup_{p\in \cP(\cX)} \inf_{ \sigma\in \cS(\cH)} \frac{1-\alpha}{\alpha}(D_{\alpha}(p\circ \cW\| p\otimes \sigma) -r)
\\&= \sup_{\alpha \in(0,1]}  \frac{1-\alpha}{\alpha}\left( C_{\alpha}(\cW)-r\right). 
\end{align}
\sloppy The other direction needed to prove \eqref{eq:NS=ANS-above-rate} follows from Proposition \ref{prop-exp-cvs} as $\eps^{\rm{NS}}(e^{nr},\cW^{\otimes n})\ge \eps^{\rm{NS}}(e^{nr},\cW^{\otimes n}\otimes \cI_2)$. 
\end{proof}

Proposition \ref{prop-exp-ach-w/-activ} is not tight in general as we know that there is no critical rate in the classical setting \cite{Polyanskiy2012Dec}.\footnote{We lose optimality when we apply the reversed Petz-sandwiched R\'enyi inequality from Lemma \ref{lem:sand-vs-petz} to bound \eqref{eq:before-sand-Petz}. For the binary symmetric channel, the new critical rate $r'_c$ from \eqref{eq:critical-rate} is numerically strictly below the old critical rate $r_c$ from \eqref{eq:critical-old}. However, $r'_c$ is strictly larger than the zero R\'enyi capacity for this example.} As an application of Proposition \ref{prop-exp-ach-w/-activ}, we establish the non-signaling error exponent of channels that solely output pure states. In contrast, without non-signaling assistance, the error exponents of such channels are only known above the critical rate \eqref{eq:critical-old} \cite{Hayashi2007Dec}\&\cite[Appendix A]{Beigi2023Oct}.

\begin{cor}\label{cor:pure-ee}
Let $\cW = \{W_x\}_{x\in \cX}$ be a classical-quantum channel such that the output states $\{W_x\}_{x\in \cX}$ are pure, and $r\geq0$. Then, we have that
\begin{align}\label{eq:ee-pure}
E^{\rm{NS}}(r)=E^{\rm{ANS}}(r).
\end{align}
\end{cor}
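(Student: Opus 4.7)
The plan is to invoke Proposition \ref{prop-exp-ach-w/-activ}, which already gives the desired equality $E^{\rm{NS}}(r)=E^{\rm{ANS}}(r)$ above the critical rate $r'_c = \sup_{\alpha \in (0,1)}(1-\alpha)^2 C_\alpha(\cW)$. Corollary \ref{cor:pure-ee} then reduces to showing that, when the outputs $W_x$ are pure, one has $r'_c \le C_0(\cW)$, since the only non-trivial range of rates is $r \in (C_0(\cW),C(\cW))$. For $r < C_0(\cW)$ the non-signaling error probabilities already vanish by the zero-error capacity identification $C^{\rm{NS}}_0(\cW)=C_0(\cW)$, so both error exponents are trivially infinite and the equality is immediate.

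The single ingredient beyond Proposition \ref{prop-exp-ach-w/-activ} is therefore a multiplicative comparison between $C_\alpha(\cW)$ and $C_0(\cW)$ tailored to pure-output channels. I would exploit that when $W_x = \proj{\psi_x}$ is a rank-one projector, $W_x^{\alpha} = W_x$ for every $\alpha > 0$, so $\tr{W_x^{\alpha}\sigma^{1-\alpha}} = \bra{\psi_x}\sigma^{1-\alpha}\ket{\psi_x}$. Since every state satisfies $\sigma \mle \dI$ and the map $t\mapsto t^{1-\alpha}$ is operator monotone on $[0,1]$ for $\alpha \in (0,1)$, we have $\sigma^{1-\alpha} \mge \sigma$, whence $\bra{\psi_x}\sigma^{1-\alpha}\ket{\psi_x} \ge \bra{\psi_x}\sigma\ket{\psi_x}$. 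Taking logarithms and multiplying by the negative prefactor $1/(\alpha-1)$ yields the pointwise bound
\begin{align}
D_\alpha(W_x\|\sigma) \;\le\; \frac{1}{1-\alpha}\, D_0(W_x\|\sigma).
\end{align}

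Averaging over $x\sim p$, taking the infimum over $\sigma$, and the supremum over $p$, this single-letter inequality lifts to $C_\alpha(\cW) \le \frac{1}{1-\alpha}\, C_0(\cW)$ for every $\alpha \in (0,1)$. Substituting into the definition of the critical rate then gives
\begin{align}
r'_c \;\le\; \sup_{\alpha\in(0,1)} (1-\alpha)\, C_0(\cW) \;=\; C_0(\cW),
\end{align}
so that $r > C_0(\cW)$ implies $r > r'_c$, and Proposition \ref{prop-exp-ach-w/-activ} delivers \eqref{eq:ee-pure}.

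I do not anticipate a serious obstacle: the whole argument rests on the elementary operator inequality $\sigma^{1-\alpha} \mge \sigma$ combined with the purity identity $W_x^{\alpha} = W_x$, which is exactly the structural feature that lets one trade a factor of $(1-\alpha)$ for a factor of one in the exponent. The only point to handle with a little care is that the bound $D_\alpha \le \frac{1}{1-\alpha} D_0$ is vacuous when $D_0(W_x\|\sigma) = +\infty$; this is harmless because the infimum defining $C_\alpha$ and $C_0$ can be restricted, without loss of generality, to states $\sigma$ whose support contains every $\ket{\psi_x}$ with $p(x)>0$.
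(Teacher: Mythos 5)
Your proof is correct, and it takes a genuinely different route from the paper. Both proofs share the same skeleton: invoke Proposition~\ref{prop-exp-ach-w/-activ}, observe that the only non-trivial rates lie in $(C_0(\cW),C(\cW))$, and reduce to showing $r'_c = \sup_{\alpha\in(0,1)}(1-\alpha)^2 C_\alpha(\cW) \le C_0(\cW)$ when the outputs are pure. The paper establishes this by invoking the quantum Sibson's identity to write $C_\alpha(p,\cW)$ in closed form, then using $W_x^\alpha = W_x$ to recognize $(1-\alpha)C_\alpha(p,\cW) = -\log\big\|\sum_x p_x W_x\big\|_{1/\alpha}$ as a negative log Schatten norm, and concluding from the anti-monotonicity of Schatten norms that $\alpha\mapsto(1-\alpha)C_\alpha(p,\cW)$ is non-increasing. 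You instead prove the pointwise divergence bound $D_\alpha(W_x\|\sigma)\le\frac{1}{1-\alpha}D_0(W_x\|\sigma)$ directly from $W_x^\alpha=W_x$ and the spectral inequality $\sigma^{1-\alpha}\mge\sigma$ (valid because a state has eigenvalues in $[0,1]$ and $t^{1-\alpha}\ge t$ pointwise on $[0,1]$), then average over $x$, take inf over $\sigma$ and sup over $p$ to get $C_\alpha(\cW)\le\frac{1}{1-\alpha}C_0(\cW)$. Your route is more elementary — it bypasses Sibson's identity and Schatten norms entirely — at the cost of yielding only the single-inequality consequence $(1-\alpha)C_\alpha(\cW)\le C_0(\cW)$ rather than the full monotonicity statement; both are equally sufficient for the corollary. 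One small imprecision: $\sigma^{1-\alpha}\mge\sigma$ is not a consequence of operator monotonicity of $t\mapsto t^{1-\alpha}$ per se, but of the functional calculus applied to the pointwise inequality $t^{1-\alpha}\ge t$ on $[0,1]$ (which suffices because $\sigma^{1-\alpha}$ and $\sigma$ commute). The conclusion is of course correct.
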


\begin{proof}
By Proposition \ref{prop-exp-ach-w/-activ} the claimed \eqref{eq:ee-pure} is satisfied for all $r> \sup_{\alpha\in (0,1)} (1-\alpha)^2 C_\alpha(\cW)$. The corollary follows by showing that $\sup_{\alpha\in (0,1)}(1-\alpha)^2 C_\alpha(\cW)\le C_0(\cW)$. To this end, we define for $p\in \cP(\cX)$:
\begin{align}
C_\alpha(p, \cW) = \inf_{ \sigma\in \cS(\cH)} D_{\alpha}(p\circ \cW\| p\otimes \sigma).
\end{align}
This function is related to the auxiliary function and  has a closed form expression by the quantum Sibson's identity \cite{Sharma2013Feb}
\begin{align}
C_\alpha(p, \cW) = - \frac{\alpha}{1-\alpha}\log\tr{\left(\textstyle\sum_{x\in \cX} p_x W_x^{\alpha}\right)^{1/\alpha}}.
\end{align}
When the output states $\{W_x\}_{x\in \cX}$ are pure we have that 
\begin{align}
(1-\alpha)C_\alpha(p, \cW) = - \alpha \log\tr{\left(\textstyle\sum_{x\in \cX} p_x W_x\right)^{1/\alpha}} = -\log \left\| \sum_{x\in \cX} p_x W_x \right\|_{\frac{1}{\alpha}} , \label{eq:nondec}
\end{align}
where $\| \cdot \|_p$ denotes the Schatten $p$-norm. The function in~\eqref{eq:nondec} is thus non-increasing due to the anti-monotonicity of the Schatten-norms in their parameter. 
Therefore,
\begin{align}
\sup_{\alpha\in (0,1)}  (1-\alpha)^2 C_\alpha(\cW)\le \sup_{\alpha\in (0,1)} \sup_{p\in \cP(\cX)}  (1-\alpha) C_\alpha(p, \cW) = \sup_{p\in \cP(\cX)}   C_0(p, \cW) = C_0(\cW). 
\end{align}
\end{proof}
Note that in above proof, we crucially used the fact that the output states are pure to show that $\alpha\mapsto (1-\alpha)C_\alpha(p, \cW)$ is non-increasing on $(0, \infty)$. In general, however, this function is not monotone.


\section{Non-signaling assisted quantum error exponents}
\label{sec:NS-coding-QQ}

In Section \ref{sec:NS-coding-CQ} we focused on non-signaling coding error exponents in the classical-quantum setting. We observe that the achievability proof can be extended to the fully quantum setting with  similar proofs. To this end, we first state a generalization of Proposition \ref{prop-dual} that gives  the dual formulation of the meta-converse and activated non-signaling error probabilities in the quantum setting. 

\begin{prop}\label{prop-dual-quantum}
Let $\cW_{A\rightarrow B}$ be a quantum channel with Choi matrix
\begin{align}
(J_{\cW})_{RB}=\sum_{i,j=1}^{|A|}\ket{i}\bra{j}_R\otimes \cW_{A\rightarrow B}(\ket{i}\bra{j}_A).
\end{align}
The one-shot meta-converse and activated non-signaling error probabilities satisfy
\begin{align}\label{NS-dual-quantum}
\eps_{}^{\rm{NS}}(2M,\cW\otimes \cI_2) = \eps_{}^{\rm{MC}}(M,\cW) &= \inf_{\rho_R \in \cS(R)} \sup_{Z_B\mge 0}\Big\{\tr{\sqrt{\rho_R} (J_{\cW})_{RB} \sqrt{\rho_R}\wedge M\rho_R\otimes Z_B}-\tr{Z_B}\Big\}. 
\end{align}
\end{prop}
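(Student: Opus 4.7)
The plan is to parallel the proof of Proposition \ref{prop-dual}, lifting the SDP Lagrange-duality argument from the classical-quantum to the fully quantum setting via the Choi representation. The activation identity $\eps_{}^{\rm{NS}}(2M,\cW\otimes \cI_2) = \eps_{}^{\rm{MC}}(M,\cW)$ follows from the same simulation argument as in \eqref{NS-MC-inequ}: the construction of \cite[Lemma 3]{Wang2019Feb} uses only a perfect bit channel on the side to round between the two SDP relaxations and is oblivious to whether the main channel has classical or quantum input. This reduces the task to computing the dual of $\eps_{}^{\rm{MC}}(M,\cW)$ on its own.

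Guided by the classical-quantum primal \eqref{NS-program-le}, I would replace the probability distribution $p\in \cP(\cX)$ by an encoder state $\rho_R\in\cS(R)$ and the classical POVM $\{\Lambda_x\}$ by an operator $\Lambda_{RB}$ on the joint reference-output system, writing the quantum meta-converse as the SDP
\begin{align}
1-\eps_{}^{\rm{MC}}(M,\cW) = \sup_{\rho_R,\,\Lambda_{RB}\mge 0}\Big\{\tr{\Lambda_{RB}\sqrt{\rho_R}(J_\cW)_{RB}\sqrt{\rho_R}}\;:\;\Lambda_{RB}\mle \mathbb{I}_{RB},\;\ptr{R}{\sqrt{\rho_R}\Lambda_{RB}\sqrt{\rho_R}}\mle \tfrac{1}{M}\mathbb{I}_B\Big\}.
\end{align}
A sanity check: substituting $\Lambda_{RB}=\sum_x \proj{x}_R\otimes L_x$ for $\rho_R=\sum_x p(x)\proj{x}$ and reparametrizing $\Lambda_x=Mp(x)L_x$ recovers \eqref{NS-program-le} exactly; the block-diagonal restriction is without loss of generality for classical-quantum channels because $(J_\cW)_{RB}$ is itself block-diagonal in the $R$ register.

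For the dual I would then introduce multipliers $Y_{RB}\mge 0$ for $\Lambda_{RB}\mle \mathbb{I}_{RB}$ and $Z_B\mge 0$ for the partial-trace constraint. Using the identity $\sqrt{\rho_R}(\mathbb{I}_R\otimes Z_B)\sqrt{\rho_R}=\rho_R\otimes Z_B$, the Lagrangian collects into
\begin{align}
\cL(\Lambda,Y,Z) = \tr{\Lambda_{RB}\big(\sqrt{\rho_R}(J_\cW)_{RB}\sqrt{\rho_R}-Y_{RB}-\rho_R\otimes Z_B\big)}+\tr{Y_{RB}}+\tfrac{1}{M}\tr{Z_B}.
\end{align}
Maximizing over $\Lambda_{RB}\mge 0$ forces the linear coefficient to be $\mle 0$, so the optimal dual variable is $Y_{RB}=\big(\sqrt{\rho_R}(J_\cW)_{RB}\sqrt{\rho_R}-\rho_R\otimes Z_B\big)_+$. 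Slater's condition is easily verified (e.g.\ at $\Lambda_{RB}=\tfrac{1}{2M}\mathbb{I}_{RB}$), so strong duality applies and, after absorbing the factor $M$ into $Z_B$, yields
\begin{align}
1-\eps_{}^{\rm{MC}}(M,\cW) = \sup_{\rho_R}\inf_{Z_B\mge 0}\Big\{\tr{\big(\sqrt{\rho_R}(J_\cW)_{RB}\sqrt{\rho_R}-M\rho_R\otimes Z_B\big)_+}+\tr{Z_B}\Big\}.
\end{align}
Combining with $\tr{\sqrt{\rho_R}(J_\cW)_{RB}\sqrt{\rho_R}}=\tr{\rho_R\,\ptr{B}{(J_\cW)_{RB}}}=\tr{\rho_R}=1$ (since $J_\cW$ is the Choi matrix of a trace-preserving map) and the identity $A\wedge B=A-(A-B)_+$ then gives the claimed \eqref{NS-dual-quantum}.

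The main obstacle is pinning down the correct primal SDP: one has to verify that the non-signaling/meta-converse constraints in the fully quantum setting translate precisely to the $\Lambda_{RB}$ constraints above through the Choi-Jamio{\l}kowski representation. Once that correspondence is in hand, the remaining steps are a mechanical lift of the classical-quantum proof, with the operator weighting $\sqrt{\rho_R}(\,\cdot\,)\sqrt{\rho_R}$ playing the role of the classical weighting $\sum_x p(x)\proj{x}\otimes(\,\cdot\,)$.
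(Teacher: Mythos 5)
Your proposal is correct and follows essentially the same SDP Lagrange-duality route as the paper. The only cosmetic difference is where the conjugation by $\sqrt{\rho_R}$ is introduced: the paper starts from the primal of \cite{Wang2019Feb}, namely $\sup_{\rho_R,\Lambda'_{RB}}\frac{1}{M}\tr{\Lambda'_{RB}(J_\cW)_{RB}}$ subject to $\Lambda'_B\mle\mathbb{I}_B$ and $0\mle\Lambda'_{RB}\mle M\rho_R\otimes\mathbb{I}_B$, dualizes with multipliers $(Z_B,X_{RB})$, and then performs the substitution $Y_{RB}=\sqrt{\rho_R}X_{RB}\sqrt{\rho_R}$ in the dual; your version applies the change of variable $\Lambda'_{RB}=M\sqrt{\rho_R}\Lambda_{RB}\sqrt{\rho_R}$ already in the primal, which makes your ``candidate'' SDP identical to theirs, so the gap you flagged (verifying the primal) is closed precisely by citing \cite[Eq.~(11) and Lemma~3]{Wang2019Feb} as the paper does.
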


Using this formulation, we obtain the achievability bound of the activated non-signalling error probability in the quantum setting.

\begin{prop}\label{prop-exp-ach-quantum}
Let $\cW_{A\rightarrow B}$ be a quantum channel with Choi matrix $(J_{\cW})_{RB}$, $r\ge 0$, and $\alpha\in(0,1]$. Then, we have that
\begin{align}
E^{\rm{ANS}}(r)\ge  \sup_{\rho_R \in \cS(R)}\inf_{\sigma_B\in \cS(B)}    \frac{1-\alpha}{\alpha} \Big(D_{\alpha}(\sqrt{\rho_R} (J_{\cW})_{RB} \sqrt{\rho_R}\| \rho_R\otimes  \sigma_B)-r\Big). 
\end{align}
\end{prop}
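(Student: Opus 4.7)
The approach is to transport the proof of Proposition~\ref{prop-exp-ach} to the fully quantum setting, with the Choi-based state $\omega_{RB}:=\sqrt{\rho_R}(J_\cW)_{RB}\sqrt{\rho_R}$ (which has marginal $\rho_R$ on $R$) playing the role of $p\circ\cW$. Starting from Proposition~\ref{prop-dual-quantum} and fixing any $\rho_R\in\cS(R)$, one gets $\eps^{\rm{NS}}(M,\cW\otimes\cI_2)\le \sup_{Z_B\mge 0}\{\tr{\omega_{RB}\wedge M\rho_R\otimes Z_B}-\tr{Z_B}\}$. Writing $Z_B=s\sigma_B$ with $s=\tr{Z_B}\ge 0$ and $\sigma_B\in\cS(B)$, applying the Audenaert inequality~\eqref{Chernoff} with $\alpha\in(0,1]$, and then Young's inequality (Lemma~\ref{lem-young}) to optimize over $s$\,---\,exactly as in the proof of Proposition~\ref{prop-exp-ach}\,---\,yields
\begin{align*}
\eps^{\rm{NS}}(M,\cW\otimes \cI_2)\le M^{(1-\alpha)/\alpha}\exp\!\Big({-\tfrac{1-\alpha}{\alpha}\inf_{\sigma_B}D_\alpha(\omega_{RB}\|\rho_R\otimes\sigma_B)}\Big),
\end{align*}
where turning $\sup_{\sigma_B}$ into $\inf_{\sigma_B}$ inside the divergence uses $\alpha<1$.

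Next, specializing to $\cW^{\otimes n}$ with $M=e^{nr}$, making the product choice $\rho_{R^n}=\rho_R^{\otimes n}$, and using $J_{\cW^{\otimes n}}=J_\cW^{\otimes n}$ produces the $n$-copy bound with $\omega_{R^n B^n}=\omega_{RB}^{\otimes n}$. Invoking additivity of the fully quantum Petz R\'enyi mutual information on iid inputs,
\begin{align*}
\inf_{\sigma_{B^n}}D_\alpha(\omega_{RB}^{\otimes n}\|\rho_R^{\otimes n}\otimes\sigma_{B^n})=n\inf_{\sigma_B}D_\alpha(\omega_{RB}\|\rho_R\otimes\sigma_B),
\end{align*}
reduces the exponent to single-letter form. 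Taking $-\tfrac{1}{n}\log$ and optimizing over $\rho_R\in\cS(R)$ and $\alpha\in(0,1]$ then yields the claimed lower bound on $E^{\rm{ANS}}(r)$.

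The main obstacle is the additivity step, since Lemma~\ref{app-lem-additivity} as cited only covers the classical-quantum setting. I would establish it via a quantum Sibson-type identity: H\"older's inequality with exponents $(1/\alpha,1/(1-\alpha))$, whose equality case is $\sigma_B^\star\propto K_B^{1/\alpha}$, shows
\begin{align*}
\exp\!\big({-(1-\alpha)\inf_{\sigma_B}D_\alpha(\omega_{RB}\|\rho_R\otimes\sigma_B)}\big)=\|K_B\|_{1/\alpha},\quad K_B:=\ptr{R}{\omega_{RB}^\alpha(\rho_R^{1-\alpha}\otimes \dI_B)}.
\end{align*}
Since $K_{B^n}=K_B^{\otimes n}$ under iid, multiplicativity of Schatten norms on tensor products gives $\|K_{B^n}\|_{1/\alpha}=\|K_B\|_{1/\alpha}^n$, which is the required additivity.
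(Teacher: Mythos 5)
Your proof is correct and follows the paper's argument step for step, with one small misapprehension at the additivity stage. You claim Lemma~\ref{app-lem-additivity} ``only covers the classical-quantum setting,'' but it is actually stated (and cited from \cite{Hayashi2016Oct}) for general bipartite quantum states $\rho_{AB}\in\cS(AB)$; the paper therefore simply applies it to $\omega_{RB}=\sqrt{\rho_R}(J_\cW)_{RB}\sqrt{\rho_R}$ with $\tau_R=\rho_R$ and is done. Your alternative derivation of additivity via a fully quantum Sibson identity is nonetheless valid and self-contained: writing $\exp(-(1-\alpha)\inf_{\sigma_B}D_\alpha(\omega_{RB}\|\rho_R\otimes\sigma_B))=\sup_{\sigma_B}\tr{K_B\,\sigma_B^{1-\alpha}}=\|K_B\|_{1/\alpha}$ by tracial H\"older with exponents $(1/\alpha,1/(1-\alpha))$, and then using $K_{B^n}=K_B^{\otimes n}$ together with multiplicativity of Schatten norms, gives exactly the required additivity. (One cosmetic remark: your $K_B=\ptr{R}{\omega_{RB}^\alpha(\rho_R^{1-\alpha}\otimes\dI_B)}$ need not be Hermitian as written; replace it with the equivalent $\ptr{R}{\rho_R^{(1-\alpha)/2}\omega_{RB}^\alpha\rho_R^{(1-\alpha)/2}}$, which is PSD by cyclicity, before invoking H\"older.) So the approach is the same as the paper's; your Sibson route is an acceptable---and arguably more transparent---substitute for the cited additivity lemma, but it was not needed to fill a gap.
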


The proof is similar to the proofs of Propositions \ref{prop-dual} \& \ref{prop-exp-ach}, and is included in Appendix \ref{app:fully-quantum} for completeness. Similar to Section \ref{sec:improved rounding}, one could achieve the same performance as in Proposition \ref{prop-exp-ach-quantum} without activation for channels with positive zero-error non-signaling classical capacity or above a critical rate. On the other hand, the converse result (sphere-packing bound) is hard to generalize to the fully quantum setting even for plain coding. The main obstacle for this generalization is the minimization over arbitrary quantum input that can be non product. This is similar to the difficulty faced in the problem of quantum channel discrimination with adaptive strategies (see, e.g., \cite{Cooney2016Jun}).


\section{Conclusion}
\label{sec:conclusion}

We demonstrated that the (activated) non-signaling error exponents for coding over classical-quantum channels are determined by the sphere packing bound with Petz-R\'enyi divergence for all rates, analogous to the classical case. Moreover we showed that activation is not needed for channels with  positive zero-error non-signaling capacity, channels outputting pure states, and for all channels above a critical rate.  Future research directions include $(a)$ investigating the necessity of activation for optimal non-signaling error exponents, 
$(b)$ generalizing the sphere packing bound to the fully quantum setting, and $(c)$ designing rounding protocols similar to the ones in \cite{barman2017algorithmic,Fawzi19,Oufkir24} on channel coding, or \cite{Berta24,Cao24,Yao24} on channel simulation, or \cite{Fawzi24,Ferme24} on network coding problems, with the goal of relating the sphere packing bound with plain, shared randomness, or shared entanglement coding strategies.


\section*{Acknowledgments}
We would like to thank the Erd\H{o}s Center for organizing The Focused Workshop on Quantum R\'enyi Divergences, where part of this work was carried out. We also thank Mark M. Wilde for pointing us to the references \cite{Jencova2018Aug} and \cite{Wilde2018Aug}. 
\\MB and AO acknowledge funding by the European Research Council (ERC Grant Agreement No. 948139), MB acknowledges support from the Excellence Cluster - Matter and Light for Quantum Computing (ML4Q). MT is supported by the National Research Foundation, Singapore and A*STAR under its CQT Bridging Grant.

\printbibliography


\appendix 

\section{Method of types}
\label{sec-types}

Let $n$ be an integer and  $\cX$ be a finite alphabet of size $|\cX|$. Let $x_1^n = x_1\cdots x_n$ be an element of $\cX^n$. For $x\in\cX$ we define $n(x|x_1^n)$ to be the number of occurrences of $x$ in the sequence $x_1\cdots x_n$
\begin{equation}
n(x|x_1^n) = \sum_{t=1}^n \bid\{x_t=x\}. 
\end{equation}
A type $T$ is a probability distribution on $\cX$ of the form 
\begin{equation}
T=\Big\{\frac{n_x}{n}\Big\}_{x\in \cX} \text{ where } n_x\in \mathbb{N} \text{ and } \sum_{x\in \cX} n_x=n. \label{eq:type}
\end{equation}
The set of types of alphabet $\cX$ of length $n$ is denoted $\cT_n(\cX)$. It is a finite set of size satisfying 
\begin{equation}
|\cT_n(\cX)|\le (n+1)^{|\cX|}. 
\end{equation}
This simple bound can be proved using the simple  observation that each $n_x$ in \eqref{eq:type} satisfies $n_x \in \{0,1, \dots, n\}$ and thus it has at most $n+1$ possibilities. We say that $x_1^n = x_1\cdots x_n$ has type $T$ and write $x_1^n \sim T$ if for all $x\in \cX$ we have $ {n(x| x_1^n)} = {n}T_x$. A probability distribution $p$ on $\cX^n$ is permutation invariant if for all permutation $\sigma \in \fS_n$, for all $ x_1\cdots x_n \in \cX^n$ we have 
\begin{equation}
p(x_1\cdots x_n) = p(x_{\sigma_1}\cdots x_{\sigma_n}).
\end{equation}
Let  $T\in \cT_n(\cX)$ be a type and  $p$ be a permutation invariant  probability distribution on $\cX^n$.
Clearly if two sequences $x_{1}^n$ and $y_{1}^n$ have the same type $T$ then $y_{1}^n$ can be obtained by permuting the elements of $x_{1}^n$ so
\begin{equation}
p(x_1^n)= p(y_1^n) , \quad \forall x_1^n\sim T, \; \forall y_1^n\sim T.
\end{equation}
We denote this value by $\frac{\alpha_T}{|T|}$ where $|T|$ is the number of elements in $\cX^n$ of type $T$. So we can write 
\begin{equation}
p = \sum_{T\in \cT_n(\cX)} \alpha_T \cU_T,
\end{equation}
where $\cU_T$ is the uniform probability distribution supported on $T$:
\begin{equation}
\cU_T(x_1^n) = \frac{1}{|T|}\bid\{x_1^n\sim T\}.
\end{equation}
Since $p$ and $\{\cU_T\}_{T\in \cT_n(\cX)}$ are all probability distributions on $\cX^n$, $(\alpha_T)_{T\in \cT_n(\cX)}$ is a probability distribution on $\cT_n(\cX)$. In particular, since we have $|\cT_n(\cX)|\le (n+1)^{|\cX|}$ there is a type $T^\star$ such that 
\begin{equation}
\alpha_{T^\star} \ge \frac{1}{ (n+1)^{|\cX|}}
\end{equation}
because otherwise $\sum_{T\in \cT_n(\cX)}\alpha_T< \sum_{T\in \cT_n(\cX)} \frac{1}{ (n+1)^{|\cX|}} = |\cT_n(\cX)|\cdot\frac{1}{ (n+1)^{|\cX|}} \le 1$ which contradicts the fact that $(\alpha_T)_{T\in \cT_n(\cX)}$ is a probability distribution on $\cT_n(\cX)$.


\section{Alternative proof of the sphere packing bound}
\label{app-spb}

Here restate and give a self-contained proof of Proposition \ref{prop-exp-cvs}.

\begin{prop}[Restatement of Proposition \ref{prop-exp-cvs}]\label{prop-exp-cvs-app}
Let $\cW$ be a classical-quantum channel with capacity $C(\cW)$ and zero R\'enyi capacity $C_0(\cW)$, and $ r\in (C_0(\cW), C(\cW))$. Then, there exists a constant $A>0$ and an integer $N_0$ depending only in $r$ such that for all $n\ge N_0$,
\begin{align}
\eps^{\rm{NS}}(e^{nr}, \cW^{\otimes n}\otimes \cI_2 ) \ge     \inf_{\alpha \in(0,1]}  \inf_{p\in \cP(\cX)} \sup_{\sigma \in \cS(\cH)} \exp(-A\sqrt{n})\exp\Big( -  n\frac{1-\alpha}{\alpha} [\exs{x\sim p}{D_{\alpha}( W_x\| \sigma )}-r]\Big).
\end{align}
This further implies that
\begin{align}
E^{\rm{ANS}}(r)\le  \sup_{\alpha \in(0,1]} \sup_{p\in \cP(\cX)} \inf_{\sigma \in \cS(\cH)} \frac{1-\alpha }{\alpha}\left( \exs{x\sim p}{D_{\alpha}( W_x\| \sigma )}-r\right). 
\end{align}
\end{prop}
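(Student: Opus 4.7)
My approach is the standard sphere-packing lower bound applied to the dual form of the activated non-signaling error given by Proposition~\ref{prop-dual}, which feeds directly into a Blahut-style hypothesis-testing analysis.

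The first step is to invoke Lemma~\ref{lem:constant-composition} to reduce the infimum over $p\in\cP(\cX^n)$ to an infimum over types, obtaining
\begin{align*}
\eps^{\rm{NS}}(e^{nr},\cW^{\otimes n}\otimes \cI_2)\ge \inf_{T\in\cT_n(\cX)}\sup_{\sigma\in\cS(\cH),\,s\ge 0}\gamma_n\tr{W_T^{\otimes n}\wedge e^{nr}s\,\sigma^{\otimes n}}-s,
\end{align*}
with $\gamma_n=\tfrac{1}{2}(n+1)^{-|\cX|}$. Since $\log(1/\gamma_n)=O(\log n)$, this polynomial prefactor is absorbed into the eventual $\exp(-A\sqrt{n})$ correction.

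The second step is the technical heart: for each fixed type $T$, state $\sigma$ and parameter $\alpha\in(0,1)$, I would choose the scalar $s=s(\alpha,T,\sigma,n)$ appropriately (so as to sit in the quantum Hoeffding regime with target type-II exponent $r$) and establish a quantum Chernoff/Hoeffding-type lower bound of the form
\begin{align*}
\gamma_n\tr{W_T^{\otimes n}\wedge e^{nr}s\,\sigma^{\otimes n}}-s\ge \exp(-A\sqrt{n})\exp\!\big(-n\tfrac{1-\alpha}{\alpha}\big[\exs{x\sim T}{D_\alpha(W_x\|\sigma)}-r\big]\big).
\end{align*}
This is a matching converse companion to Audenaert's inequality~\eqref{Chernoff} and is provided by a Blahut-style tilted change-of-measure combined with the Nussbaum--Szkola / Audenaert analysis of~\cite{Blahut1974Jul,Audenaert2008Apr,Cheng2019Jan}. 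The \emph{Petz} R\'enyi divergence (rather than the sandwiched one) appears here via the iid factorization
\begin{align*}
\tr{\big(W_T^{\otimes n}\big)^{\alpha}\big(\sigma^{\otimes n}\big)^{1-\alpha}}=\prod_{x\in\cX}\tr{W_x^{\alpha}\sigma^{1-\alpha}}^{nT_x}=\exp\!\big(-n(1-\alpha)\exs{x\sim T}{D_\alpha(W_x\|\sigma)}\big),
\end{align*}
mirroring the achievability direction from Proposition~\ref{prop-exp-ach}. The third step is then purely organisational: take the infimum over types $T$ (equivalently $p\in\cP(\cX)$), the supremum over $\sigma\in\cS(\cH)$, and the infimum over $\alpha\in(0,1]$ to obtain the claimed non-asymptotic lower bound on $\eps^{\rm{NS}}(e^{nr},\cW^{\otimes n}\otimes \cI_2)$, then extract the asymptotic bound on $E^{\rm{ANS}}(r)$ by taking $-\tfrac{1}{n}\log$ and $n\to\infty$.

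\textbf{Main obstacle.} The bulk of the work is in the second step, where one needs a quantitative \emph{lower} bound on $\tr{A\wedge B}$ matching Audenaert's \emph{upper} bound~\eqref{Chernoff} up to only a subexponential $\exp(-A\sqrt{n})$ correction; in the classical case this is routine via tilted measures, but the quantum version requires either Nussbaum--Szkola distributions or a pinching argument to deal with non-commutativity. A secondary difficulty is ensuring that the prefactor constant $A$ is uniform in both the type $T$ and the parameter $\alpha$: this crucially uses that $r\in(C_0(\cW),C(\cW))$ is bounded away from both endpoints, forcing the sphere-packing optimizer $\alpha^\star(r)$ to lie in a compact subinterval of $(0,1)$ independent of $n$ so that all spectral quantities entering the Chernoff estimate remain controlled.
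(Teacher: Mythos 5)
Your high-level plan is the same as the paper's: reduce to types via Lemma~\ref{lem:constant-composition}, apply Audenaert's lower bound~\cite{Audenaert2008Apr} to pass from $\tr{A\wedge B}$ to the Nussbaum--Szko\l a classical problem, and run a Blahut-style~\cite{Blahut1974Jul} tilted change-of-measure with a Chebyshev concentration step, reusing the saddle-point and variance machinery of~\cite{Cheng2019Jan}. You have also correctly identified both obstacles (a quantum converse companion to Audenaert's inequality, and uniformity of the correction constant using $r>C_0(\cW)$). So the architecture is right.

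However, the precise claim you state in Step~2 is false. You assert that for \emph{every} fixed $(T,\sigma,\alpha)$ one can choose $s$ so that
\begin{align*}
\gamma_n\tr{W_T^{\otimes n}\wedge e^{nr}s\,\sigma^{\otimes n}}-s\ \ge\ \exp(-A\sqrt{n})\,\exp\!\Big(-n\tfrac{1-\alpha}{\alpha}\big[\exs{x\sim T}{D_\alpha(W_x\|\sigma)}-r\big]\Big).
\end{align*}
Taking $\sigma$ to overlap strongly with the dominant outputs of $T$ makes $\exs{x\sim T}{D_\alpha(W_x\|\sigma)}<r$, so the right-hand side grows like $\exp(+cn)$ for some $c>0$, while the left-hand side is at most $\gamma_n\le 1$. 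The root cause is that the Blahut change of measure through a tilted distribution $v^\alpha$ produces exponents $D(v^\alpha\|p^T)$ and $D(v^\alpha\|q^T)$, and the identities
\begin{align*}
\exs{x\sim T}{D(v^{x,\alpha}\|q^x)}=r,\qquad \exs{x\sim T}{D(v^{x,\alpha}\|p^x)}=\tfrac{1-\alpha}{\alpha}\big[\exs{x\sim T}{D_\alpha(W_x\|\sigma)}-r\big]
\end{align*}
hold \emph{only} at the saddle-point $(\alpha_T^\eta,\sigma_T^\eta)$ of the sphere-packing optimization at a shifted rate $r'=r+\eta$ (this is exactly \cite[Lemma~16]{Cheng2019Jan}). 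For other $(\sigma,\alpha)$ the two Kullback--Leibler exponents are governed by $\psi'(\alpha)$ (the derivative of the log-partition function of the tilted family), which is not equal to the sphere-packing exponent. You must therefore fix $\sigma$ and $\alpha$ to be the saddle-point optimizers for each $T$ \emph{before} applying Chebyshev, not afterwards, and then relax to $\inf_\alpha\inf_p\sup_\sigma$ at the very end using the minimax theorem. Once this is corrected (and the variance terms are bounded uniformly via \cite[Appendix~C]{Cheng2019Jan} rather than by compactness of the optimal $\alpha$ alone), your argument coincides with the paper's Appendix~\ref{app-spb} proof.
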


\begin{proof}
Let  $\sigma \in \cS(\cH)$ and consider the eigenvalue decomposition of $W_x= \sum_{i=1}^{d} \lambda^x_i \proj{\phi_i^x}$ and $\sigma= \sum_{j=1}^{d} \mu_j \proj{\psi_j}$. Let $(p^x,q^x)\in  \cP([d]\times [d])^2$ be the Nussbaum-Szko\l a probability distributions for the pair of states $(W_x, \sigma)$ defined as:
\begin{align}\label{eq:NS}
p^{x}_{i,j} = \lambda^x_i  |\spr{\phi^x_i}{\psi_j}|^2 \; \text{ and }\; q^{x}_{i,j}  = \mu_j |\spr{\phi^x_i}{\psi_j}|^2, \quad \forall (i,j)\in [d]\times [d].
\end{align}
Let $T \in  \cT_n(\cX)$ and for $x\in \cX$ denote by $n(x|T) = nT_x$ the number of $x$ in the type $T$. We have $W_T^{\otimes n} = \bigotimes_{x\in \cX} W_x^{\otimes n(x|T)}$ so if 
$(p^T,q^T)$ are the Nussbaum-Szko\l a probability distributions for the pair of states $(W_T^{\otimes n}, \sigma^{\otimes n})$ we have that 
\begin{align}
p^T= \bigotimes_{x\in \cX} (p^x)^{\otimes n(x|T)} \;\text{ and }\; q^T = \bigotimes_{x\in \cX} (q^x)^{\otimes n(x|T)}.
\end{align}
For a set of probability distributions $v=\{v^{x}\}_{x\in \cX}\in \cP([d]\times [d])^{\cX}$ (defined later in the proof), we denote 
\begin{align}
v^T= \bigotimes_{x\in \cX} (v^x)^{\otimes n(x|T)}.  
\end{align}
Setting $\pi_0=\pi_1=1$ in \cite[Proposition $2$]{Audenaert2008Apr} and using the identity $\tr{A\wedge B} = \inf_{0\mle O\mle \dI} \tr{AO} +\tr{B(\dI-O)}$ we get for any integer $M$ and any real number $s\ge 0$
\begin{align}
\tr{W_T^{\otimes n}\wedge Ms \sigma^{\otimes n} }\ge \frac{1}{2} \sum_{i,j}  \min\left( p^{T}_{i,j} , Ms q^{T}_{i,j} \right).
\end{align}
Hence, using Lemma \ref{lem:constant-composition}, we obtain the lower bound\footnote{This step is similar to Nagaoka's method \cite{Nagaoka2006Nov}: One can obtain Nagaoka's result by choosing $s$ carefully in \eqref{ng}.}
\begin{align}
\eps^{\rm{NS}}(M, \cW^{\otimes n}\otimes \cI_2) 
&\ge \inf_{T\in \cT_n(\cX)}\sup_{\sigma \in \cS(\cH),\, s\ge 0}  \gamma_n\tr{W_T^{\otimes n}\wedge Ms\sigma^{\otimes n}}-s
\\&\ge  \inf_{T\in \cT_n(\cX)} \sup_{\sigma \in \cS(\cH) ,\, s\ge 0} \frac{1}{2}  \sum_{i,j}  \gamma_n\min\left( p^{T}_{i,j} , Ms q^{T}_{i,j} \right)-s \label{ng}
\\&=  \inf_{T\in \cT_n(\cX)} \sup_{\sigma \in \cS(\cH),\,  s\ge 0} \frac{\gamma_n}{2} \sup_{v\in \cP([d]\times [d])^{\cX}}\sum_{i,j} v^T_{i,j} \min\bigg( \frac{p^{T}_{i,j} }{v^T_{i,j} }, Ms\cdot  \frac{q^{T}_{i,j} }{v^T_{i,j} }\bigg)-s.
\end{align}
For a set of probability distributions $v=\{v^{x}\}_{x\in \cX}$, define the good event 
\begin{align}
\cG^{T}_{v,p}=\left\{(i,j) : \log\left(\tfrac{p^{T}_{i,j} }{v^{T}_{i,j} } \right)\ge -n\exs{x\sim T}{D(v^{x}\|p^{x})}-\sqrt{4n \exs{x\sim T}{ \var(v^{x}\|p^x)} } \right\}.
\end{align}
By Chebyshev's inequality we have 
\begin{align}\label{Chebyshev 1}
\sum_{ i,j } v^{T}_{i,j} \bid\{ (i,j)\in \cG^{T}_{v,p}\}  \ge \frac{3}{4}.
\end{align}
Similarly, we define the good event related to $q$:
\begin{align}
\cG^{T}_{v,q}=\left\{(i,j) : \log\left(\tfrac{q^{T}_{i,j} }{v^{T}_{i,j} } \right)\ge - n\exs{x\sim T}{D(v^{x}\|q^{x})}-\sqrt{4n\exs{x\sim T}{ \var(v^{x}\|q^x)}} \right\}
\end{align}
Again, by Chebyshev's inequality we have 
\begin{align}\label{Chebyshev 2}
\sum_{ i,j } v^{T}_{i,j} \bid\{ (i,j)\in \cG^{T}_{v,q}\}\ge \frac{3}{4}.
\end{align}
We refer to Appendix \ref{app-Chebyshev} for detailed proofs of these inequalities.\footnote{We note that this idea was used by Blahut \cite{Blahut1974Jul} to prove the classical SPB.} By the union bound, we obtain from \eqref{Chebyshev 1} and \eqref{Chebyshev 2}
\begin{align}
\sum_{ i,j } v^{T}_{i,j}  \bid\{(i,j)\in \cG^{T}_{v,p}\cap\cG^{T}_{v,q} \} \ge \frac{1}{2}.
\end{align}
Therefore the activated non-signalling error probability can be lower bounded as follows:
\begin{align}
&\eps^{\rm{NS}}(M, \cW^{\otimes n}\otimes \cI_2) 
\\&\ge  \inf_{T\in \cT_n(\cX)} \sup_{\sigma \in \cS(\cH), \, s\ge 0} \frac{\gamma_n}{2} \sup_{v\in \cP([d]\times [d])^{\cX}} \sum_{i,j} v^T_{i,j} \min\bigg( \frac{p^{T}_{i,j} }{v^T_{i,j} }, Ms\cdot  \frac{q^{T}_{i,j} }{v^T_{i,j} }\bigg)-s
\\&\ge  \inf_{T\in \cT_n(\cX)} \sup_{\sigma \in \cS(\cH), \, s\ge 0} \frac{\gamma_n}{2} \sup_{v\in \cP([d]\times [d])^{\cX}}\sum_{(i,j)\in \cG^{T}_{v,p}\cap\cG^{T}_{v,q}} v^T_{i,j} \exp\min\bigg( \log\frac{p^{T}_{i,j} }{v^T_{i,j} }, \log(Ms)+\log  \frac{q^{T}_{i,j} }{v^T_{i,j} }\bigg)-s
\\&
\begin{aligned}
\ge  \inf_{T\in \cT_n(\cX)} \sup_{\sigma \in \cS(\cH) ,\, s\ge 0} \frac{\gamma_n}{4} \sup_{v\in \cP([d]\times [d])^{\cX}} \exp\min\Big(  - n\exs{x\sim T}{D(v^{x}\|p^{x})}-\sqrt{4n \exs{x\sim T}{ \var(v^{x}\|p^x)}},
\\ \log(Ms)  - n\exs{x\sim T}{D(v^{x}\|q^{x})}-\sqrt{4n \exs{x\sim T}{ \var(v^{x}\|q^x)}}\Big)-s. 
\end{aligned}\label{eq-min}
\end{align}
In the following we use the techniques of \cite{Cheng2019Jan}. \cite[Proposition 3]{Cheng2019Jan} proved the following saddle-point result
\begin{align}\label{saddle}
\sup_{\sigma\in \cS(\cH)}\inf_{0<\alpha \le 1} \frac{1-\alpha}{\alpha} (r-   \exs{x\sim T}{D_{\alpha}(W_x\|\sigma)})=   \inf_{0<\alpha \le 1}\sup_{\sigma\in \cS(\cH)} \frac{1-\alpha}{\alpha} (r-   \exs{x\sim T}{D_{\alpha}(W_x\|\sigma)}).
\end{align}
We denote by $(\alpha_T^\eta, \sigma_T^\eta)$ an optimizer of \eqref{saddle} for a rate $r'=r+\eta$ slightly larger than $r$, $\eta$ will be chosen later so that $\eta \rightarrow 0$ when $n\rightarrow \infty$.

For this specific state $\sigma = \sigma_T^\eta$, we consider $(p^x,q^x)$ the Nussbaum-Szko\l a probability distributions for the pair of states $(W_x, \sigma_T^\eta)$ defined in \eqref{eq:NS}. 
Then, we choose $v = v^{\alpha_T^\eta}$ to be the tilted probability distribution of parameter $\alpha_T^{\eta}$
\begin{align}\label{eq:tilted}
v^x_{i,j} = v^{x, \alpha_T^\eta}_{i,j} = \frac{(p^x_{i,j})^{\alpha_T^\eta}(q^x_{i,j})^{1-\alpha_T^\eta}}{\sum_{i',j'} (p^x_{i',j'})^{\alpha_T^\eta}(q^x_{i',j'})^{1-\alpha_T^\eta}}, \quad \forall x\in \cX.
\end{align}
Moreover since $(\alpha_T^\eta, \sigma_T^\eta)$ is an optimizer of \eqref{saddle} for the rate $r'=r+\eta$ we have that \cite[Lemma 16]{Cheng2019Jan} 
\begin{align}
\exs{x\sim T}{D(v^{x, \alpha_T^\eta}\|p^{x})}&= \frac{1-\alpha_T^\eta}{\alpha_T^\eta} [\exs{x\sim T}{D_{\alpha}(W_x\|\sigma_T^\eta)}-r'], \\
\exs{x\sim T}{D(v^{x, \alpha_T^\eta}\|q^{x})}&=r'.
\end{align}
Besides, for the choice of $\sigma = \sigma_T^\eta$ and $\alpha = \alpha_T^\eta$ the variance terms in \eqref{eq-min} are uniformly bounded \cite[Appendix C]{Cheng2019Jan}. Therefore by setting $M=e^{nr}$ and  choosing $v=v^{\alpha_T}$ from \eqref{eq:tilted} and a constant $A>0$ depending only on the channel $\cW$ and on the rate $r$ \cite[Eq. $(44)$]{Cheng2019Jan}, we obtain
\begin{align}
&\eps^{\rm{NS}}(e^{nr}, \cW^{\otimes n}\otimes \cI_2) 
\\& \begin{aligned}
\ge \inf_{T\in \cT_n(\cX)} \sup_{s\ge 0} \frac{\gamma_n}{4} \sup_{v\in \cP([d]\times [d])^{\cX}} \exp\min\Big(  - n\exs{x\sim T}{D(v^{x}\|p^{x})}-\sqrt{4n \exs{x\sim T}{ \var(v^{x}\|p^x)}},
\\\log(Ms)  - n\exs{x\sim T}{D(v^{x}\|q^{x})}-\sqrt{4n \exs{x\sim T}{ \var(v^{x}\|q^x)}}\Big)-s
\end{aligned}
\\&\ge   \inf_{T\in \cT_n(\cX)} \sup_{ s\ge 0} \exp(-A\sqrt{n})  \exp\min\Big(  - n\tfrac{1-\alpha_T^\eta}{\alpha_T^\eta} \big[\mathbb{E}_{x\sim T} \big[D_{\alpha_T^\eta}(W_x\|\sigma_T^\eta)\big]-r-\eta\big],\log(s) -n\eta \Big)-s. 
\end{align}
We choose $s$ such that the arguments of the last minimum function are equal, i.e.,
\begin{align}
\log(s) -n\eta =  - n\tfrac{1-\alpha_T^\eta}{\alpha_T^\eta} \big[\mathbb{E}_{x\sim T} \big[D_{\alpha_T^\eta}(W_x\|\sigma_T^\eta)\big]-r-\eta\big]
\end{align}
and obtain
\begin{align}
&\eps^{\rm{NS}}(e^{nr}, \cW^{\otimes n}\otimes \cI_2)  
\\&\ge   \inf_{T\in \cT_n(\cX)} \exp(-A\sqrt{n})  \exp\min\Big(   - n\tfrac{1-\alpha_T^\eta}{\alpha_T^\eta} \big[\mathbb{E}_{x\sim T} \big[D_{\alpha_T^\eta}(W_x\|\sigma_T^\eta)\big]-r-\eta\big],\log(s) -n\eta \Big)-s
\\&  \begin{aligned}
= \inf_{T\in \cT_n(\cX)}  \exp(-A\sqrt{n})  \exp\Big(   - n\tfrac{1-\alpha_T^\eta}{\alpha_T^\eta} \big[\mathbb{E}_{x\sim T} \big[D_{\alpha_T^\eta}(W_x\|\sigma_T^\eta)\big]-r-\eta\big] \Big)
\\ -\exp\Big(n\eta   - n\tfrac{1-\alpha_T^\eta}{\alpha_T^\eta} \big[\mathbb{E}_{x\sim T} \big[D_{\alpha_T^\eta}(W_x\|\sigma_T^\eta)\big]-r-\eta\big]\Big)
\end{aligned}
\\&\overset{(a)}{\ge}    \inf_{T\in \cT_n(\cX)}  \frac{1}{2}\exp(-A\sqrt{n})  \exp\Big(  -  n\tfrac{1-\alpha_T^\eta}{\alpha_T^\eta} \big[\mathbb{E}_{x\sim T} \big[D_{\alpha_T^\eta}(W_x\|\sigma_T^\eta)\big]-r-\eta\big] \Big)
\\&\overset{(b)}{=}      \inf_{T\in \cT_n(\cX)} \inf_{0<\alpha\le 1} \sup_{\sigma\in \cS(\cH)} \frac{1}{2}\exp(-A\sqrt{n})  \exp\Big(  - n\frac{1-\alpha}{\alpha} [\exs{x\sim T}{D_{\alpha}(W_x\|\sigma)}-r] -2\sqrt{n}A\frac{1-\alpha}{\alpha} \Big), \label{eq-alpha}
\end{align}
where in $(a)$ we choose $\eta = - \frac{2A}{\sqrt{n}}$ and in $(b)$ we use \eqref{saddle}. The additional term $2A\frac{1-\alpha}{\alpha} \sqrt{n}$ can make the lower bound useless if optimal $\alpha$ in \eqref{eq-alpha} is close to $0$. However, in the following we argue that for rates $r> C_0(\cW)$ the optimal $\alpha$ in \eqref{eq-alpha} cannot be arbitrary small. \footnote{A similar step in  \cite[Eq $(47)$]{Cheng2019Jan} uses the monotonicity and convexity of the function $r\mapsto\sup_{0< \alpha \le 1} \inf_{\sigma\in \cS(\cH)}  \frac{1-\alpha}{\alpha}[\exs{x\sim T}{D_{\alpha}(W_x\|\sigma)}-r].$}

For $ C(\cW)>r> C_0(\cW)= \sup_{p\in \cP(\cX)}\inf_{\sigma\in \cS(\cH)}\exs{x\sim p}{D_{0}(W_x\|\sigma)}$, by the monotonicity of $\alpha\mapsto C_\alpha(\cW)$, we can choose $\alpha'\in (0,1)$ such that  $r> C_{\alpha}(\cW)$ if and only if $\alpha < \alpha'$  and a sufficiently large $n\ge 4A^2 (r-C_{\alpha'}(\cW))^{-2} +  4A^2 (1-\alpha')^{-2}$ such that 
for all $\alpha < \alpha'$, for all $T\in \cT_n(\cX)$ we have that:
\begin{align}
&  \frac{1}{2}\exp(-A\sqrt{n})  \exp\Big(  - n\frac{1-\alpha}{\alpha} \big[\inf_{\sigma\in \cS(\cH)}\exs{x\sim T}{D_{\alpha}(W_x\|\sigma)}-r\big] -2\sqrt{n}A\frac{1-\alpha}{\alpha} \Big)
\\&\overset{(a)}{\ge}  \frac{1}{2}\exp(-A\sqrt{n})  \exp\Big(  - n\frac{1-\alpha}{\alpha} [C_{\alpha}(\cW)-r] -2\sqrt{n}A\frac{1-\alpha}{\alpha} \Big)
\\&\overset{(b)}{\ge}   \frac{1}{2}\exp(-A\sqrt{n})  \exp\Big( \frac{1-\alpha'}{\alpha'}  \big( n [r-C_{\alpha'}(\cW)] -2\sqrt{n}A \big)\Big)
\\&\overset{(c)}{\ge}   \frac{1}{2}\exp(-A\sqrt{n})  \exp\Big( n\frac{1-\alpha'}{\alpha'}  \Big)\overset{(d)}{\ge} 1,
\end{align}
where we used in $(a)$   $\inf_{\sigma\in \cS(\cH)}\exs{x\sim T}{D_{\alpha}(W_x\|\sigma)}\le C_{\alpha}(\cW)$; in $(b)$ the fact that $\alpha \mapsto C_{\alpha}(\cW)$ (resp. $\alpha \mapsto \frac{1-\alpha}{\alpha}$) is non-decreasing \cite[Proposition $2$ $(h)$]{Cheng2019Jan} (resp.~decreasing); in $(c)$   $n\ge 4A^2 (r-C_{\alpha'}(\cW))^{-2}$  and in $(d)$  $n\ge 4A^2 (1-\alpha')^{-2}$.  In this case, the optimal $\alpha$ in \eqref{eq-alpha} should satisfy $\alpha \ge  \alpha'$ since otherwise the right hand side of \eqref{eq-alpha} will be at least $1$ which contradicts $ \eps^{\rm{NS}}(M, \cW^{\otimes n}) \le 1$. Therefore by restricting $\alpha \ge \alpha'$ and relaxing the minimization over $p$ to be on the set of probability distributions we obtain from \eqref{eq-alpha}
\begin{align}
& \eps^{\rm{NS}}(e^{nr}, \cW^{\otimes n} \otimes \cI_2) 
\\&\ge   \inf_{p\in \cP(\cX)} \inf_{\alpha'
\le \alpha\le 1} \sup_{\sigma\in \cS(\cH)} \frac{1}{2}\exp(-A\sqrt{n})  \exp\Big(  - n\frac{1-\alpha}{\alpha} [\exs{x\sim T}{D_{\alpha}(W_x\|\sigma)}-r] -2\sqrt{n}A\frac{1-\alpha}{\alpha} \Big) 
\\&\ge  \inf_{p\in \cP(\cX)} \inf_{\alpha'
\le \alpha\le 1} \sup_{\sigma\in \cS(\cH)} \frac{1}{2}\exp(-A\sqrt{n})  \exp\Big(  - n\frac{1-\alpha}{\alpha} [\exs{x\sim T}{D_{\alpha}(W_x\|\sigma)}-r] -2\sqrt{n}A\frac{1-\alpha'}{\alpha'} \Big) 
\\ &=    \inf_{p\in \cP(\cX)} \inf_{\alpha'
\le\alpha\le 1} \sup_{\sigma\in \cS(\cH)} \exp(-A'\sqrt{n})  \exp\Big(  - n\frac{1-\alpha}{\alpha} [\exs{x\sim p}{D_{\alpha}(W_x\|\sigma)}-r]  \Big) 
\end{align}
for a constant $A' = A+ 2A\frac{1-\alpha'}{\alpha'}$ depending only on the channel $\cW$ and on the rate $r$. Finally, we deduce the desired asymptotic converse bound
\begin{align}
\lim_{n\rightarrow\infty} -\frac{1}{n}\log\eps^{\rm{NS}}(e^{nr}, \cW^{\otimes n}\otimes \cI_2) 
&\le   \sup_{p\in \cP(\cX)} \sup_{\alpha' \le \alpha \le 1} \inf_{\sigma\in \cS(\cH)} \frac{1-\alpha}{\alpha} \left(\exs{x\sim P}{D_{\alpha}( W_x\|  \sigma)}-r\right)
\\&\le   \sup_{\alpha\in (0, 1]} \sup_{p\in \cP(\cX)}\inf_{\sigma\in \cS(\cH)}   \frac{1-\alpha}{\alpha} \left(\exs{x\sim P}{D_{\alpha}( W_x\|  \sigma)}-r\right).
\end{align}
\end{proof}


\section{Chebyshev approximations}
\label{app-Chebyshev}

We only prove \eqref{Chebyshev 1}, with \eqref{Chebyshev 2} being similar. Let $r=\{v^{x}\}_x$ be a set of probability distributions,  recall the definition of the good set 
\begin{align}
\cG^{T}_{v,p}=\left\{(i,j) : \frac{1}{n}\log\left(\frac{p^{T}_{i,j} }{v^{T}_{i,j} } \right)\ge - \exs{x\sim T}{D(v^{x}\|p^{x})}-\sqrt{\frac{1}{4n} \exs{x\sim T}{ \var(v^{x}\|p^x)} } \right\}.
\end{align}
Since we have independence:
\begin{align}
p^T= \bigotimes_{x\in \cX} (p^x)^{\otimes n(x|T)} \text{ and } r^T = \bigotimes_{x\in \cX} (v^{x})^{\otimes n(x|T)},
\end{align}
the first order can be computed exactly 
\begin{align}
\sum_{ i,j } v^{T}_{i,j} \frac{1}{n}\log\left(\frac{p^{T}_{i,j} }{v^{T}_{i,j} } \right) &= \sum_{ i_1^n, j_1^n } \prod_{s=1}^n {r^{x_s}_{i_s,j_s} } \frac{1}{n}\log\left(\prod_{t=1}^n \frac{p^{x_t}_{i_t,j_t} }{r^{x_t}_{i_t,j_t} } \right)
\\&= \sum_{t=1}^n\sum_{ i_t, j_t } r^{x_t}_{i_t, j_t} \frac{1}{n}\log\left( \frac{p^{x_t}_{i_t,j_t} }{r^{x_t}_{i_t,j_t} } \right)
\\&= \sum_{x\in \cX} \frac{n(x|T)}{n} \sum_{ i, j } v^{x}_{i, j} \log\left( \frac{p^{x}_{i,j} }{v^{x}_{i,j} } \right)
\\&= - \exs{x\sim T}{D(v^{x}\|p^{x})}. 
\end{align}
Similarly, using independence, the second order can be written as:
\begin{align}
&\sum_{ i,j } v^{T}_{i,j} \left(\frac{1}{n}\log\left(\frac{p^{T}_{i,j} }{v^{T}_{i,j} } \right) +\exs{x\sim T}{D(v^{x}\|p^{x})}\right)^2 
\\&= \sum_{ i,j } v^{T}_{i,j} \left(\frac{1}{n} \sum_{t=1}^n \log\left(\frac{p^{x_t}_{i_t,j_t} }{r^{x_t}_{i_t,j_t} } \right) +D(r^{x_t}\|p^{x_t})\right)^2 
\\&=  \sum_{ i,j } v^{T}_{i,j} \frac{1}{n^2} \sum_{t=1}^n \left(\log\left(\frac{p^{x_t}_{i_t,j_t} }{r^{x_t}_{i_t,j_t} } \right) +D(r^{x_t}\|p^{x_t})\right)^2 
\\&= \frac{1}{n^2} \sum_{t=1}^n  \sum_{ i_t,j_t } r^{x_t}_{i_t,j_t}  \left(\log\left(\frac{p^{x_t}_{i_t,j_t} }{r^{x_t}_{i_t,j_t} } \right) +D(r^{x_t}\|p^{x_t})\right)^2 
\\&= \frac{1}{n} \sum_{x\in \cX} \frac{n(x|T)}{n}  \sum_{ i,j } v^{x}_{i,j}  \left(\log\left(\frac{p^{x}_{i,j} }{v^{x}_{i,j} } \right) +D(v^{x}\|p^{x})\right)^2 
\\&= \frac{1}{n} \exs{x\sim T}{\var(v^{x}\|p^{x})}. 
\end{align}
By Chebyshev's inequality we have 
\begin{align}
&\sum_{ i,j } v^{T}_{i,j} \bid\{ (i,j)\in \overline{\cG^{T}_{v,p}}\}
\\&=\sum_{ i,j } r^{T}_{i,j}\left\{ \frac{1}{n}\log\left(\tfrac{p^{T}_{i,j} }{v^{T}_{i,j} } \right)< - \exs{x\sim T}{D(v^{x}\|p^{x})}-\sqrt{\tfrac{4}{n} \exs{x\sim T}{ \var(v^{x}\|p^x)} } \right\}
\\&\le \frac{\tfrac{1}{n} \exs{x\sim T}{\var(v^{x}\|p^{x})}}{\tfrac{4}{n} \exs{x\sim T}{ \var(v^{x}\|p^x)}} = \frac{1}{4}.
\end{align}


\section{Reverse Petz-sandwiched R\'enyi inequality}
\label{app:reverse-petz-sandwiched}

Here, we prove the full version of Lemma \ref{lem:sand-vs-petz} along with new inequalities.

\begin{lem}[Reformulated Lemma \ref{lem:sand-vs-petz}]\label{app:sand-vs-petz}
 We have that 
\begin{align}
\frac{1}{\alpha}\widetilde{D}_{\alpha}( \rho\| \sigma)  &\ge  \widetilde{D}_{\frac{1}{2-\alpha}}( \rho\| \sigma), && \forall \rho\mge 0,\, \forall  \sigma \in \cS(\cH),\, \forall \alpha\in (0,1], 
\\\widetilde{D}_{\frac{1}{2-\alpha}}( \rho\| \sigma) &\ge D_{\alpha}( \rho\| \sigma), && \forall \rho\in \cS(\cH),\, \forall \sigma \mge 0,\,  \forall \alpha\in [0,2].  \label{eq:rev-petz-sand}
\end{align}
\end{lem}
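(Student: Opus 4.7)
The plan is to reduce both inequalities to estimates on Schatten norms of two closely related matrices, and then apply the Schatten--H\"older inequality $\|AB\|_r \le \|A\|_p \|B\|_q$ with $1/r = 1/p + 1/q$ in such a way that the ``free'' factor has Schatten norm exactly $1$ by virtue of trace-normalization. First set $\beta := 1/(2-\alpha)$; the algebraic identity $(1-\beta)/(2\beta) = (1-\alpha)/2$ then ensures that the sandwiching exponents in the Petz and sandwiched divergences coincide. Defining $Y := \rho^{1/2}\sigma^{(1-\alpha)/2}$ and $W := \sigma^{(1-\alpha)/2}\rho^{\alpha/2}$, one has the Schatten-norm identifications
\begin{align}
\tr{\rho^\alpha \sigma^{1-\alpha}} = \|W\|_2^2, \qquad \tr{(\sigma^{(1-\beta)/(2\beta)}\rho\sigma^{(1-\beta)/(2\beta)})^\beta} = \|Y\|_{2\beta}^{2\beta}.
\end{align}
After taking logarithms and dividing by $\alpha-1$ (whose sign flips across $\alpha=1$), both divergence inequalities will reduce to one-sided comparisons between these two Schatten norms.

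For \eqref{eq:rev-petz-sand}, I would use the telescoping identities $W = Y^*\rho^{(\alpha-1)/2}$ (valid when $\alpha\ge 1$) and $Y^* = W\rho^{(1-\alpha)/2}$ (valid when $\alpha\le 1$), so that the exponent of $\rho$ is always nonnegative. Applying Schatten--H\"older with the index pair $(p,r)=(2,2\beta)$, the H\"older partner $q$ on the $\rho$-factor is forced to be $q = 2\beta/|\beta-1|$; and the identity $|1-\alpha|\cdot q/2 = 1$ (another rewriting of $\beta=1/(2-\alpha)$) gives $\|\rho^{|1-\alpha|/2}\|_q = (\tr\rho)^{1/q} = 1$ because $\rho$ is a state. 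This yields $\|W\|_2 \le \|Y\|_{2\beta}$ for $\alpha\ge 1$ and its reverse for $\alpha\le 1$; in both cases, after translating back through the logarithm and the sign of $\alpha-1$, we obtain $D_\alpha(\rho\|\sigma) \le \widetilde{D}_{1/(2-\alpha)}(\rho\|\sigma)$.

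The inequality $\tfrac{1}{\alpha}\widetilde{D}_\alpha \ge \widetilde{D}_{1/(2-\alpha)}$ is proved by the mirror-image trick that now exploits $\sigma$ (rather than $\rho$) being a state. Writing $Y_\gamma := \rho^{1/2}\sigma^{(1-\gamma)/(2\gamma)}$, one has the factorization $Y_\alpha = Y_\beta\,\sigma^{c}$ with $c = (1-\alpha)^2/(2\alpha)\ge 0$. Schatten--H\"older with $r=2\alpha$, $p=2\beta$, and $q=1/c$ (the identity $1/(2\alpha)-1/(2\beta) = c$ is again a consequence of $\beta=1/(2-\alpha)$) gives
\begin{align}
\|Y_\alpha\|_{2\alpha} \le \|Y_\beta\|_{2\beta}\,\|\sigma^c\|_{1/c},
\end{align}
and the normalization $\|\sigma^c\|_{1/c} = (\tr\sigma)^{c} = 1$ yields $\|Y_\alpha\|_{2\alpha} \le \|Y_\beta\|_{2\beta}$. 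Since the prefactor $\frac{2}{\alpha-1} = \frac{2\beta}{\beta-1}$ is common to both sides and is negative on $\alpha\in(0,1)$, this directly converts into the target inequality.

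Chaining the two estimates recovers the ``pretty good'' bound $\tfrac{1}{\alpha}\widetilde{D}_\alpha \ge D_\alpha$ of \cite{Iten2016Dec}, which the appendix wishes to emphasize as a corollary. The main obstacle will be the careful bookkeeping of exponents and inequality directions across the two regimes $\alpha<1$ and $\alpha>1$; a minor technical subtlety is that for small $\alpha$ some Schatten indices fall below $1$, but the multiplicative H\"older inequality still holds as a statement about Schatten quasi-norms.
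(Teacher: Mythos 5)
Your proof is correct, and for the second inequality \eqref{eq:rev-petz-sand} it is essentially identical to the paper's: your factorizations $Y^* = W\rho^{(1-\alpha)/2}$ (for $\alpha\le 1$) and $W = Y^*\rho^{(\alpha-1)/2}$ (for $\alpha\ge 1$) and the H\"older indices $(p,q,r)=(2, \tfrac{2}{1-\alpha}, \tfrac{2}{2-\alpha})$ resp.\ $(\tfrac{2}{2-\alpha}, \tfrac{2}{\alpha-1}, 2)$ match the paper's choices of $A,B$ exactly, only with the two factors named differently. Where you genuinely depart from the paper is the first inequality $\tfrac1\alpha\widetilde{D}_\alpha \ge \widetilde{D}_{1/(2-\alpha)}$. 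The paper proves it by invoking the asymptotic pinching characterization $\widetilde{D}_\alpha(\rho\|\sigma) = \lim_{n}\tfrac1n D_\alpha(\cP_{\sigma^{\otimes n}}(\rho^{\otimes n})\|\sigma^{\otimes n})$ to reduce to the commuting case, and then applying scalar H\"older there. You instead give a direct one-shot Schatten argument: the factorization $Y_\alpha = Y_\beta\,\sigma^{(1-\alpha)^2/(2\alpha)}$ with $\beta = 1/(2-\alpha)$, H\"older indices $(p,q,r)=(2\beta, \tfrac{2\alpha}{(1-\alpha)^2}, 2\alpha)$ satisfying $\tfrac1{2\alpha} - \tfrac1{2\beta} = \tfrac{(1-\alpha)^2}{2\alpha}$, and the normalization $\tr\sigma=1$ killing the extra factor. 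This avoids the pinching limit entirely, making the lemma entirely elementary and self-contained (and also tracks more transparently which of $\rho$ or $\sigma$ needs to be normalized in each half). The one caveat you correctly flag is that $r = 2\alpha$ may drop below $1$ for $\alpha<\tfrac12$; this is not an obstruction because the Schatten H\"older inequality $\|AB\|_r \le \|A\|_p\|B\|_q$ with $1/p+1/q=1/r$ holds for all $p,q,r>0$ (it follows from Horn's log-majorization of singular values of products together with scalar H\"older applied with exponents $p/r, q/r \ge 1$), but it would be worth stating this justification explicitly if you were to write this up, since the statement in Bhatia is usually quoted only for $p,q,r\ge1$.
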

In particular, for $\alpha\in (0,1)$ and $\rho, \sigma \in \cS(\cH)$ we have that 
\begin{align}
    \frac{1}{\alpha}\widetilde{D}_{\alpha}( \rho\| \sigma)  &\ge  \widetilde{D}_{\frac{1}{2-\alpha}}( \rho\| \sigma)\ge D_{\alpha}( \rho\| \sigma)
\end{align}
recovering the reverse inequality of \cite{Iten2016Dec}. 

The inequality \eqref{eq:rev-petz-sand} was proven for $\alpha \in (1,2)$ in \cite[Corollary 3.6]{Jencova2018Aug} and then extended to $\alpha \in [0,2)$ in \cite[Proposition 11]{Wilde2018Aug}. In the following we provide a self contained proof. 
\begin{proof}
To prove $\frac{1}{\alpha}\widetilde{D}_{\alpha}( \rho\| \sigma)\ge \widetilde{D}_{\frac{1}{2-\alpha}}( \rho\| \sigma)$ we can use \cite[Proposition 4.12]{Tomamichel2016}
\begin{align}
\widetilde{D}_{\alpha}( \rho\| \sigma) = \lim_{n\rightarrow \infty} \frac{1}{n} {D}_{\alpha}( \mathcal{P}_{\sigma^{\otimes n}}(\rho^{\otimes n}) \| \sigma^{\otimes n})
\end{align}
and prove that the inequality  $\frac{1}{\alpha}\widetilde{D}_{\alpha}( \rho\| \sigma)\ge \widetilde{D}_{\frac{1}{2-\alpha}}( \rho\| \sigma)$ holds for commuting $[\rho, \sigma]=0$.  In the case where $[\rho, \sigma]=0$, we have 
\begin{align}
\widetilde{D}_{\frac{1}{2-\alpha}}( \rho\| \sigma) &= -\frac{2-\alpha}{1-\alpha}\log \tr{\rho^{\frac{1}{2-\alpha}} \sigma^{\frac{1-\alpha}{2-\alpha}}}
\\ \frac{1}{\alpha}\widetilde{D}_{\alpha}( \rho\| \sigma)&= \frac{1}{\alpha(\alpha-1)}\log \tr{\rho^{\alpha}\sigma^{1-\alpha}}
\end{align}
so in case of $\alpha\in (0,1)$,  $\frac{1}{\alpha}\widetilde{D}_{\alpha}( \rho\| \sigma)\ge \widetilde{D}_{\frac{1}{2-\alpha}}( \rho\| \sigma)$ is equivalent to 
\begin{align}
\tr{\rho^{\alpha}\sigma^{1-\alpha}}=\tr{\left( \rho^{\frac{1}{2-\alpha}} \sigma^{\frac{1-\alpha}{2-\alpha}}\right)^{\alpha(2-\alpha)} \sigma^{(1-\alpha)^2}} \le \left(\mathrm{Tr} \rho^{\frac{1}{2-\alpha}} \sigma^{\frac{1-\alpha}{2-\alpha}}\right)^{\alpha(2-\alpha)} \tr{\sigma}^{(1-\alpha)^2}
\end{align}
which is true by H\"older's inequality. 

We move to prove $\widetilde{D}_{\frac{1}{2-\alpha}}( \rho\| \sigma) \ge D_{\alpha}( \rho\| \sigma)$. We make cases depending on whether $\alpha\in(0,1)$ or $\alpha \in(1,2)$. Note that for $\alpha=1$ we have $ \widetilde{D}_{1}( \rho\| \sigma) =D( \rho\| \sigma)= D_{1}( \rho\| \sigma)$.
\paragraph{Case $\alpha\in(0,1)$.}
From the definitions of the sandwiched and Petz R\'enyi divergences, we have that 
\begin{align}
D_{\alpha}( \rho\| \sigma) &= -\frac{1}{1-\alpha}\log \tr{\rho^{\alpha}\sigma^{1-\alpha}},
\\ \widetilde{D}_{\frac{1}{2-\alpha}}( \rho\| \sigma) &= -\frac{2-\alpha}{1-\alpha}\log \mathrm{Tr} \left(\sigma^{\frac{1-\alpha}{2}}\rho \sigma^{\frac{1-\alpha}{2}}\right)^{\frac{1}{2-\alpha}}. 
\end{align}
Let $A= \sigma^{\frac{1-\alpha}{2}} \rho^{\frac{\alpha}{2}}$ and $B= \rho^{\frac{1-\alpha}{2}}$. Let $p=2$, $q=\frac{2}{1-\alpha}$ and $r= \frac{2}{2-\alpha}$. We have $\frac{1}{p}+\frac{1}{q}=\frac{1}{r}$ so by the generalized H\"older's inequality \cite{bhatia2013matrix} we have 
\begin{align}
\|AB\|_{r}\le \|A\|_p \|B\|_q. 
\end{align}
Observe that 
\begin{align}
\|AB\|_{r} &= \left(\tr{(AB^2A^\dagger)^{\frac{r}{2}}}\right)^{\frac{1}{r}} 
\\&= \left(\tr{\big( \sigma^{\frac{1-\alpha}{2}} \rho^{\frac{\alpha}{2}}(\rho^{\frac{1-\alpha}{2}})^2  \rho^{\frac{\alpha}{2}}\sigma^{\frac{1-\alpha}{2}}\big)^{\frac{1}{2-\alpha}}}\right)^{\frac{2-\alpha}{2}}
=\left(\tr{( \sigma^{\frac{1-\alpha}{2}} \rho\sigma^{\frac{1-\alpha}{2}})^{\frac{1}{2-\alpha}}}\right)^{\frac{2-\alpha}{2}}.
\end{align}
Moreover, we have 
\begin{align}
\|A\|_p &= \left(\tr{(AA^{\dagger})^{\frac{p}{2}}}\right)^{\frac{1}{p}} = \left(\tr{AA^{\dagger}}\right)^{\frac{1}{2}} =\left( \tr{\sigma^{\frac{1-\alpha}{2}} \rho^{\frac{\alpha}{2}} \rho^{\frac{\alpha}{2}}\sigma^{\frac{1-\alpha}{2}}}\right)^{\frac{1}{2}} = \left(\tr{\rho^{\alpha}\sigma^{1-\alpha}}\right)^{\frac{1}{2}},
\\ \|B\|_q &= \left(\tr{(BB^{\dagger})^{\frac{q}{2}}}\right)^{\frac{1}{q}} =\left(\tr{(\rho^{\frac{1-\alpha}{2}}\rho^{\frac{1-\alpha}{2}})^{\frac{1}{1-\alpha}}}\right)^{\frac{1-\alpha}{2}}= \left(\tr{\rho}\right)^{\frac{1-\alpha}{2}}=1.
\end{align}
Hence, we have
\begin{align}
\left(\tr{\rho^{\alpha}\sigma^{1-\alpha}}\right)^{\frac{1}{2}}&= \|A\|_p \|B\|_q \ge   \|AB\|_{r} = \left(\tr{( \sigma^{\frac{1-\alpha}{2}} \rho\sigma^{\frac{1-\alpha}{2}})^{\frac{1}{2-\alpha}}}\right)^{\frac{2-\alpha}{2}}.
\end{align}
Therefore by taking the logarithm in both sides then multiplying by $-\frac{2}{1-\alpha}$ we obtain the desired inequality
\begin{align}
{D}_{\alpha}( \rho\| \sigma) = -\frac{1}{1-\alpha}\log\tr{\rho^{\alpha}\sigma^{1-\alpha}}\le -\frac{2-\alpha}{1-\alpha}\log \tr{( \sigma^{\frac{1-\alpha}{2}} \rho\sigma^{\frac{1-\alpha}{2}})^{\frac{1}{2-\alpha}}}= \widetilde{D}_{\frac{1}{2-\alpha}}( \rho\| \sigma). 
\end{align}
\paragraph{Case $\alpha\in(1,2)$.}
From the definitions of the sandwiched and Petz R\'enyi divergences, we have that 
\begin{align}
D_{\alpha}( \rho\| \sigma) &= \frac{1}{\alpha-1}\log \tr{\rho^{\alpha}\sigma^{1-\alpha}},
\\ \widetilde{D}_{\frac{1}{2-\alpha}}( \rho\| \sigma) &= \frac{2-\alpha}{\alpha-1}\log \mathrm{Tr} \left(\sigma^{\frac{1-\alpha}{2}}\rho \sigma^{\frac{1-\alpha}{2}}\right)^{\frac{1}{2-\alpha}}. 
\end{align}
Let $A= \sigma^{\frac{1-\alpha}{2}}\rho^{\frac{1}{2}}$ and $B = \rho^{\frac{\alpha-1}{2}}$. Let $p = \frac{2}{2-\alpha}$, $q=\frac{2}{\alpha-1}$ and $r=2$ we have that $\frac{1}{p}+ \frac{1}{q}=\frac{1}{r}$ so 
by the generalized H\"older's inequality \cite{bhatia2013matrix} we have 
\begin{align}
\|AB\|_{r}\le \|A\|_p \|B\|_q. 
\end{align}
Observe that 
\begin{align}
\|AB\|_{r} &= \left(\tr{(ABB^{\dagger}A^\dagger)^{\frac{r}{2}}}\right)^{\frac{1}{r}} 
\\&= \left(\tr{\big( \sigma^{\frac{1-\alpha}{2}}\rho^{\frac{1}{2}}(\rho^{\frac{\alpha-1}{2}})^{2}  \rho^{\frac{1}{2}}\sigma^{\frac{1-\alpha}{2}}\big)^{\frac{2}{2}}}\right)^{\frac{1}{2}}
=\tr{\rho^{\alpha}\sigma^{1-\alpha}}^{\frac{1}{2}}.
\end{align}
Moreover, we have 
\begin{align}
\|A\|_p &= \left(\tr{(AA^{\dagger})^{\frac{p}{2}}}\right)^{\frac{1}{p}} = \left(\tr{ \big(\sigma^{\frac{1-\alpha}{2}}\rho^{\frac{1}{2}}\rho^{\frac{1}{2}} \sigma^{\frac{1-\alpha}{2}} \big)^{\frac{1}{2-\alpha}} }\right)^{\frac{2-\alpha}{2}} =\left(\tr{ \big(\sigma^{\frac{1-\alpha}{2}}\rho \sigma^{\frac{1-\alpha}{2}} \big)^{\frac{1}{2-\alpha}} }\right)^{\frac{2-\alpha}{2}},
\\ \|B\|_q &= \left(\tr{(BB^{\dagger})^{\frac{q}{2}}}\right)^{\frac{1}{q}} =\left(\tr{(\rho^{\frac{\alpha-1}{2}}\rho^{\frac{\alpha-1}{2}})^{\frac{1}{\alpha-1}}}\right)^{\frac{\alpha-1}{2}}= \left(\tr{\rho}\right)^{\frac{\alpha-1}{2}}=1.
\end{align}
Hence, we have
\begin{align}
\left(\tr{ \big(\sigma^{\frac{1-\alpha}{2}}\rho \sigma^{\frac{1-\alpha}{2}} \big)^{\frac{1}{2-\alpha}} }\right)^{\frac{2-\alpha}{2}}&= \|A\|_p \|B\|_q \ge   \|AB\|_{r} =\tr{\rho^{\alpha}\sigma^{1-\alpha}}^{\frac{1}{2}}.
\end{align}
Therefore by taking the logarithm in both sides then multiplying by $\frac{2}{\alpha-1}$ we obtain the desired inequality
\begin{align}
\widetilde{D}_{\frac{1}{2-\alpha}}( \rho\| \sigma) &= \frac{2-\alpha}{\alpha-1}\log \mathrm{Tr} \left(\sigma^{\frac{1-\alpha}{2}}\rho \sigma^{\frac{1-\alpha}{2}}\right)^{\frac{1}{2-\alpha}}\ge  \frac{1}{\alpha-1}\log \tr{\rho^{\alpha}\sigma^{1-\alpha}}=D_{\alpha}( \rho\| \sigma) . 
\end{align}
\end{proof}


\section{Non-signaling quantum error exponents}
\label{app:fully-quantum}

In this section, we prove Propositions \ref{prop-dual-quantum} \& \ref{prop-exp-ach-quantum}.

\begin{prop}[Restatement of Proposition \ref{prop-dual-quantum}]
Let $\cW_{A\rightarrow B}$ be a quantum channel with Choi matrix $(J_{\cW})_{RB}$. The one-shot meta-converse and activated non-signaling error probabilities satisfy
\begin{align}
\eps_{}^{\rm{NS}}(2M,\cW\otimes \cI_2) = \eps_{}^{\rm{MC}}(M,\cW) &= \inf_{\rho_R \in \cS(R)} \sup_{Z_B\mge 0} \tr{\sqrt{\rho_R} (J_{\cW})_{RB} \sqrt{\rho_R}\wedge M\rho_R\otimes Z_B}-\tr{Z_B}. 
\end{align}
\end{prop}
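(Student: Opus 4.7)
My strategy mirrors the proof of Proposition~\ref{prop-dual}, replacing the classical alphabet $\cX$ and probability distribution $p$ by a reference system $R$ and a density operator $\rho_R\in \cS(R)$, and correspondingly replacing $p\circ \cW$ by $\tau_{RB}:=\sqrt{\rho_R}(J_\cW)_{RB}\sqrt{\rho_R}$. Two ingredients are required: (i) the primal SDP characterisation of $\eps^{\rm{MC}}(M,\cW)$ for fully quantum channels, and (ii) the extension of the activation identity~\eqref{NS-MC-inequ} to fully quantum channels, which follows from the same super-channel simulation argument as in~\cite{Wang2019Feb}, since that construction only uses the SDP structure of the NS program and is insensitive to the classical-quantum nature of $\cW$.

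First, I would write the meta-converse SDP in its standard quantum hypothesis-testing form
\begin{align*}
1-\eps^{\rm{MC}}(M,\cW)=\sup_{\rho_R\in \cS(R)}\ \sup_{0\mle \Lambda_{RB}\mle \mathds{1}_{RB}}\Big\{\tr{\Lambda_{RB}\tau_{RB}}\ \Big|\ \ptr{R}{\Lambda_{RB}(\rho_R\otimes \mathds{1}_B)}\mle \tfrac{1}{M}\mathds{1}_B\Big\},
\end{align*}
which is the standard formulation testing $\tau_{RB}$ against the worst-case product $\rho_R\otimes \sigma_B$. As a sanity check, for a CQ channel one recovers~\eqref{NS-program-le} under the correspondence $\Lambda_{RB}=\sum_x \proj{x}_R\otimes \Lambda_{x,B}$ and the rescaling $\Lambda_x=Mp(x)\Lambda_{x,B}$.

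Next I would dualise the inner maximisation with multipliers $Y_{RB}\mge 0$ (for $\Lambda_{RB}\mle \mathds{1}_{RB}$) and $Z_B\mge 0$ (for the marginal constraint). Collecting $\Lambda$-dependent terms yields
\begin{align*}
\cL(\Lambda,Y,Z)=\tr{Y_{RB}}+\tfrac{1}{M}\tr{Z_B}+\tr{\Lambda_{RB}\bigl(\tau_{RB}-Y_{RB}-\rho_R\otimes Z_B\bigr)},
\end{align*}
so boundedness in $\Lambda_{RB}\mge 0$ forces $Y_{RB}\mge \tau_{RB}-\rho_R\otimes Z_B$; combined with $Y_{RB}\mge 0$, the optimum is $Y_{RB}=(\tau_{RB}-\rho_R\otimes Z_B)_+$. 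Slater's condition is met by the strictly feasible $\Lambda_{RB}=\epsilon\mathds{1}_{RB}$ for $\epsilon$ sufficiently small, so strong duality applies and, after rescaling $Z_B\to MZ_B$, I would obtain
\begin{align*}
1-\eps^{\rm{MC}}(M,\cW)=\sup_{\rho_R}\inf_{Z_B\mge 0}\Big\{\tr{(\tau_{RB}-M\rho_R\otimes Z_B)_+}+\tr{Z_B}\Big\}.
\end{align*}

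Finally, invoking the identity $\tr{A\wedge B}=\tr{A}-\tr{(A-B)_+}$ for $A,B\mge 0$ together with the normalisation $\tr{\tau_{RB}}=\tr{\rho_R\,\ptr{B}{(J_\cW)_{RB}}}=\tr{\rho_R}=1$, the outer sup over $\rho_R$ flips to an inf and delivers the claimed formula. The main obstacle\,---\,beyond routine bookkeeping\,---\,is confirming that the activation identity from~\cite{Wang2019Feb} genuinely extends to the fully quantum setting, so as to conclude $\eps^{\rm{NS}}(2M,\cW\otimes \cI_2)=\eps^{\rm{MC}}(M,\cW)$; everything else reduces to Slater strong duality and the positive-part decomposition already employed in the CQ derivation.
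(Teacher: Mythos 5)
Your proof is correct and follows essentially the same route as the paper: write the meta-converse as an SDP, dualise via a Lagrangian in the maximisation variable $\Lambda_{RB}$, reduce the dual to a positive-part expression, and convert via $\tr{A\wedge B}=\tr{A}-\tr{(A-B)_+}$ together with $\tr{\tau_{RB}}=1$. The only cosmetic difference is where the conjugation by $\sqrt{\rho_R}$ enters: you absorb it into the primal objective ($\tau_{RB}$ with the constraint $\ptr{R}{\Lambda_{RB}(\rho_R\otimes\mathds{1}_B)}\mle\tfrac{1}{M}\mathds{1}_B$), whereas the paper starts from the Wang--Fang--Duan SDP with $(J_\cW)_{RB}$ in the objective and the constraint $0\mle\Lambda_{RB}\mle M\rho_R\otimes\mathds{1}_B$, and instead performs the change of variables $Y_{RB}=\sqrt{\rho_R}X_{RB}\sqrt{\rho_R}$ in the dual. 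The two primals are related by the bijection $\Lambda_{RB}\leftrightarrow M\sqrt{\rho_R}\,\Lambda_{RB}\sqrt{\rho_R}$ (on the support of $\rho_R\otimes\mathds{1}_B$), so the intermediate expression
$1-\eps^{\rm MC}(M,\cW)=\sup_{\rho_R}\inf_{Z_B\mge0}\tr{(\tau_{RB}-M\rho_R\otimes Z_B)_+}+\tr{Z_B}$
comes out identically in both arguments.

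One small clarification: you flag as "the main obstacle" the need to confirm that the activation identity $\eps^{\rm NS}(2M,\cW\otimes\cI_2)=\eps^{\rm MC}(M,\cW)$ extends to fully quantum channels. This is not actually a gap to fill: \cite[Lemma~3]{Wang2019Feb} is stated and proved for general quantum channels, which is exactly how the paper invokes it in the final step of the proof. So your heuristic remark that the construction is "insensitive to the classical-quantum nature of $\cW$" is correct, but no separate verification is required; one can simply cite the reference.
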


\begin{proof}
The proof is similar to \cite{matthews2012linear}. The meta-converse program can be written as \cite{Wang2019Feb}
\begin{equation}
\begin{split}
1-\eps_{}^{\rm{MC}}(M,\cW) =\sup_{\rho_R\in \cS(R)}\sup_{\Lambda_{RB}} &\quad\frac{1}{M} \tr{\Lambda_{RB} \cdot (J_{\cW})_{RB}}  
\\\text{s.t.}& \quad  \Lambda_B \mle  \mathbb{I}_B,
\\&\quad 0\mle  \Lambda_{RB} \mle M\rho_R \otimes  \mathbb{I}_B.
\end{split}
\end{equation}
Now, we consider the optimization over $\Lambda_{RB}$. The Lagrangian of the maximization program \eqref{NS-program} is
\begin{align}\label{lagr}
\cL(\Lambda_{RB}, \rho_R, \alpha ,Z_B, X_{RB}) &=  \frac{1}{M} \tr{\Lambda_{RB} \cdot (J_{\cW})_{RB}}  + \mathrm{Tr}\Big[ Z_B\Big(\dI_B- \Lambda_B\Big) \Big]+ \tr{X_{RB}(M\rho_R\otimes \dI_B-\Lambda_{RB}) }
\\&=  \mathrm{Tr}\Big[\Lambda_{RB} \Big(\frac{1}{M} (J_{\cW})_{RB} -X_{RB}-\dI_R\otimes Z_B\Big)\Big]+M\tr{\rho_RX_R}+\tr{Z_B},
\end{align}
where $Z_B\mge 0$ and $ X_{RB}\mge 0$.
Hence we should have $X_{RB}\mge \frac{1}{M}(J_{\cW})_{RB} -\dI_R\otimes Z_B $ so by the strong duality (the primal is bounded and strictly feasible) 
\begin{align}
1- \eps_{}^{\rm{MC}}(M,\cW) 
&=  \sup_{\rho_R\in \cS(R)}\inf_{\substack{Z_B\mge 0,X_{RB}\mge 0\\X_{RB}\mge \frac{1}{M}(J_{\cW})_{RB} -\dI_R\otimes Z_B }} M\tr{\rho_RX_R}+\tr{Z_B}
\\ &= \sup_{\rho_R\in \cS(R)}\inf_{\substack{Z_B\mge 0,Y_{RB}\mge 0\\Y_{RB}\mge \frac{1}{M}\sqrt{\rho}_R( J_{\cW})_{RB}\sqrt{\rho}_R -\rho_R\otimes Z_B }} M\tr{Y_R}+\tr{Z_B}  
\\ &=   \sup_{\rho_R\in \cS(R)}\inf_{Z_B\mge 0} \tr{(\sqrt{\rho}_R ( J_{\cW})_{RB}\sqrt{\rho}_R - M\rho_R\otimes Z_B)_+}+\tr{Z_B},
\end{align} 
where in the second equality  we used the change of variable  $ Y_{RB} = \sqrt{\rho}_R X_{RB} \sqrt{\rho}_R$ and in the last equality we used $\inf\{\tr{Y} : Y\mge 0, Y\mge \zeta\} = \tr{\zeta_+}$. Finally since $\tr{\sqrt{\rho}_R ( J_{\cW})_{RB}\sqrt{\rho}_R}=1$ and $\eps_{}^{\rm{NS}}(2M,\cW\otimes \cI_2) = \eps_{}^{\rm{MC}}(M,\cW)$ \cite{Wang2019Feb} we deduce the desired result
\begin{align}
\eps_{}^{\rm{NS}}(2M,\cW\otimes \cI_2) = \eps_{}^{\rm{MC}}(M,\cW) &= \inf_{\rho_R \in \cS(R)} \sup_{Z_B\mge 0} \tr{\sqrt{\rho_R} (J_{\cW})_{RB} \sqrt{\rho_R}\wedge M\rho_R\otimes Z_B}-\tr{Z_B}. 
\end{align}
\end{proof}

\begin{prop}[Restatement of Proposition \ref{prop-exp-ach-quantum}]
Let $\cW_{A\rightarrow B}$ be a quantum channel with Choi matrix $(J_{\cW})_{RB}$, $r\ge 0$, and $\alpha\in(0,1]$. Then, we have that
\begin{align}
E^{\rm{ANS}}(r)\ge  \sup_{\rho_R \in \cS(R)}\inf_{\sigma_B\in \cS(B)}    \frac{1-\alpha}{\alpha} (D_{\alpha}(\sqrt{\rho_R} (J_{\cW})_{RB} \sqrt{\rho_R}\| \rho_R\otimes  \sigma_B)-r). 
\end{align}
\end{prop}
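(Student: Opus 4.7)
The plan is to mirror the classical-quantum achievability proof of Proposition \ref{prop-exp-ach}, with the fully quantum dual formulation from Proposition \ref{prop-dual-quantum} taking the place of Proposition \ref{prop-dual}. The three main ingredients are Audenaert's Chernoff-type inequality \eqref{Chernoff} to bound the non-commutative minimum, Young's inequality (Lemma \ref{lem-young}) to eliminate an auxiliary scalar, and additivity of the minimized Petz R\'enyi mutual information under product inputs.

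First I would start from the dual expression in Proposition \ref{prop-dual-quantum} and parametrize $Z_B = s\sigma_B$ with $s = \tr{Z_B}\ge 0$ and $\sigma_B \in \cS(B)$. Bounding the non-commutative minimum via \eqref{Chernoff} gives, for any $\alpha \in (0,1]$, the estimate $(Ms)^{1-\alpha}\exp((\alpha-1)D_\alpha(\sqrt{\rho_R}(J_\cW)_{RB}\sqrt{\rho_R}\| \rho_R\otimes\sigma_B))$. Applying Young's inequality $\sup_{s\ge 0}s^{1-\alpha}c - s \le c^{1/\alpha}$ to eliminate $s$ then yields the one-shot bound
\begin{align*}
\eps^{\rm{NS}}(2M, \cW\otimes \cI_2) \le \inf_{\rho_R}\sup_{\sigma_B} M^{(1-\alpha)/\alpha}\exp\left(-\tfrac{1-\alpha}{\alpha} D_\alpha(\sqrt{\rho_R}(J_\cW)_{RB}\sqrt{\rho_R} \| \rho_R\otimes\sigma_B)\right).
\end{align*}

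Next I would substitute $\cW \leftarrow \cW^{\otimes n}$ and $M = e^{nr}$, and restrict the outer infimum to product inputs $\rho_R^{\otimes n}$. Since the Choi matrix factorizes as $J_{\cW^{\otimes n}} = J_\cW^{\otimes n}$, this produces an $n$-letter bound in terms of $\inf_{\sigma_{B^n}} D_\alpha\left((\sqrt{\rho_R}(J_\cW)_{RB}\sqrt{\rho_R})^{\otimes n} \| \rho_R^{\otimes n}\otimes \sigma_{B^n}\right)$. Additivity of the Petz R\'enyi mutual information under product inputs would then reduce this to $n$ times the single-letter optimum: the quantum Sibson identity (cf.~\cite{Sharma2013Feb}) combined with H\"older's inequality gives the closed form
\begin{align*}
\inf_{\sigma_B} D_\alpha(\tau_{RB} \| \rho_R\otimes \sigma_B) = \tfrac{-1}{1-\alpha}\log \left\| \tr_R[\tau_{RB}^\alpha(\rho_R^{1-\alpha}\otimes \mathbb{I}_B)] \right\|_{1/\alpha},
\end{align*}
and both the partial trace and the Schatten $1/\alpha$-norm tensorize over $\tau_{RB}^{\otimes n}$. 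Taking logarithms, dividing by $n$, and converting $\inf_{\rho_R}$ of an exponential into $\sup_{\rho_R}$ of the exponent then delivers the stated bound.

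The main obstacle I anticipate is this additivity step. The c-q proof black-boxes it via Lemma \ref{app-lem-additivity} (cited to \cite[Lemma 7]{Hayashi2016Oct}); in the fully quantum setting one has to verify both the Sibson-style closed form and the tensor factorization of $\tr_R[\tau_{RB}^\alpha(\rho_R^{1-\alpha}\otimes \mathbb{I}_B)]$ for bipartite states with prescribed marginal $\rho_R$. All other steps\,---\,Audenaert's inequality, Young's inequality, and the $\inf$-to-$\sup$ conversion\,---\,transfer unchanged from the c-q calculation.
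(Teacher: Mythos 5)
Your proposal matches the paper's proof step for step: dualize via Proposition \ref{prop-dual-quantum}, parametrize $Z_B = s\sigma_B$, apply Audenaert's bound \eqref{Chernoff}, eliminate $s$ with Young's inequality (Lemma \ref{lem-young}), pass to $\cW^{\otimes n}$ with product inputs, invoke additivity, and take logarithms. The one place you diverge is in worrying that the additivity step needs re-verification in the fully quantum setting. It does not: the paper's Lemma \ref{app-lem-additivity} (quoting \cite[Lemma 7]{Hayashi2016Oct}) is stated for arbitrary bipartite states $\rho_{AB}\in\cS(AB)$ and arbitrary $\tau_A\gg\rho_A$, so it applies verbatim to $\sqrt{\rho_R}(J_\cW)_{RB}\sqrt{\rho_R}$ with reference $\rho_R$; no classical structure on the $R$ system is assumed. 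That said, the Sibson-plus-H\"older derivation you sketch, namely $\inf_{\sigma_B}D_\alpha(\tau_{RB}\|\rho_R\otimes\sigma_B) = \tfrac{1}{\alpha-1}\log\bigl\|\tr_R[\tau_{RB}^\alpha(\rho_R^{1-\alpha}\otimes\mathbb{I}_B)]\bigr\|_{1/\alpha}$ together with multiplicativity of Schatten norms under tensor products, is a correct standalone proof of the additivity you need; it is simply more work than citing the lemma that the paper already proves in this generality.
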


\begin{proof}
Using Proposition \ref{prop-dual-quantum} we have that
\begin{align}
\eps_{}^{\rm{NS}}(M, \cW\otimes \cI_2)  &\le \eps_{}^{\rm{NS}}(2M, \cW\otimes \cI_2)= \inf_{\rho_R \in \cS(R)} \sup_{Z_B\mge 0} \tr{\sqrt{\rho_R} (J_{\cW})_{RB} \sqrt{\rho_R}\wedge M\rho_R\otimes Z_B}-\tr{Z_B}.
\end{align}
Writing $Z_B\mge 0$ as  $Z_B= s\sigma_B$, where $s=\tr{Z_B}$ and $\sigma_B\in \cS(B)$
 and using  the inequality \eqref{Chernoff} we obtain for any $\alpha \in (0,1]$
\begin{align}
\eps^{\rm{NS}}(M, \cW\otimes \cI_2) &\le    \inf_{\rho_R \in \cS(R)} \sup_{\sigma_B \in \cS(B),\; s\ge 0}  \tr{\sqrt{\rho_R} (J_{\cW})_{RB} \sqrt{\rho_R} \wedge M\rho_R\otimes s\sigma_B}-s
\\&   \le \inf_{\rho_R \in \cS(R)} \sup_{\sigma_B \in \cS(B),\; s\ge 0}  \tr{\left(\sqrt{\rho_R} (J_{\cW})_{RB} \sqrt{\rho_R}\right)^{\alpha}\left( M\rho_R\otimes s\sigma_B \right)^{1-\alpha}}-s
\\&=  \inf_{\rho_R \in \cS(R)} \sup_{\sigma_B \in \cS(B),\; s\ge 0}  s^{1-\alpha} M^{1-\alpha} \exp\left((\alpha-1)D_{\alpha}(\sqrt{\rho_R} (J_{\cW})_{RB} \sqrt{\rho_R}\| \rho_R\otimes \sigma_B  ) \right)-s.
\end{align}
Using a Young type inequality (Lemma \ref{lem-young}) in the form $\sup_{s\ge 0} s^{1-\alpha}c-s= \kappa_{\alpha  }c^{\frac{1}{\alpha}}$, where $\kappa_{\alpha} = {\alpha} (1-\alpha)^{\frac{1-\alpha}{\alpha}}\le 1$ we get
\begin{align}
\eps^{\rm{NS}}(M, \cW\otimes \cI_2) &\le   \inf_{\rho_R \in \cS(R)} \sup_{\sigma_B \in \cS(B),\; s\ge 0}  s^{1-\alpha} M^{1-\alpha} \exp\left((\alpha-1)D_{\alpha}(\sqrt{\rho_R} (J_{\cW})_{RB} \sqrt{\rho_R}\| \rho_R\otimes \sigma_B  ) \right)-s
\\&\le\inf_{\rho_R \in \cS(R)} \sup_{\sigma_B \in \cS(B)}   M^{\tfrac{1-\alpha}{\alpha}} \exp\left( \frac{\alpha-1}{\alpha} D_{\alpha}(\sqrt{\rho_R} (J_{\cW})_{RB} \sqrt{\rho_R}\| \rho_R\otimes \sigma_B  ) \right).
\end{align}
Taking $M = e^{nr}$, $n$ iid copies of the channel we deduce that
\begin{align}
\eps^{\rm{NS}}(e^{nr}, \cW^{\otimes n} \otimes \cI_2)
&\le \inf_{\rho_R^n \in \cS(R^n)} \sup_{\sigma_B^n \in \cS(B^{n})}  e^{\frac{nr(1-\alpha)}{\alpha}}\exp\left( \frac{\alpha-1}{\alpha} D_{\alpha}(\sqrt{\rho_R^n} (J_{\cW})_{RB}^{\otimes n} \sqrt{\rho_R^n}\| \rho_R^n\otimes \sigma_B^n  ) \right)
\\&\le  \inf_{\rho_R \in \cS(R)}   e^{\frac{nr(1-\alpha)}{\alpha}}\exp\left( \frac{\alpha-1}{\alpha} \inf_{\sigma_B^n \in \cS(B^{n})}D_{\alpha}((\sqrt{\rho_R} (J_{\cW})_{RB} \sqrt{\rho_R})^{\otimes n}\|\rho_R^{\otimes n}\otimes \sigma_B^n  ) \right)
\\&= \inf_{\rho_R\in \cS(R)}   e^{\frac{nr(1-\alpha)}{\alpha}}\exp\left( n\frac{\alpha-1}{\alpha} \inf_{\sigma_B \in \cS(B)}D_{\alpha}(\sqrt{\rho_R} (J_{\cW})_{RB} \sqrt{\rho_R}\|\rho_R\otimes \sigma_B )\right),
\end{align}
where we use the additivity of the R\'enyi mutual information \cite[Lemma 7]{Hayashi2016Oct} (Lemma \ref{app-lem-additivity}) in the last equality. Taking the logarithm on both sides we obtain the desired achievability bound
\begin{align}
\frac{1}{n}\log\eps^{\rm{NS}}(e^{nr}, \cW^{\otimes n}\otimes \cI_2) &\le   - \sup_{\rho_R \in \cS(R)}\inf_{\sigma_B\in \cS(B)}    \frac{1-\alpha}{\alpha} (D_{\alpha}(\sqrt{\rho_R} (J_{\cW})_{RB} \sqrt{\rho_R}\| \rho_R\otimes  \sigma_B)-r). 
\end{align}
\end{proof}

\section{Technical lemmas}\label{sec:Lemmas}

\begin{lem}[\cite{Cheng2023Nov}]\label{lem:concavity of min}
The function $(A,B)\mapsto \tr{A\wedge B}$ is  concave. 
\end{lem}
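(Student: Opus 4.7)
The plan is to invoke the variational representation \eqref{tr min A B min O}, which expresses
\begin{align*}
\tr{A \wedge B} \;=\; \inf_{0 \mle O \mle \dI} \bigl( \tr{AO} + \tr{B(\dI-O)} \bigr)
\end{align*}
for $A, B \mge 0$. For each fixed $O$ in the admissible range $0 \mle O \mle \dI$, the quantity inside the infimum is jointly linear (hence concave) in the pair $(A, B)$. Since the pointwise infimum of a family of concave functions is always concave, joint concavity of $(A, B) \mapsto \tr{A \wedge B}$ on the positive semi-definite cone follows in one line. This is precisely the form of joint concavity used in the body of the paper, e.g.\ in the proofs of Propositions~\ref{prop-exp-ach} and \ref{prop-exp-ach-w/-activ}, where the convex combination is formed by averaging over the input index~$x$.

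An alternative route, which avoids the Helstrom--Holevo variational formula and has the merit of applying to arbitrary Hermitian pairs, uses the defining identity $A \wedge B = \tfrac{1}{2}(A + B - |A-B|)$, giving
\begin{align*}
\tr{A \wedge B} \;=\; \tfrac{1}{2}\bigl( \tr{A} + \tr{B} - \|A-B\|_1 \bigr).
\end{align*}
The first two terms are linear in $(A, B)$, while $(A, B) \mapsto \|A-B\|_1$ is the composition of a norm (convex) with a linear map and therefore convex; hence $-\|A-B\|_1$ is concave and the whole expression is concave. I would favour the first approach for its brevity and because it stays within the variational/SDP picture that the paper uses throughout. There is no real obstacle here: both proofs are one-line consequences of standard representations of the non-commutative minimum, and the only thing to watch is to state the domain (positive semi-definite matrices in the first proof, Hermitian matrices in the second) on which the claimed concavity holds.
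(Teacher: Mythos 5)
Your first argument is exactly the proof given in the paper: invoke the Helstrom--Holevo variational formula \eqref{tr min A B min O} and observe that the pointwise infimum of functions linear in $(A,B)$ is concave. The alternative via $\tr{A\wedge B}=\tfrac{1}{2}(\tr{A}+\tr{B}-\|A-B\|_1)$ is a correct and slightly more general observation (valid for Hermitian $A,B$ without the positivity needed to invoke \eqref{tr min A B min O}), but for the paper's purposes the two are interchangeable.
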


We include a proof for the reader's convenience.

\begin{proof}[Proof of Lemma \ref{lem:concavity of min}]
We have that 
\begin{align}
\tr{A\wedge B} = \inf_{0\mle O \mle \dI} \tr{AO} +\tr{B(\dI-O)}
\end{align}
and hence $(A,B)\mapsto \tr{A\wedge B}$ is an infimum of linear functions, so it is concave. 
\end{proof}

\begin{lem}\label{lem-young}
We have for $c > 0$ and $\alpha \in (0,1)$ that
\begin{align}
\sup_{s\ge 0} s^{1-\alpha}c-s= \frac{\alpha}{1-\alpha} (1-\alpha)^{\frac{1}{\alpha}}c^{\frac{1}{\alpha}}.
\end{align}
\end{lem}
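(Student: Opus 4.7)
The statement is a straightforward one-variable calculus identity, so my plan is to treat it as a routine optimization of a concave function on $[0,\infty)$. Define $f(s) = c\, s^{1-\alpha} - s$ on $[0,\infty)$. Since $\alpha\in(0,1)$, the function is continuous with $f(0)=0$ and $f(s)\to-\infty$ as $s\to\infty$, and its second derivative $f''(s) = -c\alpha(1-\alpha)s^{-\alpha-1}$ is strictly negative on $(0,\infty)$. Hence $f$ is strictly concave and the supremum is attained at a unique interior critical point.

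Next I would solve the first-order condition $f'(s) = c(1-\alpha) s^{-\alpha} - 1 = 0$. This yields the critical point
\begin{align}
s^{*} = \bigl(c(1-\alpha)\bigr)^{1/\alpha}.
\end{align}
Substituting back gives
\begin{align}
f(s^{*}) &= c \cdot \bigl(c(1-\alpha)\bigr)^{(1-\alpha)/\alpha} - \bigl(c(1-\alpha)\bigr)^{1/\alpha} \\
&= c^{1/\alpha}(1-\alpha)^{(1-\alpha)/\alpha}\bigl[1-(1-\alpha)\bigr] \\
&= \alpha\, (1-\alpha)^{(1-\alpha)/\alpha}\, c^{1/\alpha}.
\end{align}

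Finally I would rewrite the exponent $(1-\alpha)/\alpha = 1/\alpha - 1$ to pull out a factor of $(1-\alpha)^{-1}$, giving the claimed form $\frac{\alpha}{1-\alpha}(1-\alpha)^{1/\alpha} c^{1/\alpha}$. There is no real obstacle here; the only thing to be careful about is checking concavity (to justify that the critical point is actually the maximiser and that the supremum is finite), and then keeping the exponent arithmetic straight in the final algebraic rearrangement. I would also note in passing that this is essentially the conjugate of the power function $s\mapsto s^{1-\alpha}$, so the identity is a special case of the Fenchel-Young inequality, which explains the ``Young type'' label used when the lemma is invoked earlier.
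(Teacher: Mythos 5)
Your proof is correct and the arithmetic checks out: the critical point $s^{*}=(c(1-\alpha))^{1/\alpha}$ matches the equality case $s=(1-\alpha)^{1/\alpha}c^{1/\alpha}$ that the paper identifies, and the substitution collapses to $\alpha(1-\alpha)^{(1-\alpha)/\alpha}c^{1/\alpha}$, which equals $\tfrac{\alpha}{1-\alpha}(1-\alpha)^{1/\alpha}c^{1/\alpha}$ as claimed. However, your route is genuinely different from the paper's. The paper does not do calculus at all: it quotes the scalar Young inequality $\tfrac{x^p}{p}+\tfrac{y^q}{q}\ge xy$ for conjugate exponents $\tfrac{1}{p}+\tfrac{1}{q}=1$, plugs in $p=\tfrac{1}{\alpha}$, $q=\tfrac{1}{1-\alpha}$, $x=(1-\alpha)c$, $y=s^{1-\alpha}$, and reads off the bound together with its equality case. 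That approach is a one-liner once the substitution is guessed, and it immediately justifies the ``Young type inequality'' label without the need to verify concavity or worry about the boundary behaviour at $s\to 0^+$ and $s\to\infty$. Your calculus argument is more self-contained (no external inequality needed) but requires the extra concavity step to certify that the stationary point is indeed the global maximiser; it also produces exactly the substitution the paper starts from, which is perhaps why the two proofs feel interchangeable. Your closing remark that the identity is the Legendre--Fenchel transform of $s\mapsto s^{1-\alpha}$ captures precisely the content of the paper's invocation of Young's inequality, so you have essentially rediscovered the paper's viewpoint as a side comment to a different proof.
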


\begin{proof}
This result follows from Young inequality, which we recall.  For all $x,y\ge 0$ and $p,q > 1$ satisfying $\frac{1}{p}+\frac{1}{q}=1$, we have that
\begin{equation}
\frac{x^p}{p} + \frac{y^q}{q} \ge xy.
\end{equation}
We apply this inequality with $ p =\frac{1}{\alpha}$, $q= \frac{1}{1-\alpha}$, $x=(1-\alpha) c $ and $y= s^{1-\alpha} $ to obtain 
\begin{equation}
\alpha(1-\alpha)^{\frac{1}{\alpha}}c^{\frac{1}{\alpha}}+(1-\alpha)s \ge (1-\alpha)cs^{1-\alpha}
\end{equation}
with equality for $s = (1-\alpha)^{\frac{1}{\alpha}}c^{\frac{1}{\alpha}}$.
\end{proof}

\begin{lem}\label{lem-perm-inv}
Let $X$ be a compact convex set. Let $G$ be a finite group acting on $X$. Let $f$ be a convex real valued function on $X$. If $f$ is invariant under the action of $G$, i.e.,
\begin{align}
\forall x\in X, \forall g\in G : f(g\cdot x)=f(x),
\end{align}
then $f$ achieves its infimum at a point  invariant under the action of $G$. 
\end{lem}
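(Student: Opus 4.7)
The plan is to proceed by averaging a minimizer over the finite group action. Because $X$ is compact and $f$ is convex (hence continuous on the relative interior, and lower semicontinuous up to standard conventions in this optimization context), the infimum is attained at some point $x^\star \in X$. The central construction is the $G$-average
\begin{align}
\bar{x} := \frac{1}{|G|}\sum_{g\in G} g\cdot x^\star.
\end{align}
Here I am using implicitly that the $G$-action is affine with respect to the convex structure of $X$ (which is the natural setting, e.g., for a symmetric group permuting tensor factors or coordinates, as used in the application in Lemma~\ref{lem:constant-composition}). Under this assumption, $\bar{x}\in X$ by convexity and compactness.

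The two things left to check are that $\bar{x}$ is a minimizer and that it is $G$-invariant. For the first, I combine convexity of $f$ with the $G$-invariance hypothesis:
\begin{align}
f(\bar{x}) \;\le\; \frac{1}{|G|}\sum_{g\in G} f(g\cdot x^\star) \;=\; \frac{1}{|G|}\sum_{g\in G} f(x^\star) \;=\; f(x^\star),
\end{align}
so $\bar{x}$ attains the infimum as well. For the second, I use the group structure: for any $h\in G$,
\begin{align}
h\cdot \bar{x} \;=\; \frac{1}{|G|}\sum_{g\in G} (hg)\cdot x^\star \;=\; \frac{1}{|G|}\sum_{g'\in G} g'\cdot x^\star \;=\; \bar{x},
\end{align}
after the change of variable $g'=hg$, which is a bijection of $G$. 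Hence $\bar{x}$ is a $G$-invariant minimizer.

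The only mild subtlety, and the step I would pay closest attention to, is ensuring existence of a minimizer $x^\star$ to start the argument: for a general convex function on a compact convex set, continuity can fail on the boundary, so strictly one invokes lower semicontinuity (or simply takes a minimizing sequence, applies the averaging to obtain a $G$-invariant minimizing sequence in the compact set $X^G$ of fixed points, and extracts a limit point there). In the uses of this lemma in the main text the function $f$ is a supremum of affine functionals and is hence lower semicontinuous, so no complication arises and the averaging argument above goes through verbatim.
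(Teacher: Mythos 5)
Your proof is correct and follows the same averaging argument as the paper: take a minimizer $x^\star$, form the group average $\bar{x}=\tfrac{1}{|G|}\sum_{g\in G}g\cdot x^\star$, and combine convexity with $G$-invariance of $f$ to conclude $\bar{x}$ is also a minimizer. You are somewhat more careful than the paper in making the affineness of the $G$-action and the existence of $x^\star$ (via lower semicontinuity or a minimizing-sequence argument) explicit, and in actually checking the $G$-invariance of $\bar{x}$ via the reindexing $g'=hg$, but the route is identical.
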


\begin{proof}
Let $x^\star$ such that $\inf_x f(x)= f(x^\star)$. By the convexity of $f$ we have that:
\begin{align}
f(x^\star)\le f\left(\frac{1}{|G|}\sum_{g\in G}g\cdot x^\star\right)\le \frac{1}{|G|}\sum_{g\in G}f\left(g\cdot x^\star\right)=f(x^\star).
\end{align}
Hence, we have
\begin{align}
\inf_x f(x)= f(x^\star)= f\left(\frac{1}{|G|}\sum_{g\in G}g\cdot x^\star\right)
\end{align}
and $\frac{1}{|G|}\sum_{g\in G}g\cdot x^\star$ is invariant under the action of $G$.  
\end{proof}

\begin{lem}[\cite{Hayashi2016Oct}, Lemma 7]\label{app-lem-additivity}
Define the mutual information of order $\alpha$ for $\rho_{AB} \in \cS(AB)$ and $\tau_A\mge 0$ such that $\tau_A\gg \rho_A$ as
\begin{align}
I_\alpha(\rho_{AB}\| \tau_A ) &:= \inf_{\sigma_B \in \cS(B)} D_\alpha(\rho_{AB}\| \tau_A \otimes \sigma_B ).
\end{align}
Let $\rho_{AB} \in \cS(AB)$, $\omega_{A'B'} \in \cS(A'B')$, and $\tau_A, \sigma_A\mge 0$ with $\tau_A\gg \rho_A$, $\sigma_{A'}\gg \omega_{A'}$. Then, we have that
\begin{align}
I_\alpha(\rho_{AB}\otimes \omega_{A'B'} \| \tau_A \otimes \sigma_{A'} )&=I_\alpha(\rho_{AB}\| \tau_A )+ I_\alpha(\omega_{A'B'}\| \sigma_{A'} ), \quad \forall \alpha\ge 0.
\end{align}
\end{lem}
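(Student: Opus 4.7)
The plan is to prove the two inequalities separately, with the cleanest route going through the quantum Sibson identity, which gives a closed-form expression for the infimum in the definition of $I_\alpha$.

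For the direction $I_\alpha(\rho_{AB}\otimes\omega_{A'B'}\|\tau_A\otimes\sigma_{A'})\le I_\alpha(\rho_{AB}\|\tau_A)+I_\alpha(\omega_{A'B'}\|\sigma_{A'})$, I would restrict the infimum on the left-hand side to product test states $\sigma_{BB'}=\sigma_B\otimes\sigma_{B'}$, invoke the tensor-product additivity of the Petz R\'enyi divergence
\begin{align*}
D_\alpha(\rho_{AB}\otimes\omega_{A'B'}\|\tau_A\otimes\sigma_B\otimes\sigma_{A'}\otimes\sigma_{B'})=D_\alpha(\rho_{AB}\|\tau_A\otimes\sigma_B)+D_\alpha(\omega_{A'B'}\|\sigma_{A'}\otimes\sigma_{B'}),
\end{align*}
which is immediate from multiplicativity of the trace on tensor products of operator powers, and then take the infima over $\sigma_B$ and $\sigma_{B'}$ independently.

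The reverse inequality is the content of the lemma, and for this I would use the quantum Sibson identity. For $\alpha\in(0,1)\cup(1,\infty)$, setting $M_B:=\ptr{A}{\rho_{AB}^\alpha(\tau_A^{1-\alpha}\otimes\dI_B)}$, one has the explicit formula
\begin{align*}
\exp\bigl((\alpha-1)I_\alpha(\rho_{AB}\|\tau_A)\bigr)=\tr{M_B^{1/\alpha}}^\alpha,
\end{align*}
with optimizer $\sigma_B^\star\propto M_B^{1/\alpha}$. This is derived from the variational identity $\tr{M_B^{1/\alpha}}^\alpha=\sup_{\sigma_B\in\cS(B)}\tr{M_B\sigma_B^{1-\alpha}}$ for $\alpha\in(0,1)$ (and its $\alpha>1$ counterpart), which itself follows by a Lagrangian computation or by H\"older/reverse-H\"older inequalities. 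With this formula in hand, the key observation is that for product data the operator factorizes,
\begin{align*}
\ptr{AA'}{(\rho_{AB}\otimes\omega_{A'B'})^\alpha\bigl((\tau_A\otimes\sigma_{A'})^{1-\alpha}\otimes\dI_{BB'}\bigr)}=M_B\otimes M_{B'},
\end{align*}
and hence $(M_B\otimes M_{B'})^{1/\alpha}=M_B^{1/\alpha}\otimes M_{B'}^{1/\alpha}$, whose trace factorizes. Taking logarithms and multiplying by $\alpha/(\alpha-1)$ yields precisely the claimed additivity. The boundary cases $\alpha=1$ (Umegaki relative entropy, standard additivity of quantum mutual information) and $\alpha=0$ (either by taking $\alpha\downarrow 0$ using lower semi-continuity, or directly from $D_0(\rho\|\sigma)=-\log\tr{\Pi_\rho\sigma}$ where $\Pi_\rho$ is the support projector) need to be handled separately.

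The main obstacle will be establishing the Sibson identity rigorously in the noncommutative setting: one must verify that the explicit candidate $\sigma_B^\star\propto M_B^{1/\alpha}$ is truly optimal. For $\alpha\in(0,1)$ the function $\sigma_B\mapsto\tr{M_B\sigma_B^{1-\alpha}}$ is concave on $\cS(B)$ by Lieb's concavity (as $1-\alpha\in(0,1)$), so a first-order Lagrangian analysis is conclusive; for $\alpha>1$ one applies the reverse argument with convexity of $x\mapsto x^{1-\alpha}$. Beyond this step, the proof is essentially algebraic manipulation of tensor products.
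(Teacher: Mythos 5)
The paper does not prove this lemma itself; it simply cites \cite{Hayashi2016Oct}, so there is no in-paper argument to compare against. Your proof via the quantum Sibson identity is correct and is a standard route to this additivity statement. The one-sided inequality by restricting to product test states is straightforward, and the closed-form expression $\exp((\alpha-1)I_\alpha(\rho_{AB}\|\tau_A))=\tr{M_B^{1/\alpha}}^\alpha$ with $M_B=\ptr{A}{\rho_{AB}^{\alpha}(\tau_A^{1-\alpha}\otimes\dI_B)}$ is established cleanly by H\"older's inequality (the candidate optimizer $\sigma_B^\star\propto M_B^{1/\alpha}$ then saturates H\"older). The key operator factorization $M_{BB'}=M_B\otimes M_{B'}$ for product inputs is an immediate tensor-product computation, and the trace of $(M_B\otimes M_{B'})^{1/\alpha}$ factorizes since $M_B,M_{B'}\mge 0$. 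Your treatment of the boundary cases is also sensible: $\alpha=1$ reduces to standard additivity of the Umegaki quantity (with the closed-form optimizer $\sigma_B=\rho_B$), and $\alpha=0$ reduces to $\sup_{\sigma_B}\tr{N_B\sigma_B}=\|N_B\|_\infty$ with $N_B=\ptr{A}{\Pi_{\rho_{AB}}(\tau_A\otimes\dI_B)}$, which multiplies over tensor factors. Two minor remarks: the concavity of $\sigma_B\mapsto\tr{M_B\sigma_B^{1-\alpha}}$ for $\alpha\in(0,1)$ follows from L\"owner's operator-concavity of $x\mapsto x^s$, $s\in[0,1]$, rather than Lieb's concavity theorem (though this auxiliary remark is not actually load-bearing once you have the H\"older bound); and for $\alpha>1$ the reverse H\"older step requires the usual care about supports but is standard, and in any case the paper only invokes the lemma for $\alpha\in(0,1]$.
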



\end{document}